\pgfplotsset{compat=1.16}
\newenvironment{customlegend}[1][]{%
    \begingroup
    \csname pgfplots@init@cleared@structures\endcsname
    \pgfplotsset{#1}%
}{%
    \csname pgfplots@createlegend\endcsname
    \endgroup
}%
\def\addlegendimage{\csname pgfplots@addlegendimage\endcsname}
\newcommand{\revise}[1]{#1}
\newcommand{\eat}[1]{}
\definecolor{myblue}{HTML}{0072bd}
\definecolor{my_blue}{HTML}{a2142f}
\definecolor{my_cyan}{HTML}{d95319}
\definecolor{my_teal}{HTML}{77ac30}
\definecolor{my_purple}{HTML}{4dbeee} 
\definecolor{my_violet}{HTML}{7e2f8e}
\definecolor{my_green}{HTML}{008080}
\definecolor{my_red}{HTML}{CC79A7}
\definecolor{my_yellow}{HTML}{117733}
\definecolor{my_pp}{HTML}{785ef0}
\def\header{\vspace{1mm} \noindent}
\newcommand{\algo}{$\mathtt{GEER}$\xspace}
\newcommand{\bwd}{$\mathtt{SMM}$\xspace}
\newcommand{\fwd}{$\mathtt{AMC}$\xspace}
\newcommand{\AM}{\mathbf{A}}
\newcommand{\DM}{\mathbf{D}}
\newcommand{\PM}{\mathbf{P}}
\def\smallunderbrace#1{\mathop{\vtop{\m@th\ialign{##\crcr
   $\hfil\displaystyle{#1}\hfil$\crcr
   \noalign{\kern3\p@\nointerlineskip}%
   \tiny\upbracefill\crcr\noalign{\kern3\p@}}}}\limits}
\let\oldnl\nl
\newcommand{\nonl}{\renewcommand{\nl}{\let\nl\oldnl}}
  \providecommand\BibTeX{{%
    \normalfont B\kern-0.5em{\scshape i\kern-0.25em b}\kern-0.8em\TeX}}}
\begin{document}

\title{Efficient Estimation of Pairwise Effective Resistance}


\author{Renchi Yang}
\orcid{0000-0002-7284-3096}
\authornote{Work was done when the first author was a Research Fellow at NUS, Singapore.}
\affiliation{%
	\department{Department of Computer Science}
  \institution{Hong Kong Baptist University}
  \country{Hong Kong SAR}
}
\email{renchi@hkbu.edu.hk}

\author{Jing Tang}
\orcid{0000-0002-0785-707X}
\authornote{Corresponding author: Jing Tang.}
\affiliation{%
  \institution{\hbox{The Hong Kong University of Science and Technology (Guangzhou)}}
  \country{China}
}
\affiliation{%
	\institution{The Hong Kong University of Science and Technology}
	\country{Hong Kong SAR}
}
\email{jingtang@ust.hk}

\begin{abstract}
Given an undirected graph $G$, the effective resistance $r(s,t)$ measures the dissimilarity of node pair $s,t$ in $G$, which finds numerous applications in real-world scenarios, such as recommender systems, combinatorial optimization, molecular chemistry and electric power networks. 
Existing techniques towards pairwise effective resistance estimation either trade approximation guarantees for practical efficiency, or vice versa. 
In particular, the state-of-the-art solution is based on a multitude of Monte Carlo random walks, rendering it rather inefficient in practice especially on large graphs. 

Motivated by this, this paper first presents an improved Monte Carlo approach, \fwd, which reduces both the length and amount of random walks required without degrading the theoretical accuracy guarantee, through careful theoretical analysis and an adaptive sampling scheme. 
Further, we develop a greedy approach, \algo, which combines \fwd with sparse matrix-vector multiplications in an optimized and non-trivial way. \algo offers significantly improved practical efficiency over \fwd without compromising its asymptotic performance and accuracy guarantees. Extensive experiments on multiple benchmark datasets reveal that \algo is orders of magnitude faster than the state of the art in terms of computational time when achieving the same accuracy.
\end{abstract}

\begin{CCSXML}
<ccs2012>
   <concept>
       <concept_id>10003752.10010061.10010065</concept_id>
       <concept_desc>Theory of computation~Random walks and Markov chains</concept_desc>
       <concept_significance>500</concept_significance>
       </concept>
   <concept>
       <concept_id>10002950.10003624.10003633.10010917</concept_id>
       <concept_desc>Mathematics of computing~Graph algorithms</concept_desc>
       <concept_significance>500</concept_significance>
       </concept>
 </ccs2012>
\end{CCSXML}

\ccsdesc[500]{Mathematics of computing~Graph algorithms}
\ccsdesc[500]{Theory of computation~Random walks and Markov chains}

\keywords{Effective Resistance, Random Walk, Matrix Multiplication}

\maketitle

\begin{sloppy}
\section{Introduction}
Given an undirected graph $G$ and two nodes $s,t\in G$, the {\em effective resistance} (ER)~\cite{tetali1991random} $r(s,t)$ is proportional to the expected number of steps taken by a random walk starting at $s$ visits node $t$ and then goes back to $s$ (i.e., the {\em commute time} \cite{chandra1996electrical} between $s$ and $t$). Intuitively, a small $r(s,t)$ suggests a high similarity between nodes $s$ and $t$.
\revise{Due to the clear physical meaning and unique properties, ER finds a variety of applications in practical problems.
For instance, in the study of electric circuits \cite{nilsson2008electric}, resistive electrical networks are viewed as graphs where each edge corresponds to a resistor.
The definition of ER $r(s,t)$ is basically 
the voltage developed between nodes $s$ and $t$ when a unit current is injected at one and extracted at the other \cite{tetali1991random}.
Intuitively, this physical interpretation benefits analyzing cascading failures \cite{guo2017monotonicity,soltan2015analysis} and power network stability \cite{dorfler2010spectral,song2017network,song2019extension} in electric power networks. 
Moreover, \citet{spielman2011graph} theoretically proved that sampling edges in a graph $G$ according to their ER values yields a high-quality sparsified graph with a rigorous accuracy guarantee. Such graph sparsifiers are used as building blocks to accelerate cut approximation \cite{benczur1996approximating}, max-flow algorithms \cite{christiano2011electrical,kelner2014almost}, and solving linear systems \cite{spielman2004nearly}.
In recent years, ER is further applied to many data mining problems such as graph clustering \cite{alev2018graph,qiu2007clustering,yen2005clustering}, recommender systems \cite{kunegis2007collaborative,fouss2007random}, and image segmentation \cite{qiu2005image,behmo2008graph}.
Additionally, the data management community adopts ER in developing efficient graph systems \cite{qiu2021lightne,zhao2013embeddability} and applications \cite{shi2014density,sricharan2014localizing,yin2012challenging}.}

As explained in Section \ref{sec:pd}, the exact computation of ER incurs vast computational costs for large graphs; and thus, the majority of existing work (e.g., \cite{hayashi2016efficient,aybat2017decentralized,jambulapati2018efficient,cohen2014solving,broberg2020prediction,pankdd21}) focuses on computing approximate ER. In the meanwhile, it is prohibitively expensive to materialize the ER values of all possible node pairs in a large graph due to the colossal space, i.e., $O(n^2)$, where $n$ is the number of nodes in $G$. To this end, recent work \cite{cohen2014solving,pankdd21} on ER computation answers the {\em $\epsilon$-approximate pairwise effective resistance} (denoted hereinafter as $\epsilon$-approximate PER) query for each specified node pair individually. More precisely, given an additive error $\epsilon$ and a node pair $(s,t)$ in the input graph $G$, an $\epsilon$-approximate PER query returns an approximate ER $r^{\prime}(s,t)$ with at most $\epsilon$ additive error in it. Even under the approximation definition, $\epsilon$-approximate PER computation remains tenaciously challenging to address owing to its complicated definitions by either the pseudo inverse of matrix or random walks.

\citet{andoni2018solving}~introduced an algorithm for answering $\epsilon$-approximate PER queries on regular expander graphs using a time complexity of $\textstyle O\left(\tfrac{1}{\epsilon^2} \mathrm{polylog}(\tfrac{1}{\epsilon})\right)$. \citet{spielman2011graph} described a random projection-based algorithm, which allows for the $\epsilon$-approximate PER query of any node pair in $G$ to be answered in $O(\log{n})$ time. However, their algorithm requires $\tilde{O}(m/\epsilon^2)$ time ($m$ is the number of edges in $G$) for constructing a $\textstyle ({24\log{n}}/{\epsilon^2})\times n$ matrix in the preprocessing step, which is prohibitive for large graphs. Very recently, a suite of randomized techniques without any preprocessing are proposed in \cite{pankdd21} for addressing $\epsilon$-approximate PER computation on general graphs with provable performance guarantees. 
Among them, $\mathtt{MC}$ simulates numerous random walks from source node $s$ to estimate $r(s,t)$ according to its {\em commute time}-based interpretation mentioned previously. Nevertheless, this way is highly inefficient in practice as its random walks explore the whole graph, resulting in a large time complexity of $\textstyle O\left(\frac{m\cdot d(s)}{\epsilon^2}\right)$, where $d(s)$ represents the degree of node $s$.  To circumvent this problem, Peng et al. \cite{pankdd21} subsequently propose a new $\epsilon$-approximate PER solution $\mathtt{TP}$ based on another random walk-based interpretation of ER. This interpretation enables the derivation of a maximum length for random walks, and thus $\mathtt{TP}$ only explores a small portion of $G$. In turn, $\mathtt{TP}$ runs in $\textstyle O\left( \tfrac{1}{\epsilon^2}  \log^4{(\tfrac{1}{\epsilon})}\right)$ expected time regardless of the size of $G$.
$\mathtt{TPC}$ subsequently ameliorates $\mathtt{TP}$ by leveraging the idea of stitching short random walks to longer ones, yielding a time cost of $\textstyle O\left( \tfrac{1}{\epsilon^2}  \log^3{(\tfrac{1}{\epsilon})}\right)$ on expander graphs. 
Although these algorithms achieve significant improvements upon previous solutions in terms of asymptotic performance, they still entail an exorbitant amount of time in practice, even on small graphs. This is caused by the sheer number of long random walks required by them, leaving much room for improvement regarding computational costs.

\begin{table}
\centering
\caption{Algorithms for $\epsilon$-approximate PER Queries}
\label{tab:complexity}\vspace{-3mm}
\begin{small}
\begin{tabular}{cl}
\toprule
\bf Algorithm  & \bf Time Complexity \\
\midrule
$\mathtt{TP}$ \cite{pankdd21} &  $\textstyle O\left( \tfrac{1}{\epsilon^2}  \log^4{(\frac{1}{\epsilon})}\right)$ \\
$\mathtt{TPC}$ \cite{pankdd21} &  $\textstyle O\left( \tfrac{1}{\epsilon^2}  \log^3{(\frac{1}{\epsilon})}\right)$ on  expander graphs \\
$\mathtt{MC}$ \cite{pankdd21}  &  $\textstyle O\left(\tfrac{m\cdot d(s)}{\epsilon^2}\right)$  \\
our \fwd and \algo & $\textstyle O \left(\tfrac{1}{\epsilon^2 d^2} \log^3{\left(\tfrac{1}{\epsilon d}\right)}\right)$ \\
\bottomrule
\end{tabular}
\end{small}
\vspace{-1mm}
\end{table}

\header
{\bf Contributions.}
This paper first presents \fwd, an adaptive Monte Carlo approach that overcomes the deficiencies of $\mathtt{TP}$ and $\mathtt{TPC}$, while retaining the accuracy assurance. More precisely, \fwd derives a refined maximum random walk length through a rigorous theoretical analysis. Moreover, instead of simulating a fixed number of random walks, \fwd conducts random walks in an adaptive fashion by leveraging the empirical Bernstein inequality to facilitate early termination. 
\revise{In particular, our refined maximum random walk length $\ell$ is often less than half of its counterpart in prior work \cite{pankdd21} and the well-thought-out adaptive scheme in \fwd leads to over $20\ell\times$ reduction in random walk samples, compared to the state-of-the-art Monte Carlo approach $\mathtt{TP}$.}
Unfortunately, despite the aforementioned optimizations in \fwd enable a significant reduction in computational cost,  its practical efficiency is still less than satisfactory, as revealed in our empirical studies. For this purpose, we develop our main proposal \algo for improved practical efficiency. \revise{Under the hood, the core of \algo is to integrate sparse matrix-vector multiplications into \fwd in an optimized and non-trivial way}, thereby overcoming the drawbacks of both and achieving significantly enhanced speedup compared to \fwd without degrading the theoretical guarantees.
\revise{Table \ref{tab:complexity} compares the time complexities of our proposed solutions against three approximate algorithms to answer $\epsilon$-approximate PER queries with a high probability. We can see that both \fwd and \algo achieve the best asymptotic performance, i.e., $\textstyle O \left(\tfrac{1}{\epsilon^2 d^2} \log^3{\left(\tfrac{1}{\epsilon d}\right)}\right)$, which theoretically improves the state of the art by a large factor of ${d^2}\log{\left(\tfrac{1}{\epsilon d}\right)}$, where $d$ equals $\min\{d(s),d(t)\}$ and $d(s)$ (resp.\ $d(t)$) represents the degree of node $s$ (resp.\ $t$).} In practice, on the the widely studied Friendster dataset, \algo is up to $38.2\times$ faster than \fwd and achieves more than three orders of magnitude enhancements over the state of the art in query performance. Briefly, our key contributions in this paper can be summarized as follows:
\begin{itemize}[leftmargin=*]
\item We propose \fwd, an adaptive Monte Carlo algorithm runs in  $\textstyle O \left(\tfrac{1}{\epsilon^2 d^2} \log^3{\left(\tfrac{1}{\epsilon d}\right)}\right)$ expected time and returns an $\epsilon$-approximate ER value with a high probability.
\item We further develop \algo, which offers significantly improved empirical efficiency over \fwd without \revise{compromising} the asymptotic guarantees.
\item Extensive experiments using real datasets demonstrate that on a single commodity server \algo consistently outperforms the state-of-the-art solution often by orders of magnitude speedup.
\end{itemize}

The remainder of the paper is organized as follows. In Section \ref{sec:pre}, we provide notations and formally define the problem. We present our first-cut solution \fwd with theoretical analysis in Section \ref{sec:amc}, and further propose an improved solution \algo in Section \ref{sec:main}. Our solutions and existing methods are evaluated in Section \ref{sec:exp}. Related work is reviewed in Section \ref{sec:related}. Finally, Section \ref{sec:conclude} concludes the paper. All proofs of theorems and lemmas appear in Appendix \ref{sec:proof}.

\section{Preliminaries}\label{sec:pre}
\revise{This section first defines common notations, and then provides the formal problem definition and revisits main competitors.}

\begin{table}
	\centering
	\caption{Frequently used notations}
	\label{tab:notation}\vspace{-3mm}
		\begin{small}
		\begin{tabular}{rc p{4.2in}}	
			\toprule
			\bf Notation & & \bf Description \\
			\midrule
			$G=(V,E)$ & $\triangleq$ & An undirected and unweighted graph with node set $V$ and edge set $E$. \\
			$n, m$ & $\triangleq$ & The numbers of nodes and edges in graph $G$, respectively.\\
			$\PM$ & $\triangleq$ & The transition matrix of $G$.\\
			$d(v)$ & $\triangleq$ & The degree of node $v$.\\
			$r(s,t)$ & $\triangleq$ & The effective resistance between nodes $s,t$.\\
			$r^{\prime}(s,t)$ & $\triangleq$ & The approximate version of $r(s,t)$.\\
			$r_{\ell}(s,t)$ & $\triangleq$ & The $\ell$-truncated version of $r(s,t)$ (see Eq. \eqref{eq:r-ell}).\\
			$\lambda_i$ & $\triangleq$ & The $i$-th largest eigenvalue of $\PM$.\\
			$p_i(u,v)$ & $\triangleq$ & The probability of a length-$i$ random walk starting from node $u$
			ending at node $v$. \\
			$\epsilon$, $\delta$ & $\triangleq$ & The absolute error threshold and the failure probability, respectively.\\
			$\tau$ & $\triangleq$ & The number of batches of random walks in \fwd. \\
			$\ell$ & $\triangleq$ & The maximum random walk length (see Eq. \eqref{eq:ell}).\\
			$\ell_f$ & $\triangleq$ & The maximum random walk length in \fwd.\\
			$\ell_b$ & $\triangleq$ & The number of iterations in \bwd.\\
			$\min(\vec{\mathbf{x}})$ & $\triangleq$ & The smallest value in vector $\vec{\mathbf{x}}$. \\
			$\max_k(\vec{\mathbf{x}})$ & $\triangleq$ & The $k$-th largest value in vector $\vec{\mathbf{x}}$.\\
			\bottomrule
		\end{tabular}
		\end{small}
\end{table}

\subsection{Notations}\label{sec:notation}
We denote vectors $\vec{\mathbf{x}}$ and
matrices $\mathbf{X}$ in bold lower and upper cases, repsectively, and use $\vec{\mathbf{x}}(i)$ to represent the $i$-th element of vector $\vec{\mathbf{x}}$. In addition, ${\vec{\mathbf{e}}}_i$ is used to denote a one-hot vector, which has value 1 at entry $i$ and 0 everywhere else. Row-wise (i.e., $\mathbf{X}(i)$
), column-wise (i.e., $\mathbf{X}(,j)$), and element-wise (i.e., $\mathbf{X}(i,j)$) matrix indexing represent the $i$-th row vector, $j$-th column vector, and the scalar at $i$-th row and $j$-th column of $\mathbf{X}$, respectively. We denote by $\min(\vec{\mathbf{x}})$ and $\max_k(\vec{\mathbf{x}})$ the smallest value and the $k$-th largest value in vector $\vec{\mathbf{x}}$, respectively.

We are given an undirected and unweighted graph $G=(V,E)$ with $|V|=n$ nodes and $|E|=m$ edges. For each node $v\in V$, we denote by $d(v)$ its degree, i.e., the number of neighbors connecting to $v$. Let $\AM$ and $\DM$ be the {\em adjacency matrix} and {\em diagonal degree matrix} of $G$, respectively; i.e., $\AM(u,v)=1$ iff $(u,v)\in E$ and each diagonal entry $\DM(v,v)=d(v)\ \forall{v\in V}$. The random walk matrix (a.k.a. transition matrix) of $G$ is then defined as $\mathbf{P}=\DM^{-1}\AM$; i.e., $\PM(u,v)=\frac{1}{d(v)}$ if $(u,v)\in E$. Accordingly, $p_i(u,v)={\mathbf{P}^i}(u,v)$ signifies the probability of a length-$i$ simple random walk starting from node $u$ ending at node $v$. {For simplicity, we henceforth refer to a simple random walk as a random walk, which at each visited node $v$, moves to a neighbor of $v$ with probability $\frac{1}{d(v)}$ \cite{lovasz1993random}.} Throughout this paper, we assume that $G$ is connected and non-bipartite, 
inducing that $\mathbf{P}$ is ergodic; in other words, $\lim\limits_{i\to \infty} {\vec{\mathbf{e}}}_s\mathbf{P}^i=\vec{\boldsymbol{\pi}}\ \forall{s\in V}$, where $\vec{\boldsymbol{\pi}}(v)=\frac{d(v)}{2m}\ \forall{v\in V}$ \cite{motwani1995randomized}. 
Let $\lambda_1,\lambda_2,\dotsc,\lambda_n$ be the eigenvalues of $\mathbf{P}$, sorted by algebraic value in descending order, i.e., $1 = \lambda_1 \geq \lambda_2 \geq \dotsb \geq \lambda_n \geq -1$.
Let $\mathbbm{1}_{s\neq t}$ be an indicator function, which is equal to $1$ if $s\neq t$ and otherwise $0$. Table \ref{tab:notation} lists the frequently used notations throughout this paper.

\subsection{Problem Definition}\label{sec:pd}

\begin{definition}[Effective Resistance (ER)]\label{def:er}
Given graph $G$ and two nodes $s,t\in G$, the effective resistance between $s$ and $t$ is defined as
\begin{equation}\label{eq:er}
r(s,t)=(\vec{\mathbf{e}}_s-\vec{\mathbf{e}}_t)\cdot \mathbf{L}^{\dagger}\cdot (\vec{\mathbf{e}}_s-\vec{\mathbf{e}}_t)^{\intercal},
\end{equation}
where $\mathbf{L}^{\dagger}$ denotes the Moore-Penrose pseudo-inverse of $\mathbf{D}-\mathbf{A}$.
\end{definition}

Definition \ref{def:er} presents the formal definition of effective resistance (ER). In particular, the ER $r(s,t)$ of node pair $(s,t)$ is $0$ when $s=t$, which indicates that any node $s\in G$ has zero dissimilarity to itself. In the literature \cite{nash1959random,tetali1991random}, ER is closely related to {\em commute time}. More precisely, $r(s,t)=\frac{c(s,t)}{2m}$, where $c(s,t)$ stands for the commute time between nodes $s$ and $t$, i.e., the expected length of a random walk from $s$ to visit $t$ and return back to $s$. 

The exact computation of $r(s,t)$ based on Definition \ref{def:er} is immensely expensive as it requires inverting the Laplacian matrix of $G$ with a time complexity of $O(n^{2.3727})$ \cite{cormen2009introduction,raz2002complexity}. 
Prior work \cite{pankdd21} pertaining to ER computation proposes to compute $\epsilon$-approximate ER. In particular, we say $r^{\prime}(s,t)$ an $\epsilon$-approximate ER of $r(s,t)$ if it satisfies
\begin{equation}\label{eq:epsilon}
|r(s,t)-r^{\prime}(s,t)|\le \epsilon.
\end{equation}
In this paper, we focus on addressing the $\epsilon$-approximate PER query, as defined in Definition \ref{def:problem}.
\begin{definition}[$\epsilon$-approximate PER Query]\label{def:problem}
Given an undirected graph $G$ and an arbitrarily small additive error $\epsilon$, the $\epsilon$-approximate PER query returns an $\epsilon$-approximate ER value $r^{\prime}(s,t)$ for any specified node pair $(s,t)$.
\end{definition}

\subsection{Main Competitors}\label{sec:mcpt}
In the rest of the section, we briefly overview four recent solutions proposed in \cite{pankdd21} for answering $\epsilon$-approximate PER queries and then discuss their limitations.

\subsubsection{\bf The \texttt{MC} and \texttt{MC2} Algorithms}
Using the connection between ER and commute time, $\mathtt{MC}$ \cite{pankdd21} computes approximate $r(s,t)$ by first estimating the commute time $c(s,t)$ of node pair $(s,t)$. Recall that $c(s,t)$ is defined as the expected length of a random walk starting from $s$ to visit $t$ and return back to $s$. $\mathtt{MC}$ conducts Monte Carlo simulations using $\eta$ random walks. That is, it counts the number ${\eta}_r$ of random walks traverses from $s$ to $t$ and back, and then computes an approximate ER $r^{\prime}(s,t)$ as $\tfrac{\eta}{d(s)\cdot {\eta}_r}$. Under the assumption $r(s,t)\le \gamma$, \citet{pankdd21} showed when $\eta=\frac{3\gamma d(s)\cdot\log{(1/\delta)}}{\epsilon^2}$, $r^{\prime}(s,t)$ is an $\epsilon$-approximate ER with a probability at least $1-\delta$. The running time of $\mathtt{MC}$ is $\textstyle O\left(\tfrac{md(s)\gamma^2}{\epsilon^2}\right)$. Notice that $c(s,t)<n^3$, and thus $r(s,t)<\frac{n^3}{2m}$ (Corollary 6.7 in \cite{motwani1995randomized}). Therefore, in the worst case, the time complexity of $\mathtt{MC}$ can be up to $O\left(\tfrac{n^6 d(s)}{m\epsilon^2}\right)$.

Regarding the special case where the query node pair $(s,t)$ is connected by an edge in $G$ (i.e., $(s,t)\in E$), \citet{pankdd21} presented $\mathtt{MC2}$ to support more efficient $\epsilon$-approximate PER queries. Specifically, $\mathtt{MC2}$ relies on the special definition of $r(s,t)$ when $(s,t)\in E$; that is, $r(s,t)$ is the probability that a random walk started at $s$ visits $t$ for the first time using the edge $(s,t)$. Accordingly, $\mathtt{MC2}$ estimates $r(s,t)$ by performing such Monte Carlo random walks. By \cite{pankdd21}, with the assumption $r(s,t)> \gamma$, $\mathtt{MC2}$ returns $\epsilon$-approximate ER $r^{\prime}(s,t)$ with a probability at least $1-\delta$ after $\tfrac{3\log{(1/\delta)}}{\epsilon^2\gamma}$ random walks are conducted. Using the fact that $\tfrac{1}{2m}\le r(s,t)\le 1$, $\forall{(s,t)\in E}$ (Lemma 6.5 in \cite{motwani1995randomized}), the total number of random walks required in $\mathtt{MC2}$ is bounded by $\frac{6m\log{(1/\delta)}}{\epsilon^2}$.


\subsubsection{\bf The \texttt{TP} and \texttt{TPC} Algorithms}
In Lemma 4.3 of \cite{pankdd21}, the authors proved $\mathbf{L}^{\dagger}=    \sum_{i=0}^{\infty}{\mathbf{P}^i}\mathbf{D}^{-1}\label{eq:L}$, and further converted ER $r(s,t)$ in Eq. \eqref{eq:er} into an equivalent form as follows:
\begin{equation}\label{eq:er-new}
r(s,t)=\sum_{i=0}^{\infty}\left({\frac{p_i(s,s)}{d(s)}+\frac{p_i(t,t)}{d(t)}- \frac{p_i(s,t)}{d(t)}- \frac{p_i(t,s)}{d(s)}}\right),
\end{equation}
where $p_i(s,t)$ denotes the probability of a length-$i$ random walk from $s$ visiting $t$. Notice that $r(s,t)$ in Eq. \eqref{eq:er-new} involves summing up an infinite series of random walk probabilities, which is infeasible in practice. To this end, \citet{pankdd21} propose to approximate its truncated version $r_{\ell}(s,t)$ as defined in Eq. \eqref{eq:r-ell}:
\begin{equation}\label{eq:r-ell}
 r_{\ell}(s,t)=\sum_{i=0}^{\ell}\left({\frac{p_i(s,s)}{d(s)}+\frac{p_i(t,t)}{d(t)}-\frac{  p_i(s,t)}{d(t)}}-\frac{ p_i(t,s)}{d(s)}\right),
\end{equation}
where $\ell$ satisfies
\begin{equation}\label{eq:old-ell}
\ell= \left\lceil {\ln\left(\tfrac{4}{\epsilon-\epsilon\lambda}\right)}\big/{\ln\left(\tfrac{1}{\lambda}\right)}-1\right\rceil, \text{ and } \lambda=\max\{|\lambda_2|,|\lambda_n|\}.
\end{equation}
By setting the maximum random walk length $\ell$ as in Eq. \eqref{eq:old-ell}, we can ensure $|r(s,t)-r_{\ell}(s,t)|\le \tfrac{\epsilon}{2}$.

On the basis of the new problem formulation, two randomized algorithms $\mathtt{TP}$ and $\mathtt{TPC}$ are proposed in \cite{pankdd21}, which are the state-of-the-art solutions for $\epsilon$-approximate PER computation. Both of them are to find an approximation $r^{\prime}(s,t)$ of $r_{\ell}(s,t)$, such that $|r_{\ell}(s,t)-r^{\prime}(s,t)|\le \tfrac{\epsilon}{2}$.

\begin{lemma}[Hoeffding’s inequality \cite{hoeffding1963probability}]\label{lem:hoeffding}
Let $Z_1,Z_2,\dotsc,Z_{n_z}$ be independent random variables with $Z_i$ ($\forall{1 \le i\le n_z}$) is strictly bounded by the interval $[a_j, b_j]$. We define the empirical mean of these variables by $Z=\frac{1}{n_z}\sum_{i=1}^{n_z}{Z_i}$. Then, we have
\begin{small}
\begin{equation*}
\mathbb{P}[|Z-\mathbb{E}[Z]|\ge \varepsilon]\le \revise{2}\exp{\left(-\frac{2n_z^2\varepsilon^2}{\sum_{j=1}^{n_z}(b_j-a_j)^2}\right)}.
\end{equation*}
\end{small}
\end{lemma}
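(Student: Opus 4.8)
The statement is the classical Hoeffding inequality, so the plan is the textbook Cram\'er--Chernoff exponential-moment argument combined with Hoeffding's lemma on the moment generating function of a bounded centred variable; the leading constant $2$ will come from treating the two tails separately. First I would prove the one-sided bound
\[
\mathbb{P}\big[Z - \mathbb{E}[Z] \ge \varepsilon\big] \le \exp\!\Big(-\tfrac{2 n_z^2 \varepsilon^2}{\sum_{j=1}^{n_z}(b_j - a_j)^2}\Big),
\]
after which the full claim follows by symmetry and a union bound.

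For this one-sided bound, fix $h > 0$. Writing $S = \sum_{i=1}^{n_z}(Z_i - \mathbb{E}[Z_i])$, Markov's inequality applied to the nonnegative variable $e^{hS}$ gives $\mathbb{P}[S \ge n_z\varepsilon] \le e^{-h n_z \varepsilon}\,\mathbb{E}[e^{hS}]$, and independence of the $Z_i$ factorises $\mathbb{E}[e^{hS}] = \prod_{i=1}^{n_z}\mathbb{E}[e^{h(Z_i - \mathbb{E}[Z_i])}]$. The crucial input is Hoeffding's lemma: if $X$ is a random variable with $\mathbb{E}[X]=0$ and $X\in[a,b]$ almost surely, then $\mathbb{E}[e^{hX}] \le e^{h^2(b-a)^2/8}$. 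I would prove this auxiliary fact by using convexity of $x\mapsto e^{hx}$ on $[a,b]$ to get $\mathbb{E}[e^{hX}] \le \tfrac{b}{b-a}e^{ha} - \tfrac{a}{b-a}e^{hb} = e^{\psi(h)}$, then Taylor-expanding $\psi$ around $0$ with $\psi(0)=\psi'(0)=0$ and the uniform bound $\psi''(h)\le (b-a)^2/4$.

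Applying the lemma to $X = Z_i - \mathbb{E}[Z_i]$, which lies in an interval of width $b_i - a_i$, yields $\mathbb{P}[S\ge n_z\varepsilon] \le \exp\!\big(-h n_z\varepsilon + \tfrac{h^2}{8}\sum_{j=1}^{n_z}(b_j-a_j)^2\big)$, and minimising the right-hand exponent over $h>0$ (the optimum is $h = 4 n_z\varepsilon/\sum_j(b_j-a_j)^2$) produces exactly the displayed one-sided bound once the event $S\ge n_z\varepsilon$ is rewritten as $Z - \mathbb{E}[Z]\ge\varepsilon$. Running the identical argument with $-Z_i$ (which sits in $[-b_i,-a_i]$, an interval of the same width) bounds $\mathbb{P}[Z - \mathbb{E}[Z]\le -\varepsilon]$ by the same expression, and the union bound over the two one-sided events gives the factor $2$ in the stated inequality.

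The only non-routine step is Hoeffding's lemma, and within it the estimate $\psi''(h)\le (b-a)^2/4$: one identifies $\psi''(h)$ with the variance of an $[a,b]$-valued random variable under the exponentially tilted law whose density is proportional to $e^{hx}$, and the variance of any random variable taking values in $[a,b]$ is at most $(b-a)^2/4$, the extremal case being the two-point distribution on the endpoints. The rest --- Markov's inequality, the product formula from independence, and minimising a quadratic in $h$ --- is routine bookkeeping.
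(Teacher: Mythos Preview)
Your proof is the standard and correct textbook argument for Hoeffding's inequality. However, the paper does not actually prove this lemma: it is stated with a citation to Hoeffding's original 1963 paper and used as a known tool, so there is no ``paper's own proof'' to compare against. Your write-up would serve perfectly well as a self-contained proof if one were needed.
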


Basically, $\mathtt{TP}$ is inspired by the simple and straightforward idea of Monte Carlo approach. More specifically, it simulates random walks of length ranging from $1$ to $\ell$ from nodes $s$ or $t$, records the fraction of them ending at $s$ or $t$, and finally aggregates all these values into $r^{\prime}(s,t)$ by the formula in Eq. \eqref{eq:r-ell}. 
Applying \revise{the} Chernoff-Hoeffding inequality derives that a total of $\frac{40\ell^2\ln{(8\ell/\delta)}}{\epsilon^2}$ random walks are required for each length $i\in [1,\ell]$ so as to attain the desired $\epsilon$-approximation with a success probability at least $1-\delta$. $\mathtt{TP}$ suffers from severe efficiency issues, even on small graphs, by reason of a huge number of random walk samples as well as the large random walk length $\ell$ (up to thousands when $\epsilon$ is small).

Unlike $\mathtt{TP}$, which directly estimates $p_i(s,t)$ via length-$i$ ($i\in [1,\ell]$) random walks, $\mathtt{TPC}$ achieves a better theoretical bound by regarding $p_i(s,t)$ as a collision probability of two random walks of length $i/2$, $\forall{1\le i\le \ell}$. $\mathtt{TPC}$ samples \revise{$\textstyle 40000 \times \left({\ell\sqrt{\ell \beta_i}}/{\epsilon}+{\ell^3\beta^{3/2}_i}/{\epsilon^2}\right)$} length-$i/2$ random walks from nodes $s$ and $t$, respectively, where the key parameter $\beta_i$ is required to satisfy the inequality $\textstyle \beta_i\ge \max\left\{\sum_{v\in V}{\tfrac{p_i(s,v)^2}{d(v)}}, \sum_{v\in V}{\tfrac{p_i(t,v)^2}{d(v)}}\right\}$. This requirement makes $\mathtt{TPC}$ impractical since the parameter $\beta_i$ is unknown and hard to estimate.
\section{An Adaptive Monte Carlo Approach}\label{sec:amc}
In this section, we present \fwd, an {\em \underline{A}daptive \underline{M}onte \underline{C}arlo} algorithm, for $\epsilon$-approximate PER processing.
\fwd is similar in spirit to $\mathtt{TP}$ in that it first derives a maximum random walk length $\ell$ and then samples truncated random walks to estimate $r(s,t)$. The crucial differences are two-fold. First, contrary to the generic $\ell$ in Eq. \eqref{eq:old-ell}, we offer a refined and individual $\ell$ for each node pair $(s,t)\in V\times V$ through a rigorous theoretical analysis (Section \ref{sec:ell-fwd}). Second, with the help of empirical Bernstein inequality, we tweak the random walk sampling such that it runs in an adaptive fashion; that is to say, it can terminate early under certain conditions without sacrificing any accuracy (Section \ref{sec:alg-fwd}). In addition, a theoretical analysis regarding correctness and complexity is given in Section \ref{sec:als-fwd} to indicate its superiority over $\mathtt{TP}$.

\subsection{Refining Maximum Length $\ell$}\label{sec:ell-fwd}



The {\em mixing time} \cite{das2013distributed} $\xi_s$ with respect to a source node $s$ is the number of steps taken by a random walk from $s$ to converge to the stationary distribution $\vec{\boldsymbol{\pi}}$ of the graph. Note that mixing times vary from node to node and are directly associated with their neighborhoods \cite{das2013distributed}. In particular, when we set $\ell=\max\{\xi_s,\xi_t\}$, by Eq. \eqref{eq:r-ell} we can obtain $r(s,t)=r_{\ell}(s,t)$. Intuitively, given the same $\ell$, $r_{\ell}(s,t)$ of a node pair $(s,t)$ with low mixing times is more likely to be close to $r(s,t)$ compared to the case where the node pairs have high mixing times. To put it from another way, the maximum random walk length $\ell$ required to ensure $|r(s,t)-r_{\ell}(s,t)|\le \tfrac{\epsilon}{2}$ is small if the mixing times pertaining to nodes $s,t$ are low. However, in $\mathtt{TP}$ and $\mathtt{TPC}$, they provide a large and generic $\ell$ (i.e., Eq. \eqref{eq:old-ell}) for all node pairs, regardless of the structural property surrounding each node, adversely impacting their performance. 
To remedy this problem, we offer an improved and personalized $\ell$ for different node pairs in Theorem \ref{lem:ell} by  capitalizing on the properties of eigenvalues and eigenvectors of $\PM$.

\begin{theorem}\label{lem:ell}
Given a graph $G$, $|r(s,t)-r_{\ell}(s,t)|\le \frac{\epsilon}{2}$ holds for any two nodes $s,t$ in $G$ when $\ell$ satisfies
\begin{small}
\begin{equation}\label{eq:ell}
\textstyle
\ell= \left\lceil\log{\left(\tfrac{\tfrac{2}{{d(s)}}+\tfrac{2}{{d(t)}}}{\epsilon(1-\lambda)}\right)}\bigg/\log{\left(\tfrac{1}{\lambda}\right)}-1\right\rceil,\ \textrm{where } \lambda=\max\{|\lambda_2|,|\lambda_n|\}.
\end{equation}
\end{small}
\end{theorem}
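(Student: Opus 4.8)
The plan is to reduce the claim to a spectral tail estimate. Subtracting \eqref{eq:r-ell} from the series \eqref{eq:er-new}, the quantity to control is
\[
r(s,t)-r_{\ell}(s,t)=\sum_{i=\ell+1}^{\infty}\Bigl(\tfrac{p_i(s,s)}{d(s)}+\tfrac{p_i(t,t)}{d(t)}-\tfrac{p_i(s,t)}{d(t)}-\tfrac{p_i(t,s)}{d(s)}\Bigr),
\]
where we may assume $s\neq t$, since otherwise both sides of the theorem vanish. To handle powers of the non-symmetric matrix $\PM$, I would symmetrize: $\PM=\DM^{-1}\AM$ is similar to $\mathbf{M}=\DM^{-1/2}\AM\DM^{-1/2}$ via $\PM=\DM^{-1/2}\mathbf{M}\DM^{1/2}$, so $\PM^i=\DM^{-1/2}\mathbf{M}^i\DM^{1/2}$ and $p_i(u,v)=\sqrt{d(v)/d(u)}\,\mathbf{M}^i(u,v)$. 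Substituting this and using that $\mathbf{M}^i$ is symmetric, a direct expansion shows that the $i$-th summand above equals the quadratic form $\vec{\mathbf{y}}^{\intercal}\mathbf{M}^{i}\vec{\mathbf{y}}$, where $\vec{\mathbf{y}}=\DM^{-1/2}(\vec{\mathbf{e}}_s-\vec{\mathbf{e}}_t)=\tfrac{\vec{\mathbf{e}}_s}{\sqrt{d(s)}}-\tfrac{\vec{\mathbf{e}}_t}{\sqrt{d(t)}}$.

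Next I would diagonalize. Write $\mathbf{M}=\sum_{j=1}^{n}\lambda_j\,\vec{\mathbf{f}}_j\vec{\mathbf{f}}_j^{\intercal}$ with an orthonormal eigenbasis $\{\vec{\mathbf{f}}_j\}$ and top eigenvector $\vec{\mathbf{f}}_1=\DM^{1/2}\vec{\mathbf{1}}/\sqrt{2m}$ (with $\vec{\mathbf{1}}$ the all-ones vector). Setting $c_j=\langle\vec{\mathbf{f}}_j,\vec{\mathbf{y}}\rangle^{2}\ge 0$ gives $\vec{\mathbf{y}}^{\intercal}\mathbf{M}^i\vec{\mathbf{y}}=\sum_{j}\lambda_j^{i}c_j$, and the key cancellation is $c_1=0$, since $\langle\vec{\mathbf{f}}_1,\vec{\mathbf{y}}\rangle=\tfrac{1}{\sqrt{2m}}\vec{\mathbf{1}}^{\intercal}(\vec{\mathbf{e}}_s-\vec{\mathbf{e}}_t)=0$. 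Because $G$ is connected and non-bipartite, $|\lambda_j|<1$ for every $j\ge 2$, so summing the geometric series in $i$ is legitimate and yields
\[
r(s,t)-r_{\ell}(s,t)=\sum_{j\ge 2}c_j\,\frac{\lambda_j^{\ell+1}}{1-\lambda_j}.
\]
I would then bound this using $|\lambda_j|\le\lambda$ and $1-\lambda_j\ge 1-\lambda$ for all $j\ge 2$ (which also absorbs the signs of negative eigenvalues), together with Parseval's identity $\sum_{j\ge 2}c_j=\sum_{j}c_j=\lVert\vec{\mathbf{y}}\rVert_2^{2}=\tfrac{1}{d(s)}+\tfrac{1}{d(t)}$, to conclude $\lvert r(s,t)-r_{\ell}(s,t)\rvert\le\tfrac{\lambda^{\ell+1}}{1-\lambda}\bigl(\tfrac{1}{d(s)}+\tfrac{1}{d(t)}\bigr)$.

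Finally I would solve $\tfrac{\lambda^{\ell+1}}{1-\lambda}\bigl(\tfrac{1}{d(s)}+\tfrac{1}{d(t)}\bigr)\le\tfrac{\epsilon}{2}$ for $\ell$: this rearranges to $\lambda^{\ell+1}\le\tfrac{\epsilon(1-\lambda)}{2/d(s)+2/d(t)}$, and taking logarithms with $0<\lambda<1$ (the case $\lambda=0$ being trivial, since the tail then vanishes for any $\ell\ge 0$) gives precisely the threshold in \eqref{eq:ell}, the ceiling ensuring the integer choice still satisfies the inequality. I expect the crux — and the reason this improves on the generic bound \eqref{eq:old-ell} — to be the second step: recognizing that the per-length contribution collapses to the \emph{nonnegative} quadratic form $\vec{\mathbf{y}}^{\intercal}\mathbf{M}^i\vec{\mathbf{y}}$, so that the coarse constant $\sum_{j\ge 2}c_j\le 2$ implicit in prior work can be sharpened to the exact degree-dependent value $\tfrac{1}{d(s)}+\tfrac{1}{d(t)}$; everything afterward is routine geometric-series and logarithm manipulation.
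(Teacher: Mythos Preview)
Your proof is correct and follows essentially the same spectral route as the paper: both express the $i$-th tail summand as a nonnegative quadratic form $\sum_{k\ge 2}c_k\lambda_k^i$ in the eigenbasis of the (symmetrized) walk matrix, with $c_1=0$ and $\sum_k c_k=\tfrac{1}{d(s)}+\tfrac{1}{d(t)}$, then bound the geometric tail by $\tfrac{\lambda^{\ell+1}}{1-\lambda}\bigl(\tfrac{1}{d(s)}+\tfrac{1}{d(t)}\bigr)$. Your use of Parseval on $\vec{\mathbf{y}}=\DM^{-1/2}(\vec{\mathbf{e}}_s-\vec{\mathbf{e}}_t)$ is in fact a slightly cleaner way to obtain $\sum_k c_k$ exactly, whereas the paper derives it as an inequality by bounding the cross term via $p_2(s,t)\ge 0$, but this is a presentational difference only.
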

\revise{
Compared to Peng et al.'s $\ell$ in Eq. \eqref{eq:old-ell}, our refined $\ell$ in Eq. \eqref{eq:ell} incorporates node degrees $d(s),d(t)$ in it, which is favorable particularly for graphs with high average degrees. The empirical results in Section \ref{sec:exp-param} reveal the great superiority of our refined $\ell$ over Peng et al.'s $\ell$ in the practical efficiency of PER computation.
}
\revise{Given that $\PM$ is a sparse matrix with $m$ non-zero entries, iterative methods \cite{demmel1997applied} such as the Implicitly Restarted Arnoldi Method \cite{lehoucq1998arpack} for computing $\lambda$ merely require performing sparse matrix-vector multiplications between $\PM$ and two length-$n$ vectors, resulting in linear time complexity of $O(m)$. This suggests that 
the computation of $\lambda$ can be efficiently done through a preprocessing step. For example, on the Orkut graph with $117$ million edges, computing $\lambda$ consumes less than five minutes.
Note that this preprocessing step only needs to be conducted once for a graph, and the obtained $\lambda$ can be reused for computing the maximum length $\ell$ by Eq. \eqref{eq:ell} for any node pairs. 
}

\begin{figure}[!t]
\centering
\includegraphics[width=0.66\columnwidth]{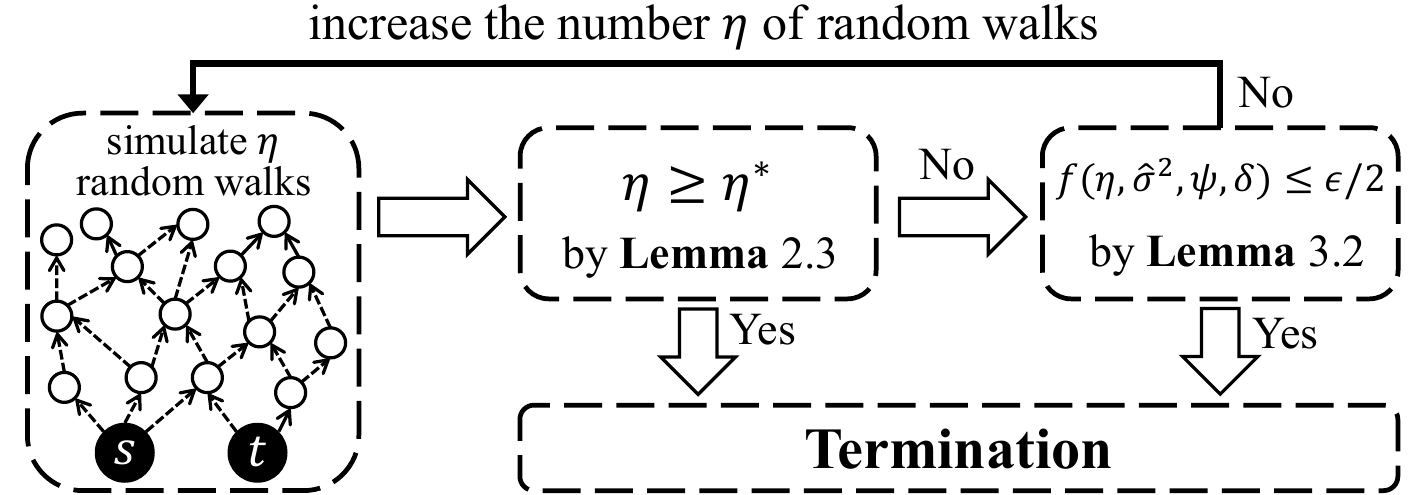}
\vspace{-1mm}
\caption{Overview of \fwd}\label{fig:amc}
\end{figure}



\subsection{\bf Sampling Random Walks Adaptively}\label{sec:alg-fwd}
\revise{Before delving into the algorithmic details, we first elaborate on the high-level idea of \fwd.}

\subsubsection{High-level Idea}
Recall from Section \ref{sec:mcpt} that $\mathtt{TP}$ and $\mathtt{TPC}$ utilize the Hoeffding’s inequality (Lemma \ref{lem:hoeffding}) to derive the maximum number of random walks needed in advance before conducting them in a single batch. Practically, most of the random walks are excessive due to the looseness of the Hoeffding bound, wherein the actual variance of random variables is much lower than the one it assumes. Ideally, an efficient method should run with the minimum amount of random walks derived according to the actual variance, which is unknown beforehand. 

Motivated by this, \fwd's idea is to alter the sampling phase such that it runs adaptively. That is, instead of performing all random walks in a single batch, we conduct multiple batches of independent Monte Carlo random walks incrementally to estimate ER $r(s,t)$, and determine the termination according to the empirical error computed from the observed samples. Note that the empirical error can be calculated via the empirical Bernstein inequality in Lemma \ref{lem:bernstein}.
Fig. \ref{fig:amc} overviews this adaptive sampling scheme adopted in \fwd. More specifically, \fwd starts with a small-size batch of random walks ($\eta$ samples in total) to compute an approximate ER $r^{\prime}(s,t)$. Let $\eta^\ast$ be the maximum number of random walks required by Lemma \ref{lem:hoeffding}. If either $\eta$ exceeds $\eta^\ast$ or the empirical error $f(\eta,\hat{\sigma}^2,\psi,\delta)$ computed by Lemma \ref{lem:bernstein} is not greater than the desired error threshold $\epsilon/2$, \fwd terminates and returns $r^{\prime}(s,t)$ as the answer. Otherwise, we discard all results and samples in this batch and repeat the above procedure with an increased $\eta$. This adaptive scheme 
ensures its cost never exceeds that of simple Monte Carlo by imposing a restriction on $\eta$, i.e., $\eta \le \eta^\ast$. 
On top of that, it largely curtails the amount of random walks in \fwd compared with prior Monte Carlo approaches (e.g., $\mathtt{TP}$) as the experiments in Section \ref{sec:exp} demonstrate, owing to that the empirical Bernstein inequality quickly gets to be much tighter than the Hoeffding's inequality when the variances of the random variables are small \cite{mnih2008empirical}.

\begin{lemma}[empirical Bernstein
inequality \cite{audibert2007tuning}]\label{lem:bernstein}
Let $Z_1,Z_2,\dotsc,Z_{n_z}$ be real-valued i.i.d. random variables, such that $0\le Z_i \le \psi$. We denote by $Z=\frac{1}{n_z}\sum_{i=1}^{n_z}{Z_i}$ the empirical mean of these variables and $\hat{\sigma}^2=\frac{1}{n_z}\sum_{i=1}^{n_z}{(Z_i-Z)^2}$ their empirical variance. Then, we have
\begin{equation*}
\mathbb{P}\left[|Z-\mathbb{E}[Z]|\ge f(n_z,\hat{\sigma}^2,\psi,\delta)\right]\le \delta,
\end{equation*}
where
\begin{equation}\label{eq:fx}
f(n_z,\hat{\sigma}^2,\psi,\delta)=\sqrt{\frac{2{\hat{\sigma}^2}\log{(3/\delta)}}{n_z}}+\frac{3\psi\log{(3/\delta)}}{n_z}.
\end{equation}
\end{lemma}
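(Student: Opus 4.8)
The plan is to recognize this as the empirical Bernstein inequality and reconstruct its proof from two classical ingredients, since the whole difficulty is that the variance controlling the confidence width is the \emph{empirical} variance $\hat\sigma^2$ rather than the true variance $\sigma^2=\mathrm{Var}[Z_1]$, which is unknown. First I would reduce to the unit range: replacing each $Z_i$ by $Z_i/\psi$ rescales $|Z-\mathbb{E}[Z]|$, $\sigma$, and $\hat\sigma$ all by $1/\psi$, so it suffices to prove the bound for variables in $[0,1]$ and then multiply through by $\psi$; this rescaling is exactly what produces the explicit $\psi$ factor in the linear term of $f$ in \eqref{eq:fx}.

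The first ingredient is the classical Bernstein inequality for i.i.d. variables in $[0,1]$ with true variance $\sigma^2$, namely $\mathbb{P}[|Z-\mathbb{E}[Z]|\ge t]\le 2\exp\!\left(-\tfrac{n_z t^2}{2\sigma^2+\tfrac{2}{3}t}\right)$. Inverting this tail at a prescribed confidence level gives, with probability at least $1-\delta'$, a deviation bound of the shape $|Z-\mathbb{E}[Z]|\le \sqrt{2\sigma^2\log(2/\delta')/n_z}+c\log(2/\delta')/n_z$ for an absolute constant $c$. At this stage the estimate already has the desired structure, except that it is stated in terms of the unknown $\sigma$.

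The second and crucial ingredient, which I expect to be the main obstacle, is to control $\sigma$ by $\hat\sigma$. This is delicate because $\hat\sigma^2=\frac{1}{n_z}\sum_i(Z_i-Z)^2$ is not a sum of independent terms: centering by the common empirical mean $Z$ couples all samples, so Hoeffding or Bernstein cannot be applied to $\hat\sigma^2$ directly. The standard remedy is a bounded-differences argument applied to the sample standard deviation $\hat\sigma=\sqrt{\hat\sigma^2}$ itself. Viewed as a function of $(Z_1,\dots,Z_{n_z})$, the sample standard deviation is a scaled Euclidean seminorm and is therefore Lipschitz, with the effect of altering a single coordinate bounded by $O(1/\sqrt{n_z})$ times the range. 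McDiarmid's bounded-differences inequality then concentrates $\hat\sigma$ around its expectation, while Jensen's inequality gives $\mathbb{E}[\hat\sigma]\le\sigma$; combining these yields, with probability at least $1-\delta''$, the one-sided bound $\sigma\le\hat\sigma+\sqrt{2\log(1/\delta'')/(n_z-1)}$.

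Finally I would merge the two high-probability events by a union bound, allocating the confidence levels so that the total failure probability is $\delta$ (the choice that each constituent tail is controlled at level $\delta/3$ is what yields the common $\log(3/\delta)$ factor). Substituting $\sigma\le\hat\sigma+\sqrt{2\log(1/\delta'')/(n_z-1)}$ into the Bernstein estimate and linearizing via $\sqrt{a+b}\le\sqrt{a}+\sqrt{b}$ turns the leading term into $\sqrt{2\hat\sigma^2\log(3/\delta)/n_z}$, while all remaining contributions are of order $1/n_z$ and can be absorbed into the single term $3\psi\log(3/\delta)/n_z$ once the factor $\psi$ from the initial rescaling is restored. Tracking the absolute constants through this absorption so as to land exactly on the coefficients $2$ and $3$ in \eqref{eq:fx} is the only bookkeeping that requires care; the genuine conceptual obstacle is the empirical-variance concentration of the preceding step.
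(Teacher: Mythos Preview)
Your proof sketch is essentially the standard derivation of the empirical Bernstein inequality (indeed it closely mirrors the argument of Audibert--Munos--Szepesv\'ari and of Maurer--Pontil), and the overall strategy is correct. However, note that the paper does \emph{not} prove this lemma at all: it is stated with a citation to \cite{audibert2007tuning} and is used as a black-box tool in the analysis of \fwd and \algo. There is therefore no ``paper's own proof'' to compare your proposal against; the appendix only proves Theorem~\ref{lem:ell}, Lemma~\ref{lem:sum-bound}, and Theorem~\ref{lem:fwd}.

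If your aim is simply to understand where the bound comes from, your outline is fine. The one place I would flag for care is the constant-tracking in the final absorption step: the precise constants $2$ and $3$ in \eqref{eq:fx}, together with the specific $\log(3/\delta)$ factor, depend on exactly how the confidence budget is split and which form of Bernstein's inequality (and which bounded-differences constant for $\hat\sigma$) one starts from. Different sources state the empirical Bernstein inequality with slightly different constants, so if you carry out your sketch verbatim you may land on a bound of the same form but not with exactly these coefficients; matching \eqref{eq:fx} exactly requires following the original reference's choices.
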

Building on the idea explained above, we delineate the algorithmic details of \fwd in what follows.

\begin{algorithm}[tb]
\caption{\fwd}\label{alg:fwd}
\KwIn{Graph $G$, two nodes $s,t$, two vectors $\vec{\mathbf{s}}$, $\vec{\mathbf{t}}$, error threshold $\epsilon$, maximum random walk length $\ell_f$, the  number $\tau$ of batches, and failure probability $\delta$}
\KwOut{$r_f(s,t)$}
Compute $\eta^\ast$ by Eq. \eqref{eq:etas-star} and $\psi$ by Eq.~\eqref{eq:Delta-st}\;\label{alg:fwd-eta-ast}
$\eta\gets \lceil\eta^\ast/2^{\tau-1}\rceil$\;\label{alg:fwd-eta}
\For{$i\gets 1$ \KwTo $\tau$ \label{alg:fwd-iter}}{
$Z\gets 0;\ \hat{\sigma}^2\gets 0$\; 
\For{$k \gets 1$ \KwTo $\eta$}{
Perform two random walks ${S}_{k}$ and ${T}_{k}$ with length-$\ell_f$ from $s$ and $t$, respectively\;
$Z_{k} \gets \sum_{u\in S_k}\left(\frac{{\vec{\mathbf{s}}(u)}}{d(s)}-\frac{{\vec{\mathbf{t}}(u)}}{d(t)}\right)+\sum_{u\in T_k}\left(\frac{{\vec{\mathbf{t}}(u)}}{d(t)}-\frac{{\vec{\mathbf{s}}(u)}}{d(s)}\right)$\;
Update $Z\gets Z+Z_k$\;
Update $\hat{\sigma}^2\gets\hat{\sigma}^2+Z_k^2$\;\label{alg:fwd-variance-sum}
}
Compute $Z\gets \frac{Z}{\eta}$\;
Compute $\hat{\sigma}^2\gets\frac{\hat{\sigma}^2}{\eta}-Z^2$\;\label{alg:fwd-variance}
\lIf{$f\left(\eta,\hat{\sigma}^2,\psi,{\delta}/{\tau}\right) \le \frac{\epsilon}{2}$}{\textbf{break}\label{alg:fwd-stop}}
$\eta\gets 2\eta$\;\label{alg:fwd-iter-end}
}
\Return{$r_f(s,t)\gets Z$\;\label{alg:fwd-return}}
\end{algorithm}

\subsubsection{Algorithm}
Algorithm \ref{alg:fwd} illustrates the pseudo-code of \fwd. To begin with, \fwd takes as input a graph $G$, two nodes $s,t$, maximum random walk length $\ell_f$, the error threshold $\epsilon$, failure probability $\delta$, the number of batches $\tau$, and two length-$n$ non-negative vectors $\vec{\mathbf{s}}$ and $\vec{\mathbf{t}}$. For the case of $\epsilon$-approximate PER computation, $\vec{\mathbf{s}}$ (resp.\ $\vec{\mathbf{t}}$) is fixed to be $\vec{\mathbf{e}}_s$ (resp.\ $\vec{\mathbf{e}}_t$), as explained later in Section \ref{sec:als-fwd}. Initially, \fwd computes the maximum number $\eta^\ast$ of random walks required from both $s$ and $t$ and a parameter $\psi$ at Line~\ref{alg:fwd-eta-ast} by the following equations:
\begin{equation}
\eta^\ast= \frac{2\psi^2\log{({2\tau}/{\delta})}}{\epsilon^2},\label{eq:etas-star}
\end{equation}
where 
\begin{equation}\label{eq:Delta-st}
	\psi=2 \left\lceil \tfrac{\ell_f}{2} \right\rceil\left(\tfrac{\max\nolimits_1(\vec{\mathbf{s}})}{d(s)}+\tfrac{\max\nolimits_1(\vec{\mathbf{t}})}{d(t)}\right)+2\left\lfloor \tfrac{\ell_f}{2} \right\rfloor\left(\tfrac{\max\nolimits_2(\vec{\mathbf{s}})}{d(s)}+\tfrac{\max\nolimits_2(\vec{\mathbf{t}})}{d(t)}\right).
\end{equation}
\revise{Intuitively, the parameter $\psi/2$ can be regarded as an upper bound for the absolute value of the random variable sampled by each random walk in \fwd (detailed in Section \ref{sec:als-fwd}).} \fwd then initializes the number $\eta$ of random walks to be performed from both $s$ and $t$ as $\lceil\eta^\ast/2^{\tau-1}\rceil$ (Line~\ref{alg:fwd-eta}).
Afterwards, \fwd starts $\tau$ batches of sampling (Lines~\ref{alg:fwd-iter}--\ref{alg:fwd-iter-end}), in each $i$-th of which it calculates the empirical mean $Z$ and empirical variance $\hat{\sigma}^2$, respectively. Observing that the empirical variance $\hat{\sigma}^2$ can be rewritten as
\begin{equation*}
	\hat{\sigma}^2=\frac{1}{\eta}\sum_{k=1}^{\eta}(Z_k-Z)^2=\frac{\sum_{k=1}^{\eta}Z_k^2}{\eta}-Z^2,
\end{equation*}
we can compute $\hat{\sigma}^2$ efficiently (Lines~\ref{alg:fwd-variance-sum} and \ref{alg:fwd-variance}). We repeat the process until they satisfy any of the two termination conditions: (i) the maximum number of iterations is reached (i.e., the maximum number of required random walks are conducted) (Lines 3) and (ii) the empirical error is less than the desired error threshold $\tfrac{\epsilon}{2}$ (Line~\ref{alg:fwd-stop}). Finally, $r_f(s,t)=Z$ will be returned to derive an accurate estimate of $r(s,t)$ (Line~\ref{alg:fwd-return}). 

\subsection{Analysis}\label{sec:als-fwd}
\revise{In what follows, we theoretically analyze the correctness and complexity of \fwd.}

\subsubsection{Correctness}
In the following, we will show that $r_f(s,t)+\mathbbm{1}_{s\neq t}\cdot(\tfrac{1}{d(s)}+\tfrac{1}{d(t)})$ is an $\epsilon$-approximation of $r(s,t)$ (Eq. \eqref{eq:epsilon}) w.h.p. when the input $\ell_f=\ell$ (Eq. \eqref{eq:ell}), $\vec{\mathbf{s}}=\vec{\mathbf{e}}_s$ and $\vec{\mathbf{t}}=\vec{\mathbf{e}}_t$.

\begin{lemma}\label{lem:sum-bound}
Given a length-$n$ non-negative vector $\vec{\mathbf{x}}$ and any length-$\ell_f$ random walk $W$ originating from $u$ on graph $G$, we have
\begin{equation}
\textstyle
\ell_f \cdot \min(\vec{\mathbf{x}})\le \sum_{w\in W}{\vec{\mathbf{x}}(w)}\le  \left\lceil \tfrac{\ell_f}{2} \right\rceil \cdot \max_1(\vec{\mathbf{x}}) + \left\lfloor \tfrac{\ell_f}{2} \right\rfloor \cdot \max_2(\vec{\mathbf{x}}).
\end{equation}
\end{lemma}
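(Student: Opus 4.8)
The plan is to prove the two inequalities in Lemma~\ref{lem:sum-bound} separately, exploiting the one structural fact we have about a random walk $W=(w_0,w_1,\dotsc,w_{\ell_f})$: consecutive vertices $w_{i-1}$ and $w_i$ are adjacent in $G$, hence \emph{distinct} (recall $G$ is assumed non-bipartite but in particular simple, so no self-loops). The lower bound is immediate: $W$ visits exactly $\ell_f+1$ vertices counted with multiplicity, but the claimed bound is $\ell_f\cdot\min(\vec{\mathbf{x}})$, so I would just note $\sum_{w\in W}\vec{\mathbf{x}}(w)\ge (\ell_f+1)\min(\vec{\mathbf{x}})\ge \ell_f\min(\vec{\mathbf{x}})$ using $\min(\vec{\mathbf{x}})\ge 0$ (nonnegativity of $\vec{\mathbf{x}}$). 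Actually I should double-check the indexing convention for ``length-$\ell_f$ walk'': if $W$ is taken to have $\ell_f$ vertices rather than $\ell_f$ edges, the lower bound is exactly $\ell_f\min(\vec{\mathbf{x}})$ with equality possible; either way the inequality holds.

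For the upper bound, the key observation is that since $w_{i-1}\ne w_i$ for every $i$, the multiset of visited vertices $\{w_0,\dotsc,w_{\ell_f}\}$ has the property that no vertex appears in two consecutive positions. I would split the positions into the ``even'' block $\{w_0,w_2,w_4,\dotsc\}$ and the ``odd'' block $\{w_1,w_3,\dotsc\}$: with $\ell_f+1$ positions total (indices $0$ through $\ell_f$), one block has $\lceil (\ell_f+1)/2\rceil$ elements and the other has $\lfloor(\ell_f+1)/2\rfloor$ elements — wait, I need to reconcile this with the stated $\lceil \ell_f/2\rceil$ and $\lfloor\ell_f/2\rfloor$, which sum to $\ell_f$, not $\ell_f+1$. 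This tells me the intended convention is that a length-$\ell_f$ walk has exactly $\ell_f$ vertices $w_1,\dotsc,w_{\ell_f}$ (consistent with the sums $\sum_{u\in S_k}$ in Algorithm~\ref{alg:fwd} ranging over the walk's vertices), so I should index $W=(w_1,\dotsc,w_{\ell_f})$. Then the even-indexed positions number $\lfloor\ell_f/2\rfloor$ and the odd-indexed positions number $\lceil\ell_f/2\rceil$.

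The crucial combinatorial claim is then: \emph{the vertices at the odd positions $w_1,w_3,\dotsc$ are not necessarily distinct, yet the sum $\sum\vec{\mathbf{x}}(w_{2j-1})$ over the $\lceil\ell_f/2\rceil$ odd positions is at most $\lceil\ell_f/2\rceil\cdot\max_1(\vec{\mathbf{x}})$, and likewise the even-position sum is at most $\lfloor\ell_f/2\rfloor\cdot\max_1(\vec{\mathbf{x}})$} — but that would only give a bound with $\max_1$ in both terms, which is weaker than claimed. So the real content must be subtler: I need to argue that \emph{across the whole walk} the value $\max_1(\vec{\mathbf{x}})$ can be ``charged'' to at most $\lceil\ell_f/2\rceil$ positions. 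Here is the mechanism: let $v^\star$ be an argmax of $\vec{\mathbf{x}}$. Since $W$ never stays at the same vertex in two consecutive steps, the positions $i$ with $w_i=v^\star$ form an \emph{independent set} in the path $1\!-\!2\!-\!\cdots\!-\!\ell_f$, hence number at most $\lceil \ell_f/2\rceil$. For every other position $i$ we have $w_i\ne v^\star$, so $\vec{\mathbf{x}}(w_i)\le \max_2(\vec{\mathbf{x}})$ (the second-largest value — here I must be careful that $\max_2$ means the second-largest value treating multiplicities correctly; if $\vec{\mathbf{x}}$ has a repeated maximum then $\max_2(\vec{\mathbf{x}})=\max_1(\vec{\mathbf{x}})$ and the bound is trivially fine). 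Counting: at most $\lceil\ell_f/2\rceil$ positions contribute $\le\max_1(\vec{\mathbf{x}})$ and the remaining $\ge\lfloor\ell_f/2\rfloor$ positions contribute $\le\max_2(\vec{\mathbf{x}})$; since $\max_2\le\max_1$, replacing the count of the first group by exactly $\lceil\ell_f/2\rceil$ and the second by exactly $\lfloor\ell_f/2\rfloor$ only increases the bound, yielding $\sum_{w\in W}\vec{\mathbf{x}}(w)\le \lceil\ell_f/2\rceil\max_1(\vec{\mathbf{x}})+\lfloor\ell_f/2\rfloor\max_2(\vec{\mathbf{x}})$.

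I expect the main obstacle to be purely bookkeeping: (i) pinning down the ``length-$\ell_f$ walk has $\ell_f$ vertices'' convention so the floors and ceilings land exactly as in the statement, and (ii) handling the degenerate case where the maximum of $\vec{\mathbf{x}}$ is attained more than once, so that the ``$w_i\ne v^\star \Rightarrow \vec{\mathbf{x}}(w_i)\le\max_2$'' step is justified — the clean fix is to define $v^\star$ as one specific argmax and observe that if there is a second argmax then $\max_1=\max_2$ and the inequality is slack anyway. The independent-set-in-a-path argument (any set of pairwise non-adjacent positions among $1,\dotsc,\ell_f$ has size $\le\lceil\ell_f/2\rceil$) is the one genuinely non-trivial idea, and it hinges entirely on the simple-graph / no-self-loop property of $G$ so that $w_{i-1}\ne w_i$.
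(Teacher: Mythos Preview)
Your proposal is correct and follows essentially the same route as the paper's proof: both hinge on the observation that, because consecutive vertices of the walk are adjacent (hence distinct), any single vertex---in particular an argmax $v^\star$ of $\vec{\mathbf{x}}$---can occupy at most $\lceil \ell_f/2\rceil$ of the $\ell_f$ positions, so the remaining $\lfloor \ell_f/2\rfloor$ positions each contribute at most $\max_2(\vec{\mathbf{x}})$. Your write-up is in fact more careful than the paper's, which leaves the convention on walk length, the ``why does $W_2$ avoid $v^\star$'' step, and the repeated-maximum case implicit.
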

For random walks $S_k$ and $T_k$ with length-$\ell_f$ (Line 6), consider the random variable $Z_k$ \revise{arising from the random walk in \fwd (Line 7) such that}
\begin{equation}
	Z_{k} = \sum_{u\in S_k}\left(\frac{{\vec{\mathbf{s}}(u)}}{d(s)}-\frac{{\vec{\mathbf{t}}(u)}}{d(t)}\right)+\sum_{u\in T_k}\left(\frac{{\vec{\mathbf{t}}(u)}}{d(t)}-\frac{{\vec{\mathbf{s}}(u)}}{d(s)}\right).
\end{equation}
By Lemma~\ref{lem:sum-bound}, given $\min(\vec{\mathbf{s}})\geq 0$ and $\min(\vec{\mathbf{t}})\geq 0$, we have
\begin{equation*}
	|Z_{k}|\leq \left\lceil \tfrac{\ell_f}{2} \right\rceil\left(\tfrac{\max\nolimits_1(\vec{\mathbf{s}})}{d(s)}+\tfrac{\max\nolimits_1(\vec{\mathbf{t}})}{d(t)}\right)+\left\lfloor \tfrac{\ell_f}{2} \right\rfloor\left(\tfrac{\max\nolimits_2(\vec{\mathbf{s}})}{d(s)}+\tfrac{\max\nolimits_2(\vec{\mathbf{t}})}{d(t)}\right)=\tfrac{\psi}{2},
\end{equation*}
where $\psi$ is given in Eq. \eqref{eq:Delta-st}. Further, let
\begin{equation}\label{eq:qst}
	q(s,t):=\sum_{i=1}^{\ell_f}{\sum_{v\in V}{\left(\big(p_{i}(s,v)-p_{i}(t,v)\big)\cdot \left(\frac{\vec{\mathbf{s}}(v)}{d(s)}-\frac{\vec{\mathbf{t}}(v)}{d(t)}\right)\right)}},
\end{equation}
\revise{where $p_i(s,v)$ (resp. $p_i(t,v)$) signifies the probability that a length-$i$ random walk starting from node $s$ (resp. $t$) would end at node $v$. Then, by the definition of $Z_k$ (Line 7), it is trivial to derive the expectation of $Z_k$ as follows:}
\begin{equation}\label{eq:x-exp}
	\mathbb{E}[Z_k]= q(s,t),
\end{equation}
which indicates that $r_f(s,t)=Z$ (Line 16) is an unbiased estimator of $q(s,t)$.

Note that Algorithm \ref{alg:fwd} terminates under two conditions: (i) $\epsilon_f \le \frac{\epsilon}{2}$, where $\epsilon_f=f\left(\eta,\hat{\sigma}^2,\psi,{\delta}/{\tau}\right)$ (Line 13), or (ii) $\eta \ge \eta^\ast$ (Line 3). For the first case that $\epsilon_f \le \frac{\epsilon}{2}$, we can bound its failure probability by applying the empirical Bernstein inequality (Lemma \ref{lem:bernstein}). As for another termination case that $\eta \ge \eta^\ast$, its failure probability can be bounded by using Hoeffding's inequality (Lemma \ref{lem:hoeffding}). Further, using union bound over these two cases, we can prove $r_f(s,t)$ returned by Algorithm~\ref{alg:fwd} is an $\frac{\epsilon}{2}$-approximate of $q(s,t)$ with high probability. In addition, note that $r_{\ell}(s,t)=q(s,t)+\mathbbm{1}_{s\neq t}\cdot(\tfrac{1}{d(s)}+\tfrac{1}{d(t)})$ when the input of Algorithm \ref{alg:fwd} satisfies $\ell_f=\ell$ as in Eq.~\eqref{eq:ell}, $\vec{\mathbf{s}}=\vec{\mathbf{e}}_s$ and $\vec{\mathbf{t}}=\vec{\mathbf{e}}_t$. Therefore, in Theorem \ref{lem:fwd}, we have $\left\lvert r_f(s,t)+\mathbbm{1}_{s\neq t}\cdot \left(\frac{1}{d(s)}+\frac{1}{d(t)}\right)-r_{\ell}(s,t)\right\rvert \le \frac{\epsilon}{2}$ with high probability; in other words, $r_f(s,t)+\mathbbm{1}_{s\neq t}\cdot \left(\tfrac{1}{d(s)}+\tfrac{1}{d(t)}\right)$ is an $\epsilon$-approximate ER of node pair $(s,t)$.
\begin{theorem}\label{lem:fwd}
Algorithm \ref{alg:fwd} ensures $\left|r_f(s,t)- q(s,t)\right| \le \frac{\epsilon}{2}$ with a probability of at least $1-\delta$. Moreover, setting $\ell_f=\ell$ as in Eq.~\eqref{eq:ell}, $\vec{\mathbf{s}}=\vec{\mathbf{e}}_s$ and $\vec{\mathbf{t}}=\vec{\mathbf{e}}_t$, Algorithm \ref{alg:fwd} returns $r_f(s,t)$ satisfying $\left\lvert r_f(s,t)+\mathbbm{1}_{s\neq t}\cdot(\tfrac{1}{d(s)}+\tfrac{1}{d(t)})-r(s,t)\right\rvert\le {\epsilon}$ with a probability of at least $1-\delta$.
\end{theorem}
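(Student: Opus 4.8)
The plan is to split the claim of Theorem~\ref{lem:fwd} into two parts, mirroring the structure of the discussion that precedes it. The first part is the "black-box" accuracy guarantee: for arbitrary non-negative input vectors $\vec{\mathbf{s}},\vec{\mathbf{t}}$ and arbitrary $\ell_f$, Algorithm~\ref{alg:fwd} returns $r_f(s,t)$ with $|r_f(s,t)-q(s,t)|\le\frac{\epsilon}{2}$ with probability at least $1-\delta$. The second part specializes $\ell_f=\ell$, $\vec{\mathbf{s}}=\vec{\mathbf{e}}_s$, $\vec{\mathbf{t}}=\vec{\mathbf{e}}_t$ and combines this with Theorem~\ref{lem:ell} to get the $\epsilon$-approximation of $r(s,t)$. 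The second part is essentially bookkeeping once the first is done, so the real work is the first part.

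For the first part I would argue as follows. The algorithm outputs $Z$ from whichever batch $i\in\{1,\dots,\tau\}$ causes termination. In batch $i$ the estimator $Z=\frac{1}{\eta}\sum_{k=1}^{\eta}Z_k$ is an average of $\eta$ i.i.d.\ copies of $Z_k$, with $\mathbb{E}[Z_k]=q(s,t)$ by Eq.~\eqref{eq:x-exp} and $|Z_k|\le\psi/2$ by the bound derived from Lemma~\ref{lem:sum-bound}; equivalently, after the (harmless) shift $Z_k\mapsto Z_k+\psi/2$, the variables lie in $[0,\psi]$, which is the form required by both Lemma~\ref{lem:hoeffding} and Lemma~\ref{lem:bernstein}. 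I would then handle the two termination conditions:
\begin{itemize}
\item If batch $i$ terminates because $f(\eta,\hat\sigma^2,\psi,\delta/\tau)\le\frac{\epsilon}{2}$ (Line~\ref{alg:fwd-stop}), apply the empirical Bernstein inequality (Lemma~\ref{lem:bernstein}) with failure probability $\delta/\tau$: on the complement of a bad event of probability $\le\delta/\tau$ we have $|Z-q(s,t)|\le f(\eta,\hat\sigma^2,\psi,\delta/\tau)\le\frac{\epsilon}{2}$.
\item If batch $i=\tau$ is reached without early termination, then $\eta=\lceil\eta^\ast/2^{\tau-1}\rceil\cdot 2^{\tau-1}\ge\eta^\ast$, and Hoeffding's inequality (Lemma~\ref{lem:hoeffding}) with interval width $\psi$ gives $\mathbb{P}[|Z-q(s,t)|>\frac{\epsilon}{2}]\le 2\exp(-2\eta\epsilon^2/(4\psi^2))\le 2\exp(-\log(2\tau/\delta))=\delta/\tau$ by the choice of $\eta^\ast$ in Eq.~\eqref{eq:etas-star}.
\end{itemize}
So \emph{every} batch, whichever way it ends, fails with probability at most $\delta/\tau$. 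Taking a union bound over the at most $\tau$ batches gives total failure probability at most $\delta$, hence $|r_f(s,t)-q(s,t)|\le\frac{\epsilon}{2}$ with probability at least $1-\delta$.

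For the second part, substitute $\vec{\mathbf{s}}=\vec{\mathbf{e}}_s$, $\vec{\mathbf{t}}=\vec{\mathbf{e}}_t$ into the definition of $q(s,t)$ in Eq.~\eqref{eq:qst}: the sum over $v$ collapses, and a short computation identifies $q(s,t)=r_\ell(s,t)-\mathbbm{1}_{s\ne t}\cdot(\frac{1}{d(s)}+\frac{1}{d(t)})$ (the indicator term accounts for the $i=0$ summand of $r_\ell$, which is excluded from $q$ since its sum runs from $i=1$). Combining the first part with this identity yields $|r_f(s,t)+\mathbbm{1}_{s\ne t}\cdot(\frac{1}{d(s)}+\frac{1}{d(t)})-r_\ell(s,t)|\le\frac{\epsilon}{2}$ w.p.\ $\ge 1-\delta$; adding Theorem~\ref{lem:ell}'s deterministic bound $|r(s,t)-r_\ell(s,t)|\le\frac{\epsilon}{2}$ and the triangle inequality gives the claimed $\epsilon$-approximation.

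The main obstacle is making the union-bound argument airtight across the \emph{adaptive} stopping rule. The subtlety is that the stopping batch is itself random and depends on the data, so one cannot naively invoke a concentration bound "for the returned $Z$"; the clean way out is the one sketched above — show a per-batch guarantee that holds for each fixed batch index $i$ regardless of whether earlier batches stopped, then union-bound over the deterministic index set $\{1,\dots,\tau\}$. A secondary point needing care is that the empirical-Bernstein branch uses the \emph{empirical} variance $\hat\sigma^2$ computed from the same samples, but this is exactly the scenario Lemma~\ref{lem:bernstein} is designed for, so no extra work is needed there; and I should double-check that $\eta$ after $\tau-1$ doublings is indeed $\ge\eta^\ast$, which holds since $\lceil\eta^\ast/2^{\tau-1}\rceil 2^{\tau-1}\ge\eta^\ast$.
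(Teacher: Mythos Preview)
Your proposal is correct and follows essentially the same approach as the paper's proof: shift $Z_k$ by $\psi/2$ to put the variables in $[0,\psi]$, bound the failure probability of each batch $i<\tau$ by $\delta/\tau$ via the empirical Bernstein inequality (Lemma~\ref{lem:bernstein}) and the final batch $i=\tau$ by $\delta/\tau$ via Hoeffding (Lemma~\ref{lem:hoeffding}) using $\eta\ge\eta^\ast$, then sum over the $\tau$ possible stopping indices; the second part is handled identically by identifying $q(s,t)=r_\ell(s,t)-\mathbbm{1}_{s\ne t}(\tfrac{1}{d(s)}+\tfrac{1}{d(t)})$ and invoking Theorem~\ref{lem:ell}. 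Your explicit remark that the union bound runs over the deterministic index set $\{1,\dots,\tau\}$ (rather than conditioning on the random stopping time) is exactly the right way to make the argument rigorous, and matches how the paper partitions the failure event by $\{T=i\}$.
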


\subsubsection{Complexity}
According to Lines~\ref{alg:fwd-eta} and \ref{alg:fwd-iter-end} in Algorithm \ref{alg:fwd}, the total number of random walks conducted from nodes $s,t$ in the course of $\tau$ iterations is bounded by
\begin{equation*}
h(\ell_f)=\sum_{i=1}^{\tau}{2^{i-1}\eta}=(2^\tau-1)\eta< 2\eta^\ast. 
\end{equation*}
The running time of \fwd is bounded by $O(h(\ell_f) \cdot \ell_f)$ as the length of each random walk in \fwd is $\ell_f$. When $\ell_f=\ell$ (Eq. \eqref{eq:ell}), $\vec{\mathbf{s}}=\vec{\mathbf{e}}_s$,
$\vec{\mathbf{t}}=\vec{\mathbf{e}}_t$, Eq. \eqref{eq:Delta-st} implies $\psi=2\left\lceil \tfrac{\ell}{2} \right\rceil\left(\frac{1}{d(s)}+\frac{1}{d(t)}\right)$. Recall that $\eta^{\ast}$ is defined in Eq. \eqref{eq:etas-star}.
Therefore, the total time complexity of employing \fwd for answering $\epsilon$-approximate PER query of node pair $(s,t)$ is
\begin{small}
\begin{equation}\label{eq:time-amc}
 O \left( \Big(\tfrac{1}{\epsilon d(s)}+\tfrac{1}{\epsilon d(t)}\Big)^2\log^3{\Big(\tfrac{1}{\epsilon d(s)}+\tfrac{1}{\epsilon d(t)}\Big)}\right)=O \left(\tfrac{1}{\epsilon^2 d^2} \log^3{\left(\tfrac{1}{\epsilon d}\right)}\right),
\end{equation}
\end{small}
where $d=\min\{d(s),d(t)\}$ and $\delta$ is regarded as a constant.

\revise{
\header
{\bf Remark.}
The total number of random walks conducted in \fwd for answering an $\epsilon$-approximate PER query is roughly $\tfrac{2\ell^2\log{(2\tau/\delta)}}{\epsilon^2}\cdot \left(\tfrac{1}{d(s)}+\tfrac{1}{d(t)}\right)^2$. By contrast, the state of the art PER solution $\mathtt{TP}$ requires $\tfrac{40\ell^3\ln{(8\ell/\delta)}}{\epsilon^2}$ random walks \cite{pankdd21}, which is larger than \fwd's amount by at least a significant factor of $\textstyle 20\ell/{\big(\tfrac{1}{d(s)}+\frac{1}{d(t)}\big)^2}$.
}


\begin{table}[!t]
\hspace{0mm}
\begin{minipage}{0.32\linewidth}
\begin{tikzpicture}[
  mycircle/.style={
     circle,
     draw=black,
     fill=gray,
     fill opacity = 0.3,
     text opacity=1,
     inner sep=0pt,
     minimum size=12pt,
     font=\small},
  myarrow/.style={-,>=stealth},
  node distance=0.2cm and 0.4cm
  ]
  \node[mycircle] (c1) {$t$};
  \node[mycircle,above right=of c1] (c4) {$v_1$};
  \node[mycircle,below right=of c1] (c2) {$v_2$};
  \node[mycircle,above=of c1] (c9) {$v_3$};
  \node[mycircle,below=of c1] (c10) {$v_4$};
  \node[mycircle,left=of c1] (c11) {$v_5$};
  \node[mycircle,above left=of c1] (c12) {$v_6$};
  \node[mycircle,below left=of c1] (c13) {$v_7$};
  \node[mycircle,right=of c2] (c3) {$v_8$};
  \node[mycircle,right=of c4] (c5) {$v_9$};
  \node[mycircle,right=of c1] (c6) {$s$};

\foreach \i/\j/\txt/\p in {
  c1/c9//above,
  c1/c10//below,
  c1/c11//left,
  c1/c12//left,
  c1/c13//left,
  c1/c2//below,
  c1/c4//above,
  c2/c3//below,
  c3/c6//below,
  c4/c5//above,
  c5/c6//above}
  \draw [myarrow] (\i) -- node[sloped,font=\small,\p] {\txt} (\j);

\end{tikzpicture}
\end{minipage}
\hspace{-10mm}
\begin{minipage}{0.62\linewidth}
\centering
\begin{small}
\renewcommand{\arraystretch}{1.0}
\begin{tabular}{|c|c|c|c|c|c|c|c|c|}\hline
$\ell_f$ &  1 & 2 & 3 & 4 & 5 & 6 & 7 & 8\\  \hline   
\#path($s$) & 2 & 4 & 8 & 26 & 42 & 184 & 268 & 1346 \\  
\#path($t$) & 7 & 9 & 53 & 71 & 397 & 539 & 2963 & 4041 \\  \hline
\makecell{\#path($s$)\\ + \#path($t$)} & 9 & 13 & 61 & 97 & 439 & 723 & \bf 3231 & \bf 5387 \\  \hline
${\eta}^\ast$  & \bf 31 & \bf 122 & \bf 275 & \bf 488 & \bf 762 & \bf 1097 & 1493 & 1949 \\  \hline
\end{tabular}
\end{small}
\end{minipage}
\vspace{-1mm}
\captionof{figure}{A running example.}
\label{fig:toy}
\end{table}

\section{The Greedy Approach}\label{sec:main}
Although \fwd significantly improves over existing approaches in terms of running time, we observe in our experiments (Section \ref{sec:exp}) that the performance of \fwd is still intolerable especially for large graphs, as an aftermath of the enormous number of excessive random walks needed in it.
To explain, we illustrate an example in Fig. \ref{fig:toy}, which contains a toy graph with eleven nodes including $v_1$--$v_9$ and $s,t$.

\header
{\bf A Running Example.}
The r.h.s.\ table in Fig.\ \ref{fig:toy} lists the numbers $\#path(s),\#path(t)$ of distinct paths from node $s$ and $t$ within varied length $\ell_f$ (ranging from 1 to 8). Note that these numbers can be obtained by performing deterministic graph traversals from nodes $s$ and $t$, and used to compute $r_{f}(s,t)$. The table also reports $\eta^\ast$, which is the number of random walks required by \fwd from nodes $s$ and $t$ to estimate $r_f(s,t)$ when varying length $\ell_f$ from $1$ to $8$ with additive error $\epsilon=0.5$ and failure probability $\delta=0.1$. As we can see from the table, when length $\ell_f$ is increased from $1$ to $8$, $\eta^\ast$ constantly outnumbers $\#path(s)+\#path(t)$ by up to $10$-fold, whereas $\#path(t)+\#path(t)$ exceeds $\eta^\ast$ notably when $\ell_f\ge 7$.
The former is by virtue of the insignificant amount of nodes  in the near vicinity of nodes $s$ and $t$, and thus a simple graph traversal can efficiently explore all of them. This suggests that \fwd tends to be even less efficient than the deterministic graph traversal, when $\ell_f$ is small and source nodes have scant connections to the rest of the graph.
However, as $\ell_f$ continues to increase, the number $\#path(t)$ of paths from $t$ grows at an astonishing rate (from $7$ to $4041$) as it has $7$ adjacent neighbors, rendering sampling random walks (i.e., \fwd) a preferred choice over the graph traversal. This happens especially on large graphs with high average degrees.





Based upon the above-said observations, it is natural to investigate how to combine \fwd and deterministic graph traversal together 
and create an even more efficient approach for answering $\epsilon$-approximate PER queries. 
To achieve this goal, this section proposes \algo (\underline{G}reedy \underline{E}stimation of \underline{E}ffective \underline{R}esistance), which significantly advances the practical efficiency of \fwd without degrading its theoretical guarantees, by harnessing the virtues of sparse matrix-vector multiplications and random walks in a judicious manner. The subsequent section elucidates the rationale behind \algo; after that the detailed algorithm and rigorous theoretical analysis will be presented.




\subsection{Overview of \texttt{GEER}}\label{sec:algo-over}

To begin with, we introduce a secondary algorithm \bwd, which implements the deterministic graph traversal over $G$ based on sparse matrix-vector multiplications. 

\subsubsection{A Deterministic Method \texttt{SMM}}\label{sec:smm}
Recall from Eq. \eqref{eq:r-ell} that the computation of $r_\ell(s,t)$ involves calculating $p_i(s,t)$, $p_i(s,s)$ and $p_i(t,t)$ for $0\le i\le \ell$.
By definition, for any two nodes $u,v\in V$
\begin{equation*}
p_i(v,u)=\PM^{i}(v,u)=(\PM^i\vec{\mathbf{e}}_u)(v)=(\smallunderbrace{\PM\cdot\PM\dotsb\PM}\cdot\vec{\mathbf{e}}_u)(v),
\end{equation*}
\revise{and $p_0(u,v)=\vec{\mathbf{e}}_u(v)$}. Therefore, $r_{\ell}(s,t)$ can be obtained via an iterative process of sparse matrix-vector multiplications as displayed in Algorithm \ref{alg:smm} with initial vectors $\vec{\mathbf{s}}^\ast=\vec{\mathbf{e}}_s$, $\vec{\mathbf{t}}^\ast=\vec{\mathbf{e}}_t$, and 
\begin{equation*}
	r_b(s,t)= \frac{\vec{\mathbf{s}}^\ast(s)}{d(s)}+\frac{\vec{\mathbf{t}}^\ast(t)}{d(t)}-\frac{\vec{\mathbf{s}}^\ast(t)}{d(s)}-\frac{\vec{\mathbf{t}}^\ast(s)}{d(t)},
\end{equation*}
\revise{where $r_b(s,t)$ is equal to $\frac{p_0(s,s)}{d(s)}+\frac{p_0(t,t)}{d(t)}-\frac{p_0(s,t)}{d(s)}-\frac{p_0(t,s)}{d(t)}$, i.e., the zeroth iteration part of $r_{\ell}(s,t)$}. Each of the $\ell_b$ iterations in \bwd updates $\vec{\mathbf{s}}^\ast$ (resp. $\vec{\mathbf{t}}^\ast$) as $\PM\vec{\mathbf{s}}^\ast$ (resp. $\PM\vec{\mathbf{t}}^\ast$), and then increases $r_b(s,t)$ by $\frac{\vec{\mathbf{s}}^\ast(s)}{d(s)}+\frac{\vec{\mathbf{t}}^\ast(t)}{d(t)}-\frac{\vec{\mathbf{s}}^\ast(t)}{d(s)}-\frac{\vec{\mathbf{t}}^\ast(s)}{d(t)}$ (Lines \ref{alg:smm-update}--\ref{alg:smm-update-r}). Particularly, at the end of the $i$-th iteration, $\forall{v\in V}$,
\begin{equation}\label{eq:stvec-star}
\vec{\mathbf{s}}^\ast(v)=p_i(v,s),\text{ and } \vec{\mathbf{t}}^\ast(v)=p_i(v,t).
\end{equation}
In turn, when $\ell_b=\ell$ (as defined in Eq. \eqref{eq:ell}), the $r_b(s,t)$ returned by \bwd is an approximation of $r(s,t)$ satisfying $|r(s,t)-r_{\ell}(s,t)|\le \tfrac{\epsilon}{2}$. 

\begin{algorithm}[tb]
	\caption{\bwd}\label{alg:smm}
	\KwIn{Graph $G$, nodes $s,t$, $\ell_b$.}
	\KwOut{$r_b(s,t)$}
	$\vec{\mathbf{s}}^\ast\gets \vec{\mathbf{e}}_s$; $\vec{\mathbf{t}}^\ast\gets \vec{\mathbf{e}}_t$\;
	$r_b(s,t) \gets \frac{\vec{\mathbf{s}}^\ast(s)}{d(s)}+\frac{\vec{\mathbf{t}}^\ast(t)}{d(t)}-\frac{\vec{\mathbf{s}}^\ast(t)}{d(s)}-\frac{\vec{\mathbf{t}}^\ast(s)}{d(t)}$\;
	\For{$i\gets 1$ to $\ell_b$}{
		$\vec{\mathbf{s}}^\ast\gets \mathbf{P}\vec{\mathbf{s}}^\ast$; $\vec{\mathbf{t}}^\ast\gets \mathbf{P}\vec{\mathbf{t}}^\ast$\;\label{alg:smm-update}
		$r_b(s,t)\gets r_b(s,t)+\frac{\vec{\mathbf{s}}^\ast(s)}{d(s)}+\frac{\vec{\mathbf{t}}^\ast(t)}{d(t)}-\frac{\vec{\mathbf{s}}^\ast(t)}{d(s)}-\frac{\vec{\mathbf{t}}^\ast(s)}{d(t)}$\;\label{alg:smm-update-r}
	}
\end{algorithm}

\bwd is essentially a deterministic graph traversal over $G$. Compared to na\"ive graph traversal, \bwd attains cache-friendly memory access patterns as it sequentially visits all reachable nodes from node $s,t$ via $\PM\vec{\mathbf{s}}^\ast$ and $\PM\vec{\mathbf{t}}^\ast$. Recall that $G$ is connected, meaning every two nodes can reach each other via one or multiple paths. Thus, when $\ell_b$ is sufficiently large and source nodes $s,t$ have many neighbors, the vectors $\vec{\mathbf{s}}^\ast$ and $\vec{\mathbf{t}}^\ast$ in \bwd turns to be dense rapidly and during each iteration in Algorithm \ref{alg:smm}, the matrix-vector multiplications at Line~\ref{alg:smm-update} will consume an exorbitant time of up to $O(m)$.


With the strengths and weaknesses of \fwd and \bwd in mind, next we investigate how to integrate them into \algo, as well as orchestrate and optimize the entire \algo algorithm for enhanced empirical efficiency.

\subsubsection{Greedy Integration of \texttt{SMM} and \texttt{AMC}}\label{sec:greedy-combi}

Fig. \ref{fig:overview} depicts an overview of our proposed \algo. At a high level, \algo is to strike a good trade-off between $\ell_b$ ($\ell_b\le \ell$) iterations of sparse matrix-vector multiplications in \bwd and random walks in \fwd. More precisely, \algo aims to compute $r^\ast_b(s,t)$ and $r^\ast_f(s,t)$ as follows:
\begin{equation}\label{eq:2-parts}
\begin{split}
{r}^\ast_b(s,t)&= \sum_{i=0}^{\ell_b}{\frac{p_i(s,s)}{d(s)}+\frac{p_i(t,t)}{d(t)}-\frac{ p_i(s,t)}{d(t)}-\frac{ p_i(t,s)}{d(s)}},\\
{r}^\ast_f(s,t)&= \sum_{i=\ell_b+1}^{\ell}{\frac{p_i(s,s)}{d(s)}+\frac{p_i(t,t)}{d(t)}-\frac{ p_i(s,t)}{d(t)}-\frac{ p_i(t,s)}{d(s)}},
\end{split}
\end{equation}
and finally combine them as the final ER via $r_{\ell}(s,t)=r^\ast_b(s,t)+r^\ast_f(s,t)$. Note that $r^\ast_b(s,t)$ aggregates the probabilities of visiting all nodes within the vicinity of $\ell_b$ hops away from nodes $s,t$, which can be efficiently computed using \bwd when $\ell_b$ is small, as remarked earlier. As regards $r^\ast_f(s,t)$, given that  it considers far-reaching nodes beyond $\ell_b$ hops from $s,t$, we resort to conducting random walks described in \fwd to estimate it. Instead of simulating random walks of lengths ranging from $\ell_b+1$ to $\ell$, we leverage the byproduct in \bwd so as to reduce the length as well as the amount of random walks needed in \fwd, as stated in what follows.

\begin{figure}[!t]
	\centering
	\resizebox{0.7\columnwidth}{!}{%
		\tikzset{every picture/.style={line width=0.75pt}} 
		
		\begin{tikzpicture}[x=0.75pt,y=0.75pt,yscale=-1,xscale=1,box/.style={rectangle,draw=white,thin, minimum size=0.5cm},]
			
			\node[box,fill=gray,fill opacity=0.3] at (505.5,90){}; 
			\node[box,fill=gray,fill opacity=0.3] at (404.5,89){}; 
			\node[box,fill=gray,fill opacity=0.3] at (464.5,89){};
			\node[box,fill=gray,fill opacity=0.3] at (303.5,89){};
			\node[box,fill=gray,fill opacity=0.3] at (243.5,89){};
			
			\node[box,fill=gray,fill opacity=0.3] at (464.5,109){};
			\node[box,fill=gray,fill opacity=0.3] at (424.5,109){};
			\node[box,fill=gray,fill opacity=0.3] at (384.5,109){};
			\node[box,fill=gray,fill opacity=0.3] at (303.5,109){};
			\node[box,fill=gray,fill opacity=0.3] at (263.5,109){};
			\node[box,fill=gray,fill opacity=0.3] at (223.5,109){};
			
			\node[box,fill=gray,fill opacity=0.3] at (464.5,129){};
			\node[box,fill=gray,fill opacity=0.3] at (444.5,129){};
			\node[box,fill=gray,fill opacity=0.3] at (424.5,129){};
			\node[box,fill=gray,fill opacity=0.3] at (404.5,129){};
			\node[box,fill=gray,fill opacity=0.3] at (384.5,129){};
			\node[box,fill=gray,fill opacity=0.3] at (303.5,129){};
			\node[box,fill=gray,fill opacity=0.3] at (283.5,129){};
			\node[box,fill=gray,fill opacity=0.3] at (263.5,129){};
			\node[box,fill=gray,fill opacity=0.3] at (243.5,129){};
			\node[box,fill=gray,fill opacity=0.3] at (223.5,129){};
			
			\node[box,fill=gray,fill opacity=0.3] at (464.5,69){};
			\node[box,fill=gray,fill opacity=0.3] at (444.5,69){};
			\node[box,fill=gray,fill opacity=0.3] at (384.5,69){};
			\node[box,fill=gray,fill opacity=0.3] at (303.5,69){};
			\node[box,fill=gray,fill opacity=0.3] at (283.5,69){};
			\node[box,fill=gray,fill opacity=0.3] at (223.5,69){};
			
			\node[box,fill=gray,fill opacity=0.3] at (464.5,49){};
			\node[box,fill=gray,fill opacity=0.3] at (424.5,49){};
			\node[box,fill=gray,fill opacity=0.3] at (303.5,49){};
			\node[box,fill=gray,fill opacity=0.3] at (263.5,49){};
			
			\draw  [line width=1.5,fill=gray,fill opacity=0.3]  (8,91) .. controls (8,83.82) and (13.82,78) .. (21,78) .. controls (28.18,78) and (34,83.82) .. (34,91) .. controls (34,98.18) and (28.18,104) .. (21,104) .. controls (13.82,104) and (8,98.18) .. (8,91) -- cycle ;
			\draw  [draw opacity=0][line width=1.5]  (214,39) -- (314.5,39) -- (314.5,139.5) -- (214,139.5) -- cycle ; \draw  [line width=1.5]  (214,39) -- (214,139.5)(234,39) -- (234,139.5)(254,39) -- (254,139.5)(274,39) -- (274,139.5)(294,39) -- (294,139.5)(314,39) -- (314,139.5) ; \draw  [line width=1.5]  (214,39) -- (314.5,39)(214,59) -- (314.5,59)(214,79) -- (314.5,79)(214,99) -- (314.5,99)(214,119) -- (314.5,119)(214,139) -- (314.5,139) ; \draw  [line width=1.5]   ;
			\draw  [draw opacity=0,fill=gray,fill opacity=0.3][line width=1.5]  (137,39) -- (157.5,39) -- (157.5,139.5) -- (137,139.5) -- cycle ; \draw  [line width=1.5]  (137,39) -- (137,139.5)(157,39) -- (157,139.5) ; \draw  [line width=1.5]  (137,39) -- (157.5,39)(137,59) -- (157.5,59)(137,79) -- (157.5,79)(137,99) -- (157.5,99)(137,119) -- (157.5,119)(137,139) -- (157.5,139) ; \draw  [line width=1.5]   ;
			\draw [line width=0.75]    (34,91) -- (57.6,57.39) ;
			\draw [shift={(58.75,55.75)}, rotate = 125.07] [fill={rgb, 255:red, 0; green, 0; blue, 0 }  ][line width=0.08]  [draw opacity=0] (9.6,-2.4) -- (0,0) -- (9.6,2.4) -- cycle    ;
			\draw [line width=0.75]    (57.75,56.25) -- (92.9,42) ;
			\draw [shift={(94.75,41.25)}, rotate = 157.93] [fill={rgb, 255:red, 0; green, 0; blue, 0 }  ][line width=0.08]  [draw opacity=0] (9.6,-2.4) -- (0,0) -- (9.6,2.4) -- cycle    ;
			\draw [line width=0.75]    (94.75,41.75) -- (133.79,49.84) ;
			\draw [shift={(135.75,50.25)}, rotate = 191.71] [fill={rgb, 255:red, 0; green, 0; blue, 0 }  ][line width=0.08]  [draw opacity=0] (9.6,-2.4) -- (0,0) -- (9.6,2.4) -- cycle    ;
			\draw [line width=0.75]    (34.75,90.75) -- (63.09,71.86) ;
			\draw [shift={(64.75,70.75)}, rotate = 146.31] [fill={rgb, 255:red, 0; green, 0; blue, 0 }  ][line width=0.08]  [draw opacity=0] (9.6,-2.4) -- (0,0) -- (9.6,2.4) -- cycle    ;
			\draw [line width=0.75]    (64.75,71.25) -- (96.82,62.76) ;
			\draw [shift={(98.75,62.25)}, rotate = 165.17] [fill={rgb, 255:red, 0; green, 0; blue, 0 }  ][line width=0.08]  [draw opacity=0] (9.6,-2.4) -- (0,0) -- (9.6,2.4) -- cycle    ;
			\draw [line width=0.75]    (98.75,62.25) -- (134.79,69.36) ;
			\draw [shift={(136.75,69.75)}, rotate = 191.16] [fill={rgb, 255:red, 0; green, 0; blue, 0 }  ][line width=0.08]  [draw opacity=0] (9.6,-2.4) -- (0,0) -- (9.6,2.4) -- cycle    ;
			\draw [line width=0.75]    (34,91) -- (72.25,92.66) ;
			\draw [shift={(74.25,92.75)}, rotate = 182.49] [fill={rgb, 255:red, 0; green, 0; blue, 0 }  ][line width=0.08]  [draw opacity=0] (9.6,-2.4) -- (0,0) -- (9.6,2.4) -- cycle    ;
			\draw [line width=0.75]    (72.75,92.75) -- (104.28,87.57) ;
			\draw [shift={(106.25,87.25)}, rotate = 170.68] [fill={rgb, 255:red, 0; green, 0; blue, 0 }  ][line width=0.08]  [draw opacity=0] (9.6,-2.4) -- (0,0) -- (9.6,2.4) -- cycle    ;
			\draw [line width=0.75]    (104.75,87.25) -- (135.25,87.72) ;
			\draw [shift={(137.25,87.75)}, rotate = 180.88] [fill={rgb, 255:red, 0; green, 0; blue, 0 }  ][line width=0.08]  [draw opacity=0] (9.6,-2.4) -- (0,0) -- (9.6,2.4) -- cycle    ;
			\draw    (34,91) -- (63.62,134.1) ;
			\draw [shift={(64.75,135.75)}, rotate = 235.51] [fill={rgb, 255:red, 0; green, 0; blue, 0 }  ][line width=0.08]  [draw opacity=0] (9.6,-2.4) -- (0,0) -- (9.6,2.4) -- cycle    ;
			\draw    (64.5,134.5) -- (96.77,138.5) ;
			\draw [shift={(98.75,138.75)}, rotate = 187.07] [fill={rgb, 255:red, 0; green, 0; blue, 0 }  ][line width=0.08]  [draw opacity=0] (9.6,-2.4) -- (0,0) -- (9.6,2.4) -- cycle    ;
			\draw    (98.25,138.25) -- (134.3,130.19) ;
			\draw [shift={(136.25,129.75)}, rotate = 167.39] [fill={rgb, 255:red, 0; green, 0; blue, 0 }  ][line width=0.08]  [draw opacity=0] (9.6,-2.4) -- (0,0) -- (9.6,2.4) -- cycle    ;
			\draw    (34,91) -- (68.12,115.1) ;
			\draw [shift={(69.75,116.25)}, rotate = 215.23] [fill={rgb, 255:red, 0; green, 0; blue, 0 }  ][line width=0.08]  [draw opacity=0] (9.6,-2.4) -- (0,0) -- (9.6,2.4) -- cycle    ;
			\draw    (69.25,116.25) -- (103.25,115.31) ;
			\draw [shift={(105.25,115.25)}, rotate = 178.41] [fill={rgb, 255:red, 0; green, 0; blue, 0 }  ][line width=0.08]  [draw opacity=0] (9.6,-2.4) -- (0,0) -- (9.6,2.4) -- cycle    ;
			\draw    (104.25,115.25) -- (133.79,109.15) ;
			\draw [shift={(135.75,108.75)}, rotate = 168.34] [fill={rgb, 255:red, 0; green, 0; blue, 0 }  ][line width=0.08]  [draw opacity=0] (9.6,-2.4) -- (0,0) -- (9.6,2.4) -- cycle    ;
			\draw  [line width=1.5]  (40,152) .. controls (40.06,156.67) and (42.42,158.97) .. (47.09,158.91) -- (77.31,158.51) .. controls (83.98,158.42) and (87.34,160.71) .. (87.4,165.38) .. controls (87.34,160.71) and (90.64,158.34) .. (97.31,158.25)(94.31,158.29) -- (128.84,157.84) .. controls (133.51,157.78) and (135.81,155.42) .. (135.75,150.75) ;
			\draw  [line width=1.5]  (208.5,151) .. controls (208.51,155.67) and (210.84,158) .. (215.51,157.99) -- (334.26,157.76) .. controls (340.93,157.75) and (344.27,160.07) .. (344.28,164.74) .. controls (344.27,160.07) and (347.59,157.73) .. (354.26,157.72)(351.26,157.73) -- (473.01,157.49) .. controls (477.68,157.48) and (480.01,155.15) .. (480,150.48) ;
			\draw  [draw opacity=0][line width=1.5]  (374.5,39) -- (475,39) -- (475,139.5) -- (374.5,139.5) -- cycle ; \draw  [line width=1.5]  (374.5,39) -- (374.5,139.5)(394.5,39) -- (394.5,139.5)(414.5,39) -- (414.5,139.5)(434.5,39) -- (434.5,139.5)(454.5,39) -- (454.5,139.5)(474.5,39) -- (474.5,139.5) ; \draw  [line width=1.5]  (374.5,39) -- (475,39)(374.5,59) -- (475,59)(374.5,79) -- (475,79)(374.5,99) -- (475,99)(374.5,119) -- (475,119)(374.5,139) -- (475,139) ; \draw  [line width=1.5]   ;
			\draw  [draw opacity=0.3][line width=1.5]  (495,39) -- (515.5,39) -- (515.5,139.5) -- (495,139.5) -- cycle ; \draw  [line width=1.5]  (495,39) -- (495,139.5)(515,39) -- (515,139.5) ; \draw  [line width=1.5]  (495,39) -- (515.5,39)(495,59) -- (515.5,59)(495,79) -- (515.5,79)(495,99) -- (515.5,99)(495,119) -- (515.5,119)(495,139) -- (515.5,139) ; \draw  [line width=1.5]   ;
			\draw  [dash pattern={on 4.5pt off 4.5pt}] (4.25,54.05) .. controls (4.25,41.18) and (14.68,30.75) .. (27.55,30.75) -- (148.45,30.75) .. controls (161.32,30.75) and (171.75,41.18) .. (171.75,54.05) -- (171.75,123.95) .. controls (171.75,136.82) and (161.32,147.25) .. (148.45,147.25) -- (27.55,147.25) .. controls (14.68,147.25) and (4.25,136.82) .. (4.25,123.95) -- cycle ;
			\draw  [dash pattern={on 4.5pt off 4.5pt}] (198.25,54.55) .. controls (198.25,41.68) and (208.68,31.25) .. (221.55,31.25) -- (514.95,31.25) .. controls (527.82,31.25) and (538.25,41.68) .. (538.25,54.55) -- (538.25,124.45) .. controls (538.25,137.32) and (527.82,147.75) .. (514.95,147.75) -- (221.55,147.75) .. controls (208.68,147.75) and (198.25,137.32) .. (198.25,124.45) -- cycle ;
			
			\draw (55,168.4) node [anchor=north west][inner sep=0.75pt]  [font=\Large]  {$1,2,\cdots, \ell_f$};
			\draw (336,166.9) node [anchor=north west][inner sep=0.75pt]  [font=\Large]  {$\ell_b$};
			\draw (322,82.4) node [anchor=north west][inner sep=0.75pt]  [font=\Large]  {$\times \cdots \times$};
			\draw (178.5,81.4) node [anchor=north west][inner sep=0.75pt]  [font=\Large]  {$\times$};
			\draw (11,83) node [anchor=north west][inner sep=0.75pt]  [font=\Large]  {$s/t$};
			\draw (139,44.4) node [anchor=north west][inner sep=0.75pt]  [font=\footnotesize]  {$0.3$};
			\draw (139,65.4) node [anchor=north west][inner sep=0.75pt]  [font=\footnotesize]  {$0.1$};
			\draw (139,85.4) node [anchor=north west][inner sep=0.75pt]  [font=\footnotesize]  {$0.1$};
			\draw (138.5,104.9) node [anchor=north west][inner sep=0.75pt]  [font=\footnotesize]  {$0.3$};
			\draw (139,125.4) node [anchor=north west][inner sep=0.75pt]  [font=\footnotesize]  {$0.2$};
			\draw (216.5,123.9) node [anchor=north west][inner sep=0.75pt]  [font=\footnotesize]  {$0.2$};
			\draw (236,124.4) node [anchor=north west][inner sep=0.75pt]  [font=\footnotesize]  {$0.2$};
			\draw (256,124.4) node [anchor=north west][inner sep=0.75pt]  [font=\footnotesize]  {$0.2$};
			\draw (276,124.4) node [anchor=north west][inner sep=0.75pt]  [font=\footnotesize]  {$0.2$};
			\draw (295.5,124.4) node [anchor=north west][inner sep=0.75pt]  [font=\footnotesize]  {$0.2$};
			\draw (296,84.4) node [anchor=north west][inner sep=0.75pt]  [font=\footnotesize]  {$0.5$};
			\draw (296,43.4) node [anchor=north west][inner sep=0.75pt]  [font=\footnotesize]  {$0.5$};
			\draw (236,84.4) node [anchor=north west][inner sep=0.75pt]  [font=\footnotesize]  {$0.5$};
			\draw (256,43.4) node [anchor=north west][inner sep=0.75pt]  [font=\footnotesize]  {$0.5$};
			\draw (296,103.4) node [anchor=north west][inner sep=0.75pt]  [font=\footnotesize]  {$0.3$};
			\draw (256,103.4) node [anchor=north west][inner sep=0.75pt]  [font=\footnotesize]  {$0.3$};
			\draw (216,103.4) node [anchor=north west][inner sep=0.75pt]  [font=\footnotesize]  {$0.3$};
			\draw (296,63.4) node [anchor=north west][inner sep=0.75pt]  [font=\footnotesize]  {$0.3$};
			\draw (216,63.4) node [anchor=north west][inner sep=0.75pt]  [font=\footnotesize]  {$0.3$};
			\draw (276,63.4) node [anchor=north west][inner sep=0.75pt]  [font=\footnotesize]  {$0.3$};
			\draw (377,123.9) node [anchor=north west][inner sep=0.75pt]  [font=\footnotesize]  {$0.2$};
			\draw (396.5,124.4) node [anchor=north west][inner sep=0.75pt]  [font=\footnotesize]  {$0.2$};
			\draw (416.5,124.4) node [anchor=north west][inner sep=0.75pt]  [font=\footnotesize]  {$0.2$};
			\draw (436.5,124.4) node [anchor=north west][inner sep=0.75pt]  [font=\footnotesize]  {$0.2$};
			\draw (456,124.4) node [anchor=north west][inner sep=0.75pt]  [font=\footnotesize]  {$0.2$};
			\draw (456.5,84.4) node [anchor=north west][inner sep=0.75pt]  [font=\footnotesize]  {$0.5$};
			\draw (456.5,43.4) node [anchor=north west][inner sep=0.75pt]  [font=\footnotesize]  {$0.5$};
			\draw (396.5,84.4) node [anchor=north west][inner sep=0.75pt]  [font=\footnotesize]  {$0.5$};
			\draw (416.5,43.4) node [anchor=north west][inner sep=0.75pt]  [font=\footnotesize]  {$0.5$};
			\draw (456.5,103.4) node [anchor=north west][inner sep=0.75pt]  [font=\footnotesize]  {$0.3$};
			\draw (416.5,103.4) node [anchor=north west][inner sep=0.75pt]  [font=\footnotesize]  {$0.3$};
			\draw (376.5,103.4) node [anchor=north west][inner sep=0.75pt]  [font=\footnotesize]  {$0.3$};
			\draw (456.5,63.4) node [anchor=north west][inner sep=0.75pt]  [font=\footnotesize]  {$0.3$};
			\draw (376.5,63.4) node [anchor=north west][inner sep=0.75pt]  [font=\footnotesize]  {$0.3$};
			\draw (436.5,63.4) node [anchor=north west][inner sep=0.75pt]  [font=\footnotesize]  {$0.3$};
			\draw (478,80.9) node [anchor=north west][inner sep=0.75pt]  [font=\Large]  {$\times$};
			\draw (500.5,83.4) node [anchor=north west][inner sep=0.75pt]  [font=\footnotesize]  {$\mathbf{1}$};
			\draw (517,80) node [anchor=north west][inner sep=0.75pt]  [font=\Large]  {$s/t$};
			\draw (40.5,6) node [anchor=north west][inner sep=0.75pt] [font=\LARGE]  [align=left] {random walks};
			\draw (247.5,7.5) node [anchor=north west][inner sep=0.75pt] [font=\LARGE]  [align=left] {sparse matrix-vector multiplications};
		\end{tikzpicture}
	}
	\vspace{-3mm}
	\caption{Overview of \algo}
	\label{fig:overview}
\end{figure}

Let $\vec{\mathbf{s}}^\ast$ and $\vec{\mathbf{t}}^\ast$ be the vectors after $\ell_b$ iterations in \bwd. By Eq. \eqref{eq:stvec-star}, we have $\vec{\mathbf{s}}^\ast(v)=p_{\ell_b}(v,s)$ and $\vec{\mathbf{t}}^\ast(v)=p_{\ell_b}(v,t)\ \forall{v\in V}$. As a consequence, $r^\ast_f(s,t)$ in Eq. \eqref{eq:2-parts} can be rewritten as follows:
\begin{align*}
r^\ast_f(s,t)&=\sum_{i=1}^{\ell-\ell_b}\frac{\sum_{v\in V}p_{i}(s,v)\cdot p_{\ell_b}(v,s)}{d(s)}+ \frac{\sum_{v\in V} p_{i}(t,v)\cdot p_{\ell_b}(v,t)}{d(t)} \\
&\mathrel{\phantom{=\sum_{i=1}^{\ell-\ell_b}}}\mathop{-} \frac{\sum_{v\in V} p_{i}(s,v)\cdot p_{\ell_b}(v,t)}{d(t)}-\frac{\sum_{v\in V} p_{i}(t,v)\cdot p_{\ell_b}(v,s)}{d(s)} \\
& = \sum_{i=1}^{\ell-\ell_b}{\sum_{v\in V}{\left(\big(p_{i}(s,v)-p_{i}(t,v)\big)\cdot\Big( \frac{\vec{\mathbf{s}}^\ast(v)}{d(s)}- \frac{\vec{\mathbf{t}}^\ast(v)}{d(t)}\Big)\right)}}.
\end{align*}
Observe that the above equation is identical to $q(s,t)$ in Eq. \eqref{eq:qst} by letting $\ell_f=\ell-\ell_b$, $\vec{\mathbf{s}}=\vec{\mathbf{s}}^\ast$, and $\vec{\mathbf{t}}=\vec{\mathbf{t}}^\ast$. Namely, we can directly estimate $r^\ast_f(s,t)$ by invoking \fwd with $\ell_f=\ell-\ell_b$, $\vec{\mathbf{s}}=\vec{\mathbf{s}}^\ast$, and $\vec{\mathbf{t}}=\vec{\mathbf{t}}^\ast$. By doing so, the length of random walks varies from $1$ to $\ell-\ell_b$, rather than from $\ell_b+1$ to $\ell$.

Furthermore, a vital benefit from the intermediate result $\vec{\mathbf{s}}^\ast$ and $\vec{\mathbf{t}}^\ast$ is that we can significantly prune the random walks needed in \fwd.
Specifically, recall that in Eq. \eqref{eq:etas-star}, the total number of random walks simulated in \fwd is proportional to $\psi^2$ defined in Eq. \eqref{eq:Delta-st}. In Section \ref{sec:amc}, we input $\vec{\mathbf{s}}=\vec{\mathbf{e}}_s$ and $\vec{\mathbf{t}}=\vec{\mathbf{e}}_t$ to \fwd; and thus $\psi=2\left\lceil \tfrac{\ell}{2} \right\rceil\left(\frac{1}{d(s)}+\frac{1}{d(t)}\right)$.
By contrast, \algo invokes \fwd with $\vec{\mathbf{s}}=\vec{\mathbf{s}}^\ast$ and $\vec{\mathbf{t}}=\vec{\mathbf{t}}^\ast$, whose maximum values will dwindle to small ones when increasing $\ell_b$ since they are the probabilities of length-$\ell_b$ random walks from a node visiting $s$ and $t$. Empirically, when $\ell_b$ is sufficiently large, the maximum element in $\vec{\mathbf{s}}^\ast$ or $\vec{\mathbf{t}}^\ast$ is typically less than $0.2$, and hence $\psi\approx\frac{2}{5}\left\lceil \tfrac{\ell}{2} \right\rceil\left(\frac{1}{d(s)}+\frac{1}{d(t)}\right)$, yielding at least a $96\%$ reduction in the amount of random walks in \fwd. Additionally, with $\vec{\mathbf{s}}=\vec{\mathbf{s}}^\ast$ and $\vec{\mathbf{t}}=\vec{\mathbf{t}}^\ast$ as input to \fwd, the variance $\hat{\sigma}^2$ in \fwd can be considerably narrowed, thereby facilitating the early termination of \fwd (at Line~\ref{alg:fwd-stop} in Algorithm \ref{alg:fwd}). To see this, we first recall that the random variable $Z_k$ in \fwd is the weighted sums of $\ell_f$ entries in $\vec{\mathbf{s}}$ or $\vec{\mathbf{t}}$. In \algo, entries in $\vec{\mathbf{s}}^\ast$ (resp.\ $\vec{\mathbf{t}}^\ast$) are more evenly distributed compared to $\vec{\mathbf{e}}_s$ (resp.\ $\vec{\mathbf{e}}_t$), resulting in smaller variance $\hat{\sigma}^2$.
In sum, we can tremendously expedite the computational efficiency of \fwd by picking a large $\ell_b$, which ensures the maximum values in $\vec{\mathbf{s}}^\ast$ and $\vec{\mathbf{t}}^\ast$ are small and values in them are evenly distributed.


To minimize the total cost incurred by \algo, we adopt a greedy strategy to determine $\ell_b$, i.e., the switch point between \bwd and \fwd. In a nutshell, we cease the iterative sparse matrix-vector multiplications (\bwd) and start to simulate random walks (\fwd) once the overhead of the former exceeds that of the latter. To be specific, consider $\ell_b$-th iteration in \bwd. Let $V_s$ and $V_t$ be the sets of nodes having non-zero entries in $\vec{\mathbf{s}}$ and $\vec{\mathbf{t}}$, respectively. Intuitively, the total number of operations when performing $\PM\vec{\mathbf{s}}$ and $\PM\vec{\mathbf{t}}$ in the following iteration is bounded by $\sum_{v\in V_s}{d(v)}+\sum_{v\in V_t}{d(v)}$.
On the other hand, the total number of random samples required in \fwd is currently $h(\ell-\ell_b)$. Thus, \algo greedily performs sparse matrix-vector multiplications as in \bwd until the following inequality holds.
\begin{equation}\label{eq:smm-amc}
\sum_{v\in V_s}{d(v)}+\sum_{v\in V_t}{d(v)} > h(\ell-\ell_b).
\end{equation}

\begin{algorithm}[tb]
\caption{\algo}\label{alg:bere}
\KwIn{Graph $G$, nodes $s,t$, error threshold $\epsilon$, the maximum number $\tau$ of batches in \fwd, failure probability $\delta$}
\KwOut{$r^{\prime}(s,t)$}
Calculate $\ell$ according to Eq. \eqref{eq:ell}\;
$\ell_b \gets 0$\;
$\vec{\mathbf{s}}^\ast\gets \vec{\mathbf{e}}_s$; $\vec{\mathbf{t}}^\ast\gets \vec{\mathbf{e}}_t$\;
$r_b(s,t) \gets \frac{\vec{\mathbf{s}}^\ast(s)}{d(s)}+\frac{\vec{\mathbf{t}}^\ast(t)}{d(t)}-\frac{\vec{\mathbf{s}}^\ast(t)}{d(s)}-\frac{\vec{\mathbf{t}}^\ast(s)}{d(t)}$\;
\Repeat{Eq. \eqref{eq:smm-amc} holds or $\ell_b\ge \ell$}{
$\vec{\mathbf{s}}^\ast\gets \mathbf{P}\vec{\mathbf{s}}^\ast$; $\vec{\mathbf{t}}^\ast\gets \mathbf{P}\vec{\mathbf{t}}^\ast$\;
$r_b(s,t)\gets r_b(s,t)+\frac{\vec{\mathbf{s}}^\ast(s)}{d(s)}+\frac{\vec{\mathbf{t}}^\ast(t)}{d(t)}-\frac{\vec{\mathbf{s}}^\ast(t)}{d(s)}-\frac{\vec{\mathbf{t}}^\ast(s)}{d(t)}$\;
$\ell_b\gets \ell_b+1$\;
}
$r_f(s,t)\gets\mathtt{AMC}(G,s,t,\vec{\mathbf{s}}^\ast,\vec{\mathbf{t}}^\ast,\epsilon,\ell-\ell_b,\tau,\delta)$\;
$r^{\prime}(s,t)\gets r_f(s,t)+r_b(s,t)$\;
\end{algorithm}

\subsection{Complete \texttt{GEER} Algorithm and Analysis}\label{sec:algo-als}

The complete pseudo-code of \algo is presented in Algorithm \ref{alg:bere}. Algorithm \ref{alg:bere} starts by taking as input a graph $G$, two nodes $s,t$, the error threshold $\epsilon$, the number of batches $\tau$, and the failure probability $\delta$. First, \algo calculates $\ell$ according to Eq. \eqref{eq:ell} (Line 1). We assume that $\lambda$ in Eq. \eqref{eq:ell} is obtained in a pre-processing step. Then, \algo initializes $\ell_b=0$ (Line 2) and starts to invoke \bwd (Lines 3--9). Distinct from \bwd, the iterative procedure is terminated when Eq. \eqref{eq:smm-amc} holds (Line 9) or the number $\ell_b$ of iterations performed thus far exceeds $\ell$ (i.e., $\ell_b\ge \ell$). 
After that, Algorithm \ref{alg:fwd} is invoked with parameters $G$, $s,t$, $\epsilon$, $\delta$, $\vec{\mathbf{s}}^\ast$, $\vec{\mathbf{t}}^\ast$, and $\ell_f=\ell-\ell_b$ (Line 10). Let $r_f(s,t)$ be the output of \fwd. 
Eventually, \algo computes $r^{\prime}(s,t)$ by $r^{\prime}(s,t)= r_f(s,t)+r_b(s,t)$ and return it as an $\epsilon$-approximate ER of node pair $(s,t)$ (Lines 10--11). According to Theorem \ref{lem:fwd},
$|r_f(s,t)-r^\ast_f(s,t)|\le \frac{\epsilon}{2}$ holds with probability at least $1-\delta$. With $r_b(s,t)=r^\ast_b(s,t)$ as defined in Eq. \eqref{eq:2-parts}, we derive
\begin{equation*}
|r^{\prime}(s,t)-r_{\ell}(s,t)|=|r_f(s,t)+r_b(s,t)-r^\ast_f(s,t)-r^\ast_b(s,t)|\le \frac{\epsilon}{2}.
\end{equation*}
Plugging it into $|r(s,t)-r_{\ell}(s,t)|\le \tfrac{\epsilon}{2}$ leads to Ineq. \eqref{eq:epsilon}, indicating that $r^{\prime}(s,t)$ returned by Algorithm \ref{alg:bere} is an $\epsilon$-approximation of $r(s,t)$.

The invocation of Algorithm \ref{alg:fwd} at Line 10 requires $O(h(\ell-\ell_b)\cdot (\ell-\ell_b))$ time. 
Notice that we ensure the cost of each iteration in Lines 6--7 not exceed that of using \fwd by Eq. \eqref{eq:smm-amc} (Line 9). Consequently, the overall time complexity of \algo is bounded by Eq. \eqref{eq:time-amc} as well.

\section{Experiments}\label{sec:exp}
\revise{This section experimentally evaluates the efficiency and accuracy of
our proposed algorithms for answering $\epsilon$-approximate PER queries.}

\subsection{Experimental Setup}
\vspace{-1mm}
\header
{\bf Datasets, Query sets, and Groundtruth.}
To evaluate both the efficiency and accuracy of our proposed \fwd and \algo, we conduct experiments on 6 real datasets of various sizes, as described in Table \ref{tab:dataset}. All datasets are collected from SNAP\footnote{\url{https://snap.stanford.edu}} and used as benchmark datasets in prior work \cite{pankdd21}. For each dataset, we pick 100 node pairs uniformly at random as the random query set and randomly select 100 edges out of edge set $E$ as the edge query set. The ground-truth ER values for these query node pairs are obtained by applying \bwd with $1000$ iterations in parallel ($\epsilon$ is roughly $10^{-8}$--$10^{-6}$).

\header
{\bf Implementation Details.}
All experiments are conducted on a Linux machine with an Intel Xeon(R) Gold 6240@2.60GHz 32-core processor and 377GB of RAM. None of the experiments need anywhere near all the memory, nor any of the parallelism. Each experiment loaded the graph used into memory before beginning
any timings. The eigenvalues $\lambda_2$ and $\lambda_n$ of each tested dataset are approximated via \texttt{ARPACK}\footnote{\url{http://www.caam.rice.edu/software/ARPACK}} \cite{lehoucq1998arpack}.
For a fair comparison, all tested algorithms are implemented in C++ and compiled by g++ 7.5 with $\mathtt{-O3}$ optimization. For reproducibility, the source code is available at: {\url{https://github.com/AnryYang/GEER}}.

\begin{table}
\centering
\caption{Statistics of Datasets}
\label{tab:dataset}
\vspace{-3mm}
\begin{tabular}{lrrc}
\toprule
\bf Name  & \#nodes ($n$) & \#edges ($m$) & avg. degree\\
\midrule
Facebook \cite{mcauley2012learning} &  4,039  & 88,234  & 43.69 \\
DBLP \cite{yang2012defining} & 317,080 & 1,049,866 & 6.62 \\
YouTube \cite{yang2012defining}  & 1,134,890 & 2,987,624 & 5.27 \\ 
Orkut \cite{yang2012defining} & 3,072,441 & 117,185,082 & 76.28 \\
LiveJournal \cite{yang2012defining} & 3,997,962 & 34,681,189 & 17.35 \\
Friendster \cite{yang2012defining} & 65,608,366 & 1,806,067,135 & 55.06 \\
\bottomrule
\end{tabular}
\vspace{3mm}
\end{table}

\header
{\bf Competitors and Parameters.} 
For random queries, we compare \fwd and \algo with \bwd, the state-of-the-art solutions $\mathtt{TP}$ and $\mathtt{TPC}$ discussed in Section \ref{sec:mcpt}, \revise{as well as the random projection method $\mathtt{RP}$ \cite{spielman2011graph}. We also include the $\mathtt{EXACT}$ method for comparison that requires computing the pseudo-inverse of matrix $\mathbf{D}-\mathbf{A}$ in Eq. \eqref{eq:er}.}
For edge queries, i.e., $(s,t)\in E$, we evaluate \fwd and \algo against \bwd, and \revise{two baseline solutions $\mathtt{HAY}$ \cite{hayashi2016efficient} and $\mathtt{MC2}$ \cite{pankdd21}, which are specially designed for answering edge queries}. We set the input parameter $\ell_b$ of \bwd according to Eq.~\eqref{eq:ell}.

For all randomized algorithms (i.e., \fwd, \algo, $\mathtt{TP}$, $\mathtt{TPC}$, and $\mathtt{MC2}$), we set failure probability $\delta=0.01$. \revise{Since the input parameters $\beta_i$ of $\mathtt{TPC}$ are unknown, we adopt the heuristic settings suggested in \cite{pankdd21}. Note that these settings do not ensure the returned value of $\mathtt{TPC}$ is an $\epsilon$-approximate PER.} Unless otherwise specified, the number of batches $\tau$ in \fwd and \algo is set to $5$. We report the average query times (measured in wall-clock time) and the actual average absolute error of each algorithm on each dataset with various $\epsilon$ in $\{0.01,0.02,0.05,0.1,0.2,0.5\}$. We exclude a method if it fails to report the result for each query within one day.

\begin{figure}[!t]
\centering
\begin{small}
\begin{tikzpicture}[every mark/.append style={mark size=3pt}]
    \begin{customlegend}[legend columns=7,
        legend entries={\algo,\fwd,\bwd,$\mathtt{TP}$,$\mathtt{TPC}$,$\mathtt{RP}$,$\mathtt{EXACT}$},
        legend style={at={(0.5,1.05)},anchor=north,draw=none,font=\small,column sep=0.2cm}]
    \addlegendimage{line width=0.2mm,mark=square,color=my_blue}
    \addlegendimage{line width=0.2mm,mark=o,color=my_cyan}
    \addlegendimage{line width=0.2mm,mark=triangle,color=my_violet}
    \addlegendimage{line width=0.2mm,mark=pentagon,color=myblue}
        \addlegendimage{line width=0.2mm,mark=x,color=my_purple}
        \addlegendimage{line width=0.2mm,mark=10-pointed star,color=my_yellow}
        \addlegendimage{line width=0.2mm,mark=diamond,color=my_pp}
    \end{customlegend}
\end{tikzpicture}
\\[-\lineskip]

\vspace{-3mm}
\hspace{-1mm}
\subfloat[Facebook]{
\begin{tikzpicture}[scale=1,every mark/.append style={mark size=2.5pt}]
    \begin{axis}[
        height=\columnwidth/3.4,
        width=\columnwidth/2.6,
        ylabel={\em running time} (ms),
        xmin=0.5, xmax=6.5,
        ymin=1, ymax=100000000,
        xtick={1,2,3,4,5,6},
        xticklabel style = {font=\footnotesize},
        yticklabel style = {font=\footnotesize},
        xticklabels={0.5,0.2,0.1,0.05,0.02,0.01},
        ymode=log,
        ytick={1,10,100,1000,10000,100000,1000000,10000000,100000000},
        log basis y={10},
        every axis y label/.style={font=\small,at={{(0.28,1.0)}},right=10mm,above=0mm},
        y label style = {font=\footnotesize},
        xlabel near ticks,
        x label style = {font=\tiny},
        legend style={fill=none,font=\tiny,at={(0.02,0.99)},anchor=north west,draw=none},
        x tick label style={rotate=45,anchor=east},
    ]
    \addplot[line width=0.2mm,mark=square,color=my_blue] 
        plot coordinates {
(1,	0.658922	)
(2,	5.87016	)
(3,	30.379	)
(4,	60.1643	)
(5,	152.78	)
(6, 280.519   )
    };

    \addplot[line width=0.2mm,mark=o,color=my_cyan]  
        plot coordinates {
(1,	4.56631	)
(2,	35.753	)
(3,	150.99	)
(4,	569.064	)
(5,	4531.25	)
(6, 19213.5 )
    };

    \addplot[line width=0.2mm,mark=pentagon,color=myblue]  
        plot coordinates {
(1,	233026	)
(2,	3.21754e+06	)
(3,	2.23066e+07	)
    };
    
    \addplot[line width=0.2mm,mark=triangle,color=my_violet]  
        plot coordinates {
(1,	392.83	)
(2,	636.362	)
(3,	837.567	)
(4,	1055.03	)
(5,	1311.88	)
(6, 1521.94   )
    };

    \addplot[line width=0.2mm,mark=x,color=my_purple]  
        plot coordinates {
(1,	50041.2	)
(2,	545628	)
(3,	3.1553e+06	)
(4,	1.67161e+07	)
(5,	0	)
(6, 0   )
    };

    \addplot[line width=0.2mm,mark=diamond,color=my_pp]  
        plot coordinates {
(1,	829291.119194	)
(2,	829291.119194	)
(3,	829291.119194	)
(4,	829291.119194	)
(5,	829291.119194	)
(6, 829291.119194   )
    };

    \addplot[line width=0.2mm,mark=10-pointed star,color=my_yellow]  
        plot coordinates {
(1,	6523.119354	)
(2,	33789.212799	)
(3,	149878.105545	)
(4,	639960.897636	)
(5,	0	)
(6, 0   )
    };
    \end{axis}
\end{tikzpicture}\hspace{2mm}\label{fig:time-fb}%
}%
\hspace{0mm}\subfloat[DBLP]{
\begin{tikzpicture}[scale=1,every mark/.append style={mark size=2.5pt}]
    \begin{axis}[
        height=\columnwidth/3.4,
        width=\columnwidth/2.6,
        ylabel={\em running time} (ms),
        xmin=0.5, xmax=6.5,
        ymin=10, ymax=100000000,
        xtick={1,2,3,4,5,6},
        xticklabel style = {font=\footnotesize},
        yticklabel style = {font=\footnotesize},
        xticklabels={0.5,0.2,0.1,0.05,0.02,0.01},
        ytick={10,100,1000,10000,100000,1000000,10000000,100000000},
        ymode=log,
        log basis y={10},
        every axis y label/.style={font=\small,at={{(0.28,1.0)}},right=10mm,above=0mm},
        y label style = {font=\footnotesize},
        xlabel near ticks,
        x label style = {font=\tiny},
        legend style={fill=none,font=\tiny,at={(0.02,0.99)},anchor=north west,draw=none},
        x tick label style={rotate=45,anchor=east},
    ]
    \addplot[line width=0.2mm,mark=square,color=my_blue]  
        plot coordinates {
(1,	15.3166	)
(2,	67.8221	)
(3,	195.061	)
(4,	1155.26	)
(5,	4666.67	)
(6, 8014.29   )
    };

    \addplot[line width=0.2mm,mark=o,color=my_cyan]  
        plot coordinates {
(1,	39.2685	)
(2,	434.14	)
(3,	1438.96	)
(4,	8965.35	)
(5,	65537.1	)
(6, 347505   )
    };
    
    \addplot[line width=0.2mm,mark=pentagon,color=myblue]  
        plot coordinates {
(1,	630383	)
(2,	1.00033e+07	)
    };
    
    \addplot[line width=0.2mm,mark=triangle,color=my_violet]  
        plot coordinates {
(1,	40157.3	)
(2,	55433.4	)
(3,	63089	)
(4,	81107.8	)
(5,	88736	)
(6, 102343   )
    };

    \addplot[line width=0.2mm,mark=x,color=my_purple]  
        plot coordinates {
(1,	44898.3	)
(2,	504279	)
(3,	2.90708e+06	)
(4,	1.55338e+07	)
(5,	0	)
(6, 0   )
    };
    
    \addplot[line width=0.2mm,mark=diamond,color=my_pp]  
        plot coordinates {
(1,	0	)
(2,	0	)
(3,	0	)
(4,	0	)
(5,	0	)
(6, 0   )
    };

    \addplot[line width=0.2mm,mark=10-pointed star,color=my_yellow]  
        plot coordinates {
(1,	289102.994347	)
(2,	1400361.632156	)
(3,	0	)
(4,	0	)
(5,	0	)
(6, 0   )
    };

    \end{axis}
\end{tikzpicture}\hspace{1mm}\label{fig:time-dblp}%
}%
\vspace{-3mm}
\hspace{-1mm}\subfloat[Youtube]{
\begin{tikzpicture}[scale=1,every mark/.append style={mark size=2.5pt}]
    \begin{axis}[
        height=\columnwidth/3.4,
        width=\columnwidth/2.6,
        ylabel={\em running time} (ms),
        xmin=0.5, xmax=6.5,
        ymin=1, ymax=100000000,
        xtick={1,2,3,4,5,6},
        xticklabel style = {font=\footnotesize},
        yticklabel style = {font=\footnotesize},
        xticklabels={0.5,0.2,0.1,0.05,0.02,0.01},
        ymode=log,
        ytick={1,10,100,1000,10000,100000,1000000,10000000,100000000},
        log basis y={10},
        every axis y label/.style={font=\small,at={{(0.28,1.0)}},right=10mm,above=0mm},
        y label style = {font=\footnotesize},
        xlabel near ticks,
        x label style = {font=\tiny},
        legend style={fill=none,font=\tiny,at={(0.02,0.99)},anchor=north west,draw=none},
        x tick label style={rotate=45,anchor=east},
    ]
    \addplot[line width=0.2mm,mark=square,color=my_blue]  
        plot coordinates {
(1,	4.0431	)
(2,	21.0494	)
(3,	53.1979	)
(4,	288.417	)
(5,	1609.34	)
(6, 3549.26   )
    };

    \addplot[line width=0.2mm,mark=o,color=my_cyan]  
        plot coordinates {
(1,	16.2291	)
(2,	70.8329	)
(3,	396.569	)
(4,	2391.36	)
(5,	24052.7	)
(6, 133144   )
    };

    \addplot[line width=0.2mm,mark=pentagon,color=myblue]  
        plot coordinates {
(1,	240293	)
(2,	3.62576e+06	)
(3,	2.36424e+07	)
    };
    
    \addplot[line width=0.2mm,mark=triangle,color=my_violet]  
        plot coordinates {
(1,	26122.3	)
(2,	34288.7	)
(3,	40878.7	)
(4,	47314.1	)
(5,	56301.2	)
(6, 62654.7   )
    };

    \addplot[line width=0.2mm,mark=x,color=my_purple]  
        plot coordinates {
(1,	49517.8	)
(2,	532873	)
(3,	3.06365e+06	)
(4,	1.64146e+07	)
(5,	0	)
(6, 0   )
    };

    \addplot[line width=0.2mm,mark=diamond,color=my_pp]  
        plot coordinates {
(1,	0	)
(2,	0	)
(3,	0	)
(4,	0	)
(5,	0	)
(6, 0   )
    };

    \addplot[line width=0.2mm,mark=10-pointed star,color=my_yellow]  
        plot coordinates {
(1,	1778836.291885	)
(2,	23418889.015961	)
(3,	0	)
(4,	0	)
(5,	0	)
(6, 0   )
    };

    \end{axis}
\end{tikzpicture}\hspace{2mm}\label{fig:time-ytb}%
}%

\subfloat[Orkut]{
\begin{tikzpicture}[scale=1,every mark/.append style={mark size=2.5pt}]
    \begin{axis}[
        height=\columnwidth/3.4,
        width=\columnwidth/2.6,
        ylabel={\em running time} (ms),
        xmin=0.5, xmax=6.5,
        ymin=1, ymax=100000000,
        xtick={1,2,3,4,5,6},
        xticklabel style = {font=\footnotesize},
        yticklabel style = {font=\footnotesize},
        xticklabels={0.5,0.2,0.1,0.05,0.02,0.01},
        ytick={1,10,100,1000,10000,100000,1000000,10000000,100000000},
        ymode=log,
        log basis y={10},
        every axis y label/.style={font=\small,at={{(0.28,1.0)}},right=10mm,above=0mm},
        y label style = {font=\footnotesize},
        xlabel near ticks,
        x label style = {font=\tiny},
        legend style={fill=none,font=\tiny,at={(1.07, 0.7)},anchor=north west,draw=none},
        x tick label style={rotate=45,anchor=east},
    ]
    \addplot[line width=0.2mm,mark=square,color=my_blue]  
        plot coordinates {
(1,	1.34556	)
(2,	6.07452	)
(3,	47.9349	)
(4,	304.265	)
(5,	3158.24	)
(6, 8991.46   )
    };

    \addplot[line width=0.2mm,mark=o,color=my_cyan]  
        plot coordinates {
(1,	4.19	)
(2,	60.2425	)
(3,	279.972	)
(4,	1625.1	)
(5,	12179.5	)
(6, 54218.6   )
    };

    \addplot[line width=0.2mm,mark=pentagon,color=myblue]  
        plot coordinates {
(1,	2.8299e+06	)
(2,	3.97027e+07	)
    };

    \addplot[line width=0.2mm,mark=triangle,color=my_violet]  
        plot coordinates {
(1,	2.02927e+06	)
(2,	4.5298e+06	)
(3,	7.27517e+06	)
(4,	9.94211e+06	)
(5,	1.36905e+07	)
(6, 1.73915e+07   )
    };

    \addplot[line width=0.2mm,mark=x,color=my_purple]  
        plot coordinates {
(1,	40032.5	)
(2,	453136	)
(3,	2.67492e+06	)
(4,	1.38667e+07	)
(5,	0	)
(6, 0   )
    };

    \addplot[line width=0.2mm,mark=diamond,color=my_pp]  
        plot coordinates {
(1,	0	)
(2,	0	)
(3,	0	)
(4,	0	)
(5,	0	)
(6, 0   )
    };

    \addplot[line width=0.2mm,mark=10-pointed star,color=my_yellow]  
        plot coordinates {
(1,	0	)
(2,	0	)
(3,	0	)
(4,	0	)
(5,	0	)
(6, 0   )
    };

    \end{axis}
\end{tikzpicture}\hspace{1mm}\label{fig:time-ok}%
}%
\vspace{-3mm}
\hspace{-1mm}\subfloat[LiveJournal]{
\begin{tikzpicture}[scale=1,every mark/.append style={mark size=2.5pt}]
    \begin{axis}[
        height=\columnwidth/3.4,
        width=\columnwidth/2.6,
        ylabel={\em running time} (ms),
        xmin=0.5, xmax=6.5,
        ymin=10, ymax=100000000,
        xtick={1,2,3,4,5,6},
        xticklabel style = {font=\footnotesize},
        yticklabel style = {font=\footnotesize},
        xticklabels={0.5,0.2,0.1,0.05,0.02,0.01},
        ytick={10,100,1000,10000,100000,1000000,10000000,100000000},
        ymode=log,
        log basis y={10},
        every axis y label/.style={font=\small,at={{(0.28,1.0)}},right=10mm,above=0mm},
        y label style = {font=\footnotesize},
        xlabel near ticks,
        x label style = {font=\tiny},
        legend style={fill=none,font=\tiny,at={(0.02,0.99)},anchor=north west,draw=none},
        x tick label style={rotate=45,anchor=east},
    ]
    \addplot[line width=0.2mm,mark=square,color=my_blue]  
        plot coordinates {
(1,	26.6665	)
(2,	95.958	)
(3,	441.08	)
(4,	1201.82	)
(5,	10033.9	)
(6, 27367.6   )
    };

    \addplot[line width=0.2mm,mark=o,color=my_cyan]  
        plot coordinates {
(1,	91.1405	)
(2,	907.701	)
(3,	3018.06	)
(4,	12860.5	)
(5,	114951	)
(6, 531865   )
    };

    \addplot[line width=0.2mm,mark=pentagon,color=myblue]  
        plot coordinates {
(1,	1.71965e+06	)
(2,	2.32758e+07	)
    };

    \addplot[line width=0.2mm,mark=triangle,color=my_violet]  
        plot coordinates {
(1,	1.0636e+06	)
(2,	1.45992e+06	)
(3,	1.8501e+06	)
(4,	2.21778e+06	)
(5,	2.65892e+06	)
(6, 3.46289e+06   )
    };

    \addplot[line width=0.2mm,mark=x,color=my_purple]  
        plot coordinates {
(1,	55453.3	)
(2,	588374	)
(3,	3.27629e+06	)
(4,	1.74856e+07	)
(5,	0	)
(6, 0   )
    };

    \addplot[line width=0.2mm,mark=diamond,color=my_pp]  
        plot coordinates {
(1,	0	)
(2,	0	)
(3,	0	)
(4,	0	)
(5,	0	)
(6, 0   )
    };

    \addplot[line width=0.2mm,mark=10-pointed star,color=my_yellow]  
        plot coordinates {
(1,	0	)
(2,	0	)
(3,	0	)
(4,	0	)
(5,	0	)
(6, 0   )
    };

    \end{axis}
\end{tikzpicture}\hspace{2mm}\label{fig:time-lj}%
}%
\hspace{0mm}\subfloat[Friendster]{
\begin{tikzpicture}[scale=1,every mark/.append style={mark size=2.5pt}]
    \begin{axis}[
        height=\columnwidth/3.4,
        width=\columnwidth/2.6,
        ylabel={\em running time} (ms),
        xmin=0.5, xmax=6.5,
        ymin=100, ymax=100000000,
        xtick={1,2,3,4,5,6},
        xticklabel style = {font=\footnotesize},
        yticklabel style = {font=\footnotesize},
        xticklabels={0.5,0.2,0.1,0.05,0.02,0.01},
        ytick={100,1000,10000,100000,1000000,10000000,100000000},
        ymode=log,
        log basis y={10},
        every axis y label/.style={font=\small,at={{(0.28,1.0)}},right=10mm,above=0mm},
        y label style = {font=\footnotesize},
        xlabel near ticks,
        x label style = {font=\tiny},
        legend style={fill=none,font=\tiny,at={(1.07, 0.7)},anchor=north west,draw=none},
        x tick label style={rotate=45,anchor=east},
    ]
    \addplot[line width=0.2mm,mark=square,color=my_blue]  
        plot coordinates {
(1,	141.406	)
(2,	147.172	)
(3,	179.464	)
(4,	416.672	)
(5,	1324.31	)
(6, 3243.29   )
    };

    \addplot[line width=0.2mm,mark=o,color=my_cyan]  
        plot coordinates {
(1,	195.971	)
(2,	534.244	)
(3,	1602.96	)
(4,	3239.98	)
(5,	23036.6	)
(6, 123857   )
    };

    \addplot[line width=0.2mm,mark=pentagon,color=myblue]  
        plot coordinates {
(1,	4.12814e+07	)
(2,	0	)
(3,	0	)
(4,	0	)
(5,	0	)
(6, 0   )
    };
    
    \addplot[line width=0.2mm,mark=triangle,color=my_violet]  
        plot coordinates {
(1,	1	)
(2,	0	)
(3,	0	)
(4,	0	)
(5,	0	)
(6, 0   )
    };

    \addplot[line width=0.2mm,mark=x,color=my_purple]  
        plot coordinates {
(1,	107423.2	)
(2,	900665	)
(3,	4.75019e+06	)
(4,	2.67835e+07	)
(5,	0	)
(6, 0   )
    };

    \addplot[line width=0.2mm,mark=diamond,color=my_pp]  
        plot coordinates {
(1,	0	)
(2,	0	)
(3,	0	)
(4,	0	)
(5,	0	)
(6, 0   )
    };

    \addplot[line width=0.2mm,mark=10-pointed star,color=my_yellow]  
        plot coordinates {
(1,	0	)
(2,	0	)
(3,	0	)
(4,	0	)
(5,	0	)
(6, 0   )
    };

    \end{axis}
\end{tikzpicture}\label{fig:time-fri}%
}%
\end{small}
\vspace{0mm}
\caption{Running time vs. $\epsilon$ for random queries.} \label{fig:time}
\end{figure}
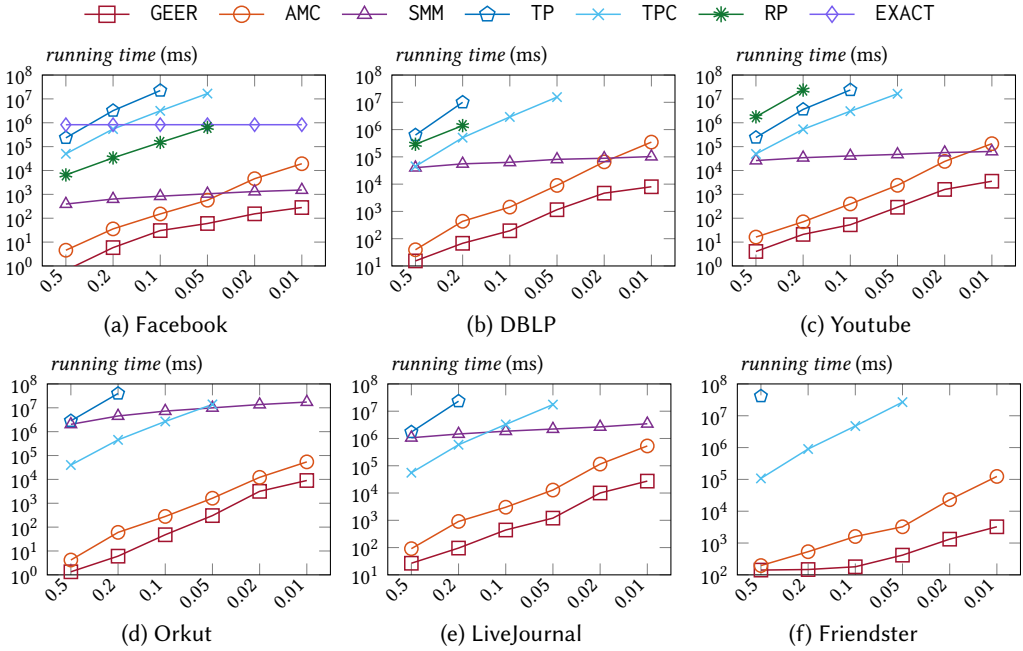

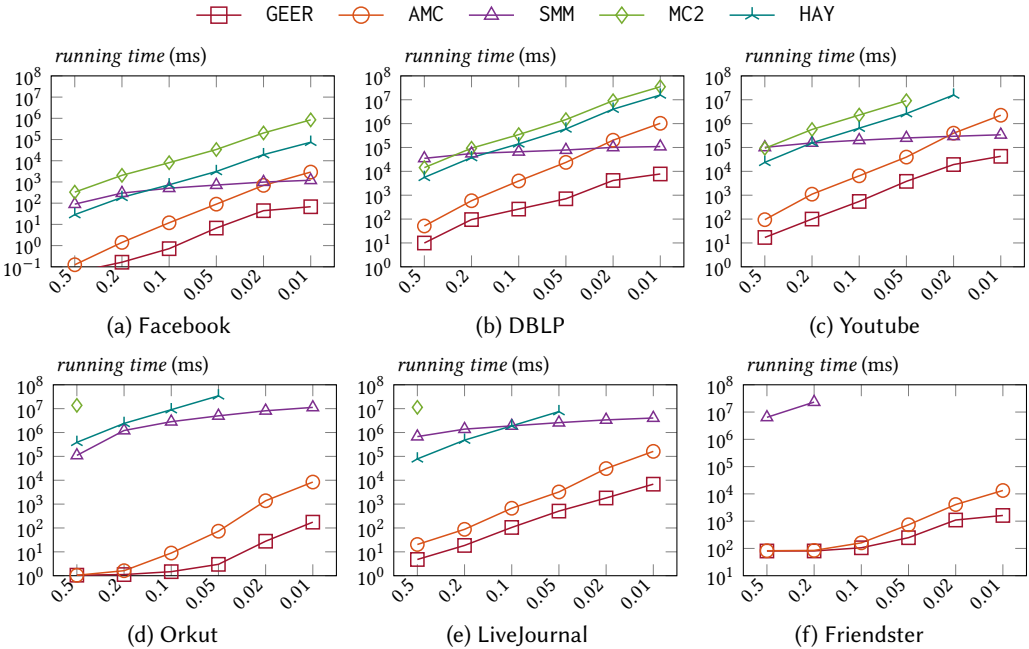
\begin{figure}[!t]
\centering
\begin{small}
\begin{tikzpicture}[every mark/.append style={mark size=3pt}]
    \begin{customlegend}[legend columns=5,
        legend entries={\algo,\fwd,\bwd,$\mathtt{MC2}$,$\mathtt{HAY}$},
        legend style={at={(0.5,1.05)},anchor=north,draw=none,font=\small,column sep=0.25cm}]
    \addlegendimage{line width=0.2mm,mark=square,color=my_blue}
    \addlegendimage{line width=0.2mm,mark=o,color=my_cyan}
    \addlegendimage{line width=0.2mm,mark=triangle,color=my_violet}
    \addlegendimage{line width=0.2mm,mark=diamond,color=my_teal}
    \addlegendimage{line width=0.2mm,mark=Mercedes star,color=my_green}
    \end{customlegend}
\end{tikzpicture}
\\[-\lineskip]
\vspace{-3mm}

\hspace{-1mm}\subfloat[Facebook]{
\begin{tikzpicture}[scale=1,every mark/.append style={mark size=2.5pt}]
    \begin{axis}[
        height=\columnwidth/3.4,
        width=\columnwidth/2.6,
        ylabel={\em running time} (ms),
        xmin=0.5, xmax=6.5,
        ymin=0.1, ymax=100000000,
        xtick={1,2,3,4,5,6},
        xticklabel style = {font=\footnotesize},
        yticklabel style = {font=\footnotesize},
        xticklabels={0.5,0.2,0.1,0.05,0.02,0.01},
        ymode=log,
        ytick={0.1,1,10,100,1000,10000,100000,1000000,10000000,100000000},
        log basis y={10},
        every axis y label/.style={font=\small,at={{(0.28,1.0)}},right=10mm,above=0mm},
        y label style = {font=\footnotesize},
        xlabel near ticks,
        x label style = {font=\tiny},
        legend style={fill=none,font=\tiny,at={(0.02,0.99)},anchor=north west,draw=none},
        x tick label style={rotate=45,anchor=east},
    ]
    \addplot[line width=0.2mm,mark=square,color=my_blue] 
        plot coordinates {
(1,	0.0549168	)
(2,	0.164552	)
(3,	0.716495	)
(4,	6.70665	)
(5,	44.2437	)
(6, 67.9332   )
    };

    \addplot[line width=0.2mm,mark=o,color=my_cyan]  
        plot coordinates {
(1,	0.125097	)
(2,	1.41088	)
(3,	11.9051	)
(4,	89.8564	)
(5,	689.921	)
(6, 2923.91   )
    };
    
    \addplot[line width=0.2mm,mark=triangle,color=my_violet]  
        plot coordinates {
(1,	90.0702	)
(2,	298.504	)
(3,	507.836	)
(4,	712.399	)
(5,	987.068	)
(6, 1202.26   )
    };

    \addplot[line width=0.2mm,mark=diamond,color=my_teal]  
        plot coordinates {
(1,	330.884	)
(2,	2051.13	)
(3,	8102.01	)
(4,	33812.3	)
(5,	203738	)
(6, 846086   )
    };

    \addplot[line width=0.2mm,mark=Mercedes star,color=my_green]  
        plot coordinates {
(1,	28.8672	)
(2,	187.628	)
(3,	736.282	)
(4,	3113.86	)
(5,	19394.6	)
(6, 74463.9   )
    };
    \end{axis}
\end{tikzpicture}\hspace{2mm}\label{fig:time-fb-edge}%
}%
\hspace{0mm}\subfloat[DBLP]{
\begin{tikzpicture}[scale=1,every mark/.append style={mark size=2.5pt}]
    \begin{axis}[
        height=\columnwidth/3.4,
        width=\columnwidth/2.6,
        ylabel={\em running time} (ms),
        xmin=0.5, xmax=6.5,
        ymin=1, ymax=100000000,
        xtick={1,2,3,4,5,6},
        xticklabel style = {font=\footnotesize},
        yticklabel style = {font=\footnotesize},
        xticklabels={0.5,0.2,0.1,0.05,0.02,0.01},
        ytick={1,10,100,1000,10000,100000,1000000,10000000,100000000},
        ymode=log,
        log basis y={10},
        every axis y label/.style={font=\small,at={{(0.28,1.0)}},right=10mm,above=0mm},
        y label style = {font=\footnotesize},
        xlabel near ticks,
        x label style = {font=\tiny},
        legend style={fill=none,font=\tiny,at={(0.02,0.99)},anchor=north west,draw=none},
        x tick label style={rotate=45,anchor=east},
    ]
    \addplot[line width=0.2mm,mark=square,color=my_blue] 
        plot coordinates {
(1,	9.93152	)
(2,	95.374	)
(3,	262.461	)
(4,	708.825	)
(5,	4143.66	)
(6, 7757.58   )
    };

    \addplot[line width=0.2mm,mark=o,color=my_cyan]  
        plot coordinates {
(1,	51.2569	)
(2,	587.517 	)
(3,	3957.14	)
(4,	23705.9	)
(5,	201658	)
(6, 1.02908e+06   )
    };
    
    \addplot[line width=0.2mm,mark=triangle,color=my_violet]  
        plot coordinates {
(1,	35048.5	)
(2,	55701.1	)
(3,	66085.4	)
(4,	78274.2	)
(5,	100945	)
(6, 109396   )
    };

    \addplot[line width=0.2mm,mark=diamond,color=my_teal]  
        plot coordinates {
(1,	14339.5	)
(2,	93356.2	)
(3,	346110	)
(4,	1.47176e+06	)
(5,	9.09159e+06	)
(6, 3.50087e+07   )
    };

    \addplot[line width=0.2mm,mark=Mercedes star,color=my_green]  
        plot coordinates {
(1,	5590.22	)
(2,	37695	)
(3,	141670	)
(4,	597815	)
(5,	4.02573e+06	)
(6, 1.57987e+07   )
    };
    \end{axis}
\end{tikzpicture}\hspace{1mm}\label{fig:time-dblp-edge}%
}%
\vspace{-3mm}
\hspace{-1mm}\subfloat[Youtube]{
\begin{tikzpicture}[scale=1,every mark/.append style={mark size=2.5pt}]
    \begin{axis}[
        height=\columnwidth/3.4,
        width=\columnwidth/2.6,
        ylabel={\em running time} (ms),
        xmin=0.5, xmax=6.5,
        ymin=1, ymax=100000000,
        xtick={1,2,3,4,5,6},
        xticklabel style = {font=\footnotesize},
        yticklabel style = {font=\footnotesize},
        xticklabels={0.5,0.2,0.1,0.05,0.02,0.01},
        ymode=log,
        ytick={1,10,100,1000,10000,100000,1000000,10000000,100000000},
        log basis y={10},
        every axis y label/.style={font=\small,at={{(0.28,1.0)}},right=10mm,above=0mm},
        y label style = {font=\footnotesize},
        xlabel near ticks,
        x label style = {font=\tiny},
        legend style={fill=none,font=\tiny,at={(0.02,0.99)},anchor=north west,draw=none},
        x tick label style={rotate=45,anchor=east},
    ]
    \addplot[line width=0.2mm,mark=square,color=my_blue] 
        plot coordinates {
(1,	16.9777	)
(2,	100.214	)
(3,	547.62	)
(4,	3806.13	)
(5,	19282.6	)
(6, 42661.6   )
    };

    \addplot[line width=0.2mm,mark=o,color=my_cyan]  
        plot coordinates {
(1,	94.4665	)
(2,	1104.99	)
(3,	6495.3	)
(4,	39546.1	)
(5,	397252	)
(6, 2.24132e+06   )
    };
    
    \addplot[line width=0.2mm,mark=triangle,color=my_violet]  
        plot coordinates {
(1,	101053	)
(2,	155823	)
(3,	200897	)
(4,	250165	)
(5,	295071	)
(6, 340352   )
    };

    \addplot[line width=0.2mm,mark=diamond,color=my_teal]  
        plot coordinates {
(1,	92372.3	)
(2,	567357	)
(3,	2.27937e+06	)
(4,	9.1224e+06	)
(5,	0	)
(6, 0   )
    };

    \addplot[line width=0.2mm,mark=Mercedes star,color=my_green]  
        plot coordinates {
(1,	23847.7	)
(2,	156469	)
(3,	644243	)
(4,	2.62309e+06	)
(5,	1.5776e+07	)
(6, 0   )
    };
    \end{axis}
\end{tikzpicture}\hspace{2mm}\label{fig:time-ytb-edge}%
}%

\subfloat[Orkut]{
\begin{tikzpicture}[scale=1,every mark/.append style={mark size=2.5pt}]
    \begin{axis}[
        height=\columnwidth/3.4,
        width=\columnwidth/2.6,
        ylabel={\em running time} (ms),
        xmin=0.5, xmax=6.5,
        ymin=1, ymax=100000000,
        xtick={1,2,3,4,5,6},
        xticklabel style = {font=\footnotesize},
        yticklabel style = {font=\footnotesize},
        xticklabels={0.5,0.2,0.1,0.05,0.02,0.01},
        ytick={1,10,100,1000,10000,100000,1000000,10000000,100000000},
        ymode=log,
        log basis y={10},
        every axis y label/.style={font=\small,at={{(0.28,1.0)}},right=10mm,above=0mm},
        y label style = {font=\footnotesize},
        xlabel near ticks,
        x label style = {font=\tiny},
        legend style={fill=none,font=\tiny,at={(1.07, 0.7)},anchor=north west,draw=none},
        x tick label style={rotate=45,anchor=east},
    ]
    \addplot[line width=0.2mm,mark=square,color=my_blue] 
        plot coordinates {
(1,	1.05344	)
(2,	1.1158 	)
(3,	1.47813	)
(4,	2.99757	)
(5,	27.8744	)
(6, 175.525   )
    };

    \addplot[line width=0.2mm,mark=o,color=my_cyan]  
        plot coordinates {
(1,	1.04184	)
(2,	1.62088	)
(3,	8.90405	)
(4,	73.1761	)
(5,	1382.86	)
(6, 8401.02   )
    };
    
    \addplot[line width=0.2mm,mark=triangle,color=my_violet]  
        plot coordinates {
(1,	109902	)
(2,	1.21411e+06	)
(3,	2.84363e+06	)
(4,	4.99321e+06	)
(5,	8.14573e+06	)
(6, 1.1285e+07   )
    };

    \addplot[line width=0.2mm,mark=diamond,color=my_teal]  
        plot coordinates {
(1,	1.37417e+07	)
(2,	0	)
(3,	0	)
(4,	0	)
(5,	0	)
(6, 0   )
    };

    \addplot[line width=0.2mm,mark=10-pointed star,color=my_red]  
        plot coordinates {
(1,	0	)
(2,	0	)
(3,	0	)
(4,	0	)
(5,	0	)
(6, 0   )
    };

    \addplot[line width=0.2mm,mark=Mercedes star,color=my_green]  
        plot coordinates {
(1,	387054	)
(2,	2.40943e+06	)
(3,	9.03337e+06	)
(4,	3.43541e+07	)
    };
    \end{axis}
\end{tikzpicture}\hspace{1mm}\label{fig:time-ok-edge}%
}%
\vspace{-3mm}
\hspace{-1mm}\subfloat[LiveJournal]{
\begin{tikzpicture}[scale=1,every mark/.append style={mark size=2.5pt}]
    \begin{axis}[
        height=\columnwidth/3.4,
        width=\columnwidth/2.6,
        ylabel={\em running time} (ms),
        xmin=0.5, xmax=6.5,
        ymin=1, ymax=100000000,
        xtick={1,2,3,4,5,6},
        xticklabel style = {font=\footnotesize},
        yticklabel style = {font=\footnotesize},
        xticklabels={0.5,0.2,0.1,0.05,0.02,0.01},
        ytick={1,10,100,1000,10000,100000,1000000,10000000,100000000},
        ymode=log,
        log basis y={10},
        every axis y label/.style={font=\small,at={{(0.28,1.0)}},right=10mm,above=0mm},
        y label style = {font=\footnotesize},
        xlabel near ticks,
        x label style = {font=\tiny},
        legend style={fill=none,font=\tiny,at={(0.02,0.99)},anchor=north west,draw=none},
        x tick label style={rotate=45,anchor=east},
    ]
    \addplot[line width=0.2mm,mark=square,color=my_blue] 
        plot coordinates {
(1,	4.78905	)
(2,	18.5469	)
(3,	105.172	)
(4,	509.644	)
(5,	1806.82	)
(6, 6845.89   )
    };

    \addplot[line width=0.2mm,mark=o,color=my_cyan]  
        plot coordinates {
(1,	20.3413	)
(2,	87.6857	)
(3,	670.292	)
(4,	3231.07	)
(5,	30827.2	)
(6, 162007   )
    };
    
    \addplot[line width=0.2mm,mark=triangle,color=my_violet]  
        plot coordinates {
(1,	675063	)
(2,	1.38579e+06	)
(3,	1.8959e+06	)
(4,	2.59014e+06	)
(5,	3.37852e+06	)
(6, 4.05086e+06   )
    };

    \addplot[line width=0.2mm,mark=diamond,color=my_teal]  
        plot coordinates {
(1,	1.13577e+07	)
(2,	0	)
(3,	0	)
(4,	0	)
(5,	0	)
(6, 0   )
    };

    \addplot[line width=0.2mm,mark=Mercedes star,color=my_green]  
        plot coordinates {
(1,	78473.7	)
(2,	478927	)
(3,	1.89333e+06	)
(4,	7.52794e+06	)
(5,	0	)
(6, 0   )
    };
    \end{axis}
\end{tikzpicture}\hspace{2mm}\label{fig:time-lj-edge}%
}%
\hspace{0mm}\subfloat[Friendster]{
\begin{tikzpicture}[scale=1,every mark/.append style={mark size=2.5pt}]
    \begin{axis}[
        height=\columnwidth/3.4,
        width=\columnwidth/2.6,
        ylabel={\em running time} (ms),
        xmin=0.5, xmax=6.5,
        ymin=10, ymax=100000000,
        xtick={1,2,3,4,5,6},
        xticklabel style = {font=\footnotesize},
        yticklabel style = {font=\footnotesize},
        xticklabels={0.5,0.2,0.1,0.05,0.02,0.01},
        ytick={10,100,1000,10000,100000,1000000,10000000,100000000},
        ymode=log,
        log basis y={10},
        every axis y label/.style={font=\small,at={{(0.28,1.0)}},right=10mm,above=0mm},
        y label style = {font=\footnotesize},
        xlabel near ticks,
        x label style = {font=\tiny},
        legend style={fill=none,font=\tiny,at={(1.07, 0.7)},anchor=north west,draw=none},
        x tick label style={rotate=45,anchor=east},
    ]
    \addplot[line width=0.2mm,mark=square,color=my_blue] 
        plot coordinates {
(1,	80.4485	)
(2,	80.7797	)
(3,	105.656	)
(4,	247.6767	)
(5,	1100.174	)
(6, 1617.357   )
    };

    \addplot[line width=0.2mm,mark=o,color=my_cyan]  
        plot coordinates {
(1,	82.2459	)
(2,	84.4861	)
(3,	160.217	)
(4,	736.445	)
(5,	4063.67	)
(6, 13316.8   )
    };
    
    \addplot[line width=0.2mm,mark=triangle,color=my_violet]  
        plot coordinates {
(1,	6.40729e+06	)
(2,	2.31961e+07	)
    };

    \addplot[line width=0.2mm,mark=diamond,color=my_teal]  
        plot coordinates {
(1,	0	)
(2,	0	)
(3,	0	)
(4,	0	)
(5,	0	)
(6, 0   )
    };

    \addplot[line width=0.2mm,mark=Mercedes star,color=my_green]  
        plot coordinates {
(1,	0	)
(2,	0	)
(3,	0	)
(4,	0	)
(5,	0	)
(6, 0   )
    };
    \end{axis}
\end{tikzpicture}\label{fig:time-fri-edge}%
}%
\end{small}
\vspace{0mm}
\caption{Running time vs. $\epsilon$ for edge queries.} \label{fig:time-edge}
\end{figure}

\subsection{Query Efficiency}\label{sec:exp-time}
Our first set of experiments compares \fwd and \algo against \bwd , $\mathtt{TP}$, and $\mathtt{TPC}$ on random queries. 
Fig. \ref{fig:time} reports the evaluation results on query efficiency (i.e., average running time) of each method on each dataset when $\epsilon$ is varied from $0.01$ to $0.5$ ($x$-axis). Note that the $y$-axis is in log-scale and the measurement unit for running time is millisecond (ms).
$\mathtt{TP}$, $\mathtt{TPC}$, and \bwd cannot terminate within one day in some cases, and, thus their results are not reported. \revise{Note that $\mathtt{EXACT}$ can only handle the smallest dataset Facebook as it incurs out-of-memory errors on larger datasets due to the colossal space needed for materializing the $n\times n$ matrix pseudo-inverse. Akin to $\mathtt{EXACT}$, $\mathtt{RP}$ runs out of memory on Orkut, LiveJournal, and Friendster datasets, where it requires constructing large dense random matrices.}
On small datasets including Facebook, DBLP, and Youtube, \fwd offers remarkable speedup over $\mathtt{TP}$ and $\mathtt{TPC}$, but is comparable or inferior to \bwd when $\epsilon\le 0.02$, whereas \algo consistently outperforms all competitors often by orders of magnitude. For instance, on Youtube, \algo is up to $37.5\times$ and $6461\times$ faster than \fwd and \bwd, respectively, and consistently achieves more than three orders of magnitude of speedup over \revise{$\mathtt{RP}$}, $\mathtt{TP}$, and $\mathtt{TPC}$.

Another observation we can make from Fig. \ref{fig:time} is that both \fwd and \algo significantly outperform \bwd, $\mathtt{TP}$, and $\mathtt{TPC}$ on all large datasets including Orkut, LiveJournal, and Friendster. 
Notably, on such graphs, the efficiency of \bwd is largely degraded on account of the highly expensive matrix-vector multiplications.
Although \fwd remarkably speeds up $\mathtt{TP}$ and $\mathtt{TPC}$ with the refined $\ell$ and its adaptive sampling scheme, it is still rather costly as it involves numerous random walks and intensive memory patterns on large graphs when $\epsilon$ is small. In comparison, \algo dominates all competitors including \fwd with considerable speedups, i.e., up to $38.2\times$ improvement than \fwd, over $126.5\times$ efficiency gain compared to \bwd, as well as more than three orders of magnitude of speedup over $\mathtt{TP}$, respectively. Particularly, on the billion-edge graph Friendster, given additive error $\epsilon=0.02$, \algo finishes an $\epsilon$-approximate PER query using only $1.3$ seconds and \fwd needs $23$ seconds on average, whereas the rest methods are unable to finish processing within one day.

Next, we experiment with edge queries. We compare our proposed \fwd and \algo with the baseline method \bwd and \revise{two dedicated solutions $\mathtt{HAY}$ \cite{hayashi2016efficient} and $\mathtt{MC2}$ \cite{pankdd21} for processing edge-based PER queries}. \revise{Both $\mathtt{HAY}$ and $\mathtt{MC2}$ rely on simulating a huge number of random walks without explicit length constraints.} As can be seen in Fig. \ref{fig:time-edge}, in all cases, \algo significantly improves over the competitors in terms of query efficiency, e.g., often more than 1000-fold improvements over \bwd, \revise{$\mathtt{HAY}$}, and $\mathtt{MC2}$, and up to $132.7\times$ speed up than \fwd. As for \fwd, it achieves the second best performance in most cases except on small graphs including Facebook, DBLP, and Youtube when $\epsilon\le 0.02$, where \bwd has comparable or superior performance to \fwd. Similar observations can be made from Fig. \ref{fig:time-fb}--\ref{fig:time-ytb}. On small graphs, the vectors in \bwd turn into dense ones more easily, causing slow growth in computational cost when further decreasing $\epsilon$. Conversely, for \fwd, the rising in query cost is more pronounced as $\epsilon$ reduces since the number of random walks needed in \fwd (Eq. \eqref{eq:etas-star}) is inversely proportional to $\epsilon^2$.

Together these results reveal that our proposed \algo obtains high superiority in empirical efficiency over all competitors especially on large graphs with high average degrees, demonstrating the effectiveness of the greedy integration scheme developed in \algo (Section \ref{sec:algo-over}).

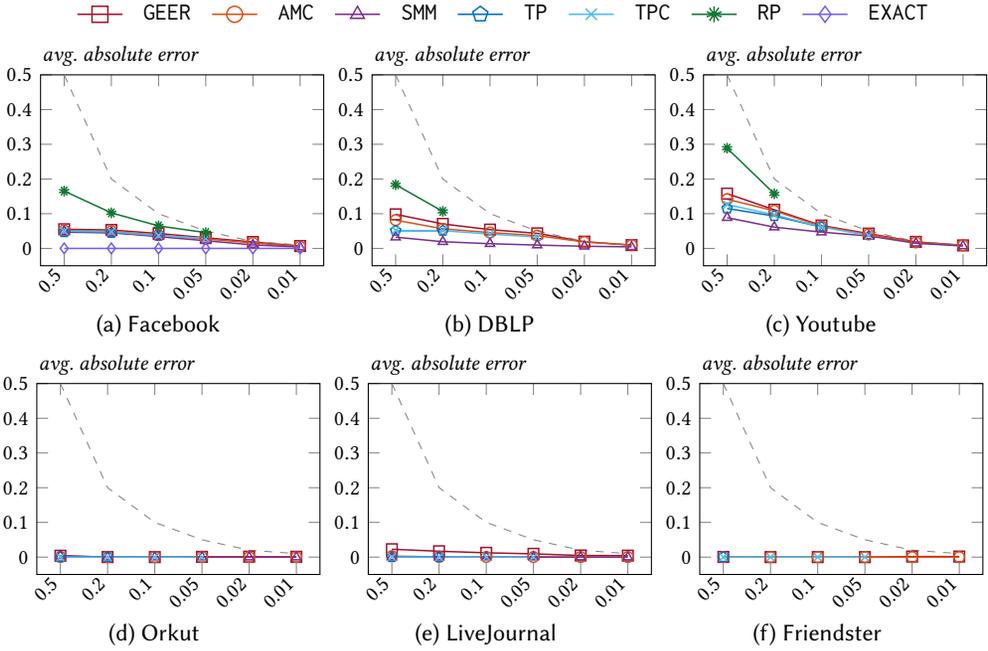
\begin{figure}[!t]
\centering
\begin{small}
\begin{tikzpicture}[every mark/.append style={mark size=3pt}]
    \begin{customlegend}[legend columns=7,
        legend entries={\algo,\fwd,\bwd,$\mathtt{TP}$,$\mathtt{TPC}$,$\mathtt{RP}$,$\mathtt{EXACT}$},
        legend style={at={(0.5,1.05)},anchor=north,draw=none,font=\small,column sep=0.2cm}]
    \addlegendimage{line width=0.2mm,mark=square,color=my_blue}
    \addlegendimage{line width=0.2mm,mark=o,color=my_cyan}
    \addlegendimage{line width=0.2mm,mark=triangle,color=my_violet}
    \addlegendimage{line width=0.2mm,mark=pentagon,color=myblue}
        \addlegendimage{line width=0.2mm,mark=x,color=my_purple}
        \addlegendimage{line width=0.2mm,mark=10-pointed star,color=my_yellow}
        \addlegendimage{line width=0.2mm,mark=diamond,color=my_pp}
    \end{customlegend}
\end{tikzpicture}
\\[-\lineskip]
\vspace{-3mm}

\hspace{-4mm}\subfloat[Facebook]{
\begin{tikzpicture}[scale=1,every mark/.append style={mark size=2pt}]
    \begin{axis}[
        height=\columnwidth/3.4,
        width=\columnwidth/2.6,
        ylabel={\em avg. absolute error},
        xmin=0.5, xmax=6.5,
        ymin=-0.05, ymax=0.5,
        xtick={1,2,3,4,5,6},
        xticklabel style = {font=\footnotesize},
        yticklabel style = {font=\footnotesize},
        xticklabels={0.5,0.2,0.1,0.05,0.02,0.01},
        ytick={0,0.1,0.2,0.3,0.4,0.5},
        every axis y label/.style={font=\small,at={{(0.28,1.0)}},right=10mm,above=0mm},
        y label style = {font=\footnotesize},
        xlabel near ticks,
        x label style = {font=\tiny},
        legend style={fill=none,font=\tiny,at={(0.02,0.99)},anchor=north west,draw=none},
        x tick label style={rotate=45,anchor=east},
    ]
    \addplot[line width=0.2mm,mark=square,color=my_blue] 
        plot coordinates {
(1,	0.0552434	)
(2,	0.0530798	)
(3,	0.0429496	)
(4,	0.0307957	)
(5,	0.0183721	)
(6, 0.0068512   )
    };

    \addplot[line width=0.2mm,mark=o,color=my_cyan]  
        plot coordinates {
(1,	0.0521722	)
(2,	0.0495644	)
(3,	0.0383074	)
(4,	0.0272321	)
(5,	0.0159106	)
(6, 0.0049741   )
    };

    \addplot[line width=0.2mm,mark=pentagon,color=myblue]  
        plot coordinates {
(1,	0.0468985	)
(2,	0.0447126	)
(3,	0.0368985	)
    };
    
    \addplot[line width=0.2mm,mark=triangle,color=my_violet]  
        plot coordinates {
(1,	0.0491676	)
(2,	0.0456835	)
(3,	0.0338135	)
(4,	0.0221459	)
(5,	0.0100852	)
(6, 0.00386164   )
    };

    \addplot[line width=0.2mm,mark=x,color=my_purple]  
        plot coordinates {
(1,	0.0501653	)
(2,	0.0476721	)
(3,	0.0372119	)
(4,	0.028957	)
    };
    
    \addplot[line width=0.1mm,color=gray,dashed]  
        plot coordinates {
(1,	0.5	)
(2,	0.2	)
(3,	0.1	)
(4,	0.05	)
(5,	0.02	)
(6, 0.01   )
    };

    \addplot[line width=0.2mm,mark=diamond,color=my_pp]  
        plot coordinates {
(1,	0	)
(2,	0	)
(3,	0	)
(4,	0	)
(5,	0	)
(6, 0   )
    };

    \addplot[line width=0.2mm,mark=10-pointed star,color=my_yellow]  
        plot coordinates {
(1,	0.165254	)
(2,	0.102439	)
(3,	0.064828	)
(4,	0.045570	)
    };

    \end{axis}
\end{tikzpicture}\hspace{0mm}\label{fig:err-fb}%
}%
\hspace{0mm}\subfloat[DBLP]{
\begin{tikzpicture}[scale=1,every mark/.append style={mark size=2pt}]
    \begin{axis}[
        height=\columnwidth/3.4,
        width=\columnwidth/2.6,
        ylabel={\em avg. absolute error},
        xmin=0.5, xmax=6.5,
        ymin=-0.05, ymax=0.5,
        xtick={1,2,3,4,5,6},
        xticklabel style = {font=\footnotesize},
        yticklabel style = {font=\footnotesize},
        xticklabels={0.5,0.2,0.1,0.05,0.02,0.01},
        ytick={0,0.1,0.2,0.3,0.4,0.5},
        every axis y label/.style={font=\small,at={{(0.28,1.0)}},right=10mm,above=0mm},
        y label style = {font=\footnotesize},
        xlabel near ticks,
        x label style = {font=\tiny},
        legend style={fill=none,font=\tiny,at={(0.02,0.99)},anchor=north west,draw=none},
        x tick label style={rotate=45,anchor=east},
    ]
    \addplot[line width=0.2mm,mark=square,color=my_blue]  
        plot coordinates {
(1,	0.0981937	)
(2,	0.0705768	)
(3,	0.0540884	)
(4,	0.043426	)
(5,	0.0194858	)
(6, 0.0095041   )
    };

    \addplot[line width=0.2mm,mark=o,color=my_cyan]  
        plot coordinates {
(1,	0.0814698	)
(2,	0.0563393	)
(3,	0.0456085	)
(4,	0.0367994	)
(5,	0.018436	)
(6, 0.00936595   )
    };
    
    \addplot[line width=0.2mm,mark=pentagon,color=myblue]  
        plot coordinates {
(1,	0.0509323	)
(2,	0.0509323	)
    };
    
    \addplot[line width=0.2mm,mark=triangle,color=my_violet]  
        plot coordinates {
(1,	0.0324137	)
(2,	0.019271	)
(3,	0.0135058	)
(4,	0.00958136	)
(5,	0.00614012	)
(6, 0.00442718   )
    };
    
    \addplot[line width=0.2mm,mark=x,color=my_purple]  
        plot coordinates {
(1,	0.0517332	)
(2,	0.051124	)
(3,	0.041377	)
(4,	0.032862	)
    };

    \addplot[line width=0.1mm,color=gray,dashed]  
        plot coordinates {
(1,	0.5	)
(2,	0.2	)
(3,	0.1	)
(4,	0.05	)
(5,	0.02	)
(6, 0.01   )
    };

    \addplot[line width=0.2mm,mark=diamond,color=my_pp]  
        plot coordinates {
    };

    \addplot[line width=0.2mm,mark=10-pointed star,color=my_yellow]  
        plot coordinates {
(1,	0.183488	)
(2,	0.106618	)
    };

    \end{axis}
\end{tikzpicture}\hspace{0mm}\label{fig:err-dblp}%
}%
\hspace{0mm}\subfloat[Youtube]{
\begin{tikzpicture}[scale=1,every mark/.append style={mark size=2pt}]
    \begin{axis}[
        height=\columnwidth/3.4,
        width=\columnwidth/2.6,
        ylabel={\em avg. absolute error},
        xmin=0.5, xmax=6.5,
        ymin=-0.05, ymax=0.5,
        xtick={1,2,3,4,5,6},
        xticklabel style = {font=\footnotesize},
        yticklabel style = {font=\footnotesize},
        xticklabels={0.5,0.2,0.1,0.05,0.02,0.01},
        ytick={0,0.1,0.2,0.3,0.4,0.5},
        every axis y label/.style={font=\small,at={{(0.28,1.0)}},right=10mm,above=0mm},
        y label style = {font=\footnotesize},
        xlabel near ticks,
        x label style = {font=\tiny},
        legend style={fill=none,font=\tiny,at={(0.02,0.99)},anchor=north west,draw=none},
        x tick label style={rotate=45,anchor=east},
    ]
    \addplot[line width=0.2mm,mark=square,color=my_blue]  
        plot coordinates {
(1,	0.157819	)
(2,	0.111401	)
(3,	0.065968	)
(4,	0.042682	)
(5,	0.01884613	)
(6, 0.008686919   )
    };

    \addplot[line width=0.2mm,mark=o,color=my_cyan]  
        plot coordinates {
(1,	0.141676	)
(2,	0.108399	)
(3,	0.063792	)
(4,	0.0403109	)
(5,	0.01726273	)
(6, 0.008631515   )
    };

    \addplot[line width=0.2mm,mark=pentagon,color=myblue]  
        plot coordinates {
(1,	0.115084	)
(2,	0.0932868	)
(3,	0.061524	)
    };
    
    \addplot[line width=0.2mm,mark=triangle,color=my_violet]  
        plot coordinates {
(1,	0.08862	)
(2,	0.0612911	)
(3,	0.0468838	)
(4,	0.0358639	)
(5,	0.0148118	)
(6, 0.0069799   )
    };

    \addplot[line width=0.2mm,mark=x,color=my_purple]  
        plot coordinates {
(1,	0.126355	)
(2,	0.096732	)
(3,	0.061721	)
(4,	0.03729	)
    };

    \addplot[line width=0.1mm,color=gray,dashed]  
        plot coordinates {
(1,	0.5	)
(2,	0.2	)
(3,	0.1	)
(4,	0.05	)
(5,	0.02	)
(6, 0.01   )
    };

    \addplot[line width=0.2mm,mark=diamond,color=my_pp]  
        plot coordinates {
    };

    \addplot[line width=0.2mm,mark=10-pointed star,color=my_yellow]  
        plot coordinates {
(1,	0.288732	)
(2,	0.15740	)
    };

    \end{axis}
\end{tikzpicture}\hspace{0mm}\label{fig:err-ytb}%
}%
\vskip -3mm

\hspace{-5mm}\subfloat[Orkut]{
\begin{tikzpicture}[scale=1,every mark/.append style={mark size=2pt}]
    \begin{axis}[
        height=\columnwidth/3.4,
        width=\columnwidth/2.6,
        ylabel={\em avg. absolute error},
        xmin=0.5, xmax=6.5,
        ymin=-0.05, ymax=0.5,
        xtick={1,2,3,4,5,6},
        xticklabel style = {font=\footnotesize},
        yticklabel style = {font=\footnotesize},
        xticklabels={0.5,0.2,0.1,0.05,0.02,0.01},
        ytick={0,0.1,0.2,0.3,0.4,0.5},
        every axis y label/.style={font=\small,at={{(0.28,1.0)}},right=10mm,above=0mm},
        y label style = {font=\footnotesize},
        xlabel near ticks,
        x label style = {font=\tiny},
        legend style={fill=none,font=\tiny,at={(1.07, 0.7)},anchor=north west,draw=none},
        x tick label style={rotate=45,anchor=east},
    ]
    \addplot[line width=0.2mm,mark=square,color=my_blue]  
        plot coordinates {
(1,	0.0039301	)
(2,	0.000152879	)
(3,	4.7967e-05	)
(4,	0.000828325	)
(5,	0.00123099	)
(6, 0.000953153   )
    };

    \addplot[line width=0.2mm,mark=o,color=my_cyan]  
        plot coordinates {
(1,	0.000666337	)
(2,	0.000211622	)
(3,	8.87151e-05	)
(4,	2.96525e-05	)
(5,	9.68774e-06	)
(6, 9.66125e-06   )
    };

    \addplot[line width=0.2mm,mark=pentagon,color=myblue]  
        plot coordinates {
(1,	3.27491e-06	)
(2,	1.95486e-06	)
    };

    \addplot[line width=0.2mm,mark=triangle,color=my_violet]  
        plot coordinates {
(1,	0.00369024	)
(2,	1.49638e-06	)
(3,	1.2971e-06	)
(4,	1.18183e-06	)
(5,	1.0557e-06	)
(6, 9.69512e-07   )
    };

    \addplot[line width=0.2mm,mark=x,color=my_purple]  
        plot coordinates {
(1,	4.18933e-06	)
(2,	2.77218e-06	)
(3,	1.97524e-06 )
(4,	1.64653e-06	)
    };

    \addplot[line width=0.1mm,color=gray,dashed]  
        plot coordinates {
(1,	0.5	)
(2,	0.2	)
(3,	0.1	)
(4,	0.05	)
(5,	0.02	)
(6, 0.01   )
    };

    \addplot[line width=0.2mm,mark=diamond,color=my_pp]  
        plot coordinates {
    };

    \addplot[line width=0.2mm,mark=10-pointed star,color=my_yellow]  
        plot coordinates {
    };

    \end{axis}
\end{tikzpicture}\hspace{0mm}\label{fig:err-ok}%
}%
\hspace{0mm}\subfloat[LiveJournal]{
\begin{tikzpicture}[scale=1,every mark/.append style={mark size=2pt}]
    \begin{axis}[
        height=\columnwidth/3.4,
        width=\columnwidth/2.6,
        ylabel={\em avg. absolute error},
        xmin=0.5, xmax=6.5,
        ymin=-0.05, ymax=0.5,
        xtick={1,2,3,4,5,6},
        xticklabel style = {font=\footnotesize},
        yticklabel style = {font=\footnotesize},
        xticklabels={0.5,0.2,0.1,0.05,0.02,0.01},
        ytick={0,0.1,0.2,0.3,0.4,0.5},
        every axis y label/.style={font=\small,at={{(0.28,1.0)}},right=10mm,above=0mm},
        y label style = {font=\footnotesize},
        xlabel near ticks,
        x label style = {font=\tiny},
        legend style={fill=none,font=\tiny,at={(0.02,0.99)},anchor=north west,draw=none},
        x tick label style={rotate=45,anchor=east},
    ]
    \addplot[line width=0.2mm,mark=square,color=my_blue]  
        plot coordinates {
(1,	0.0227113	)
(2,	0.0169304	)
(3,	0.0124488	)
(4,	0.00944245	)
(5,	0.00464042	)
(6, 0.00421879   )
    };

    \addplot[line width=0.2mm,mark=o,color=my_cyan]  
        plot coordinates {
(1,	0.00270396	)
(2,	0.000877551	)
(3,	0.000523045	)
(4,	0.000325757	)
(5,	0.000223641	)
(6, 0.000210149   )
    };

    \addplot[line width=0.2mm,mark=pentagon,color=myblue]  
        plot coordinates {
(1,	0.000246237	)
(2,	0.000256092	)
    };

    \addplot[line width=0.2mm,mark=triangle,color=my_violet]  
        plot coordinates {
(1,	0.000250742	)
(2,	7.9637e-05	)
(3,	5.27111e-05	)
(4,	3.81136e-05	)
(5,	2.70406e-05	)
(6, 2.16343e-05   )
    };

    \addplot[line width=0.2mm,mark=x,color=my_purple]  
        plot coordinates {
(1,	0.000328472	)
(2,	0.000293954	)
(3,	0.000264165	)
(4,	0.000201866	)
    };

    \addplot[line width=0.1mm,color=gray,dashed]  
        plot coordinates {
(1,	0.5	)
(2,	0.2	)
(3,	0.1	)
(4,	0.05	)
(5,	0.02	)
(6, 0.01   )
    };

    \addplot[line width=0.2mm,mark=diamond,color=my_pp]  
        plot coordinates {
    };

    \addplot[line width=0.2mm,mark=10-pointed star,color=my_yellow]  
        plot coordinates {
    };

    \end{axis}
\end{tikzpicture}\hspace{0mm}\label{fig:err-lj}%
}%
\hspace{0mm}\subfloat[Friendster]{
\begin{tikzpicture}[scale=1,every mark/.append style={mark size=2pt}]
    \begin{axis}[
        height=\columnwidth/3.4,
        width=\columnwidth/2.6,
        ylabel={\em avg. absolute error},
        xmin=0.5, xmax=6.5,
        ymin=-0.05, ymax=0.5,
        xtick={1,2,3,4,5,6},
        xticklabel style = {font=\footnotesize},
        yticklabel style = {font=\footnotesize},
        xticklabels={0.5,0.2,0.1,0.05,0.02,0.01},
        ytick={0,0.1,0.2,0.3,0.4,0.5},
        every axis y label/.style={font=\small,at={{(0.28,1.0)}},right=10mm,above=0mm},
        y label style = {font=\footnotesize},
        xlabel near ticks,
        x label style = {font=\tiny},
        legend style={fill=none,font=\tiny,at={(1.07, 0.7)},anchor=north west,draw=none},
        x tick label style={rotate=45,anchor=east},
    ]
    \addplot[line width=0.2mm,mark=square,color=my_blue]  
        plot coordinates {
(1,	0.000499829	)
(2,	7.64405e-05	)
(3,	4.12276e-05	)
(4,	2.79905e-05	)
(5,	0.00164494	)
(6, 0.00164622   )
    };

    \addplot[line width=0.2mm,mark=o,color=my_cyan]  
        plot coordinates {
(1,	0.000446938	)
(2,	0.000149109	)
(3,	2.72941e-05	)
(4,	1.76088e-05	)
(5,	1.91168e-05	)
(6, 1.05283e-05   )
    };

    \addplot[line width=0.2mm,mark=pentagon,color=myblue]  
        plot coordinates {
(1,	3.45388e-06	)
    };
    
    \addplot[line width=0.2mm,mark=triangle,color=my_violet]  
        plot coordinates {
(1,	1	)
(2,	1	)
(3,	1	)
(4,	1	)
(5,	1	)
(6, 1   )
    };

    \addplot[line width=0.2mm,mark=x,color=my_purple]  
        plot coordinates {
(1,	2.93486e-05	)
(2,	1.45777e-05	)
(3,	6.74292e-06	)
(4,	2.85227e-06	)

    };

    \addplot[line width=0.1mm,color=gray,dashed]  
        plot coordinates {
(1,	0.5	)
(2,	0.2	)
(3,	0.1	)
(4,	0.05	)
(5,	0.02	)
(6, 0.01   )
    };

    \addplot[line width=0.2mm,mark=diamond,color=my_pp]  
        plot coordinates {

    };

    \addplot[line width=0.2mm,mark=10-pointed star,color=my_yellow]  
        plot coordinates {

    };

    \end{axis}
\end{tikzpicture}\label{fig:err-fri}%
}%
\end{small}
\caption{Avg. absolute error vs. $\epsilon$ for random queries.} \label{fig:err}
\vspace{-2mm}
\end{figure}

\subsection{Query Accuracy}\label{sec:exp-acc}
Fig. \ref{fig:err} and Fig. \ref{fig:err-edge} report the actual average absolute error by each method on each dataset when varying $\epsilon$ from $0.01$ to $0.5$ for random query sets and edge query sets, respectively. Note that the $x$-axis and $y$-axis represent the given absolute error threshold $\epsilon$ and the actual average absolute error, respectively. The gray dashed line in each figure represents the boundary between successful and failed query results. That is, each data point below the line means its actual absolute error is less than the corresponding desired absolute error threshold $\epsilon$ and indicates it is a successful query result; otherwise, it is a failed query result. As observed in Fig. \ref{fig:err} and Fig. \ref{fig:err-edge}, all tested methods return accurate query results, whose actual absolute errors are less than the given error threshold $\epsilon$. More specifically, as we can see from Fig. \ref{fig:err}, most methods achieve low average absolute errors smaller than $0.1$ for random queries even given a large error threshold $\epsilon=0.5$ and the errors approach $0$ as $\epsilon$ is lowered. \revise{In particular, the random-projection-based method $\mathtt{RP}$ always produces the highest empirical errors on Facebook, DBLP, and Youtube. As for \algo and \fwd, they yield slightly larger errors compared to the competitors \bwd, $\mathtt{TP}$, and $\mathtt{TPC}$ on DBLP and Youtube graphs.} The reason is that some of the excessive efforts made in \bwd, $\mathtt{TP}$, and $\mathtt{TPC}$ (e.g., matrix-vector multiplications or random walks) may lead to a more accurate estimation of ER, while \algo and \fwd eliminate these costs as they are unnecessary. In contrast, on graphs with high average degrees, i.e., Facebook, Orkut, LiveJournal, and Friendster, \algo and \fwd consistently attain insignificant errors (around $10^{-4}$) as the competitors do. This is because ER $r(s,t)$ is inversely proportional to the degrees of nodes $s$ and $t$, and hence the ER values as well as empirical errors in these graphs tend to be of small scale.
In Fig. \ref{fig:err-edge}, observe that \algo, \fwd, \bwd, \revise{$\mathtt{HAY}$,} and $\mathtt{MC2}$ all show similarly high accuracy for edge queries on all datasets. This is expected, since (i) the node pair of an edge $(s,t)$ is likely to have multiple short connections to each other, making it easy to estimate accurately using random walks or graph traversals,
and (ii) the ER value $r(s,t)$ (i.e., the dissimilarity between $s$ and $t$) of node pair $(s,t)$ sharing an edge is usually small.

Together with the observations in Section \ref{sec:exp-time}, we can conclude that \algo and \fwd significantly improve over existing work in the practical efficiency without affecting its result quality.

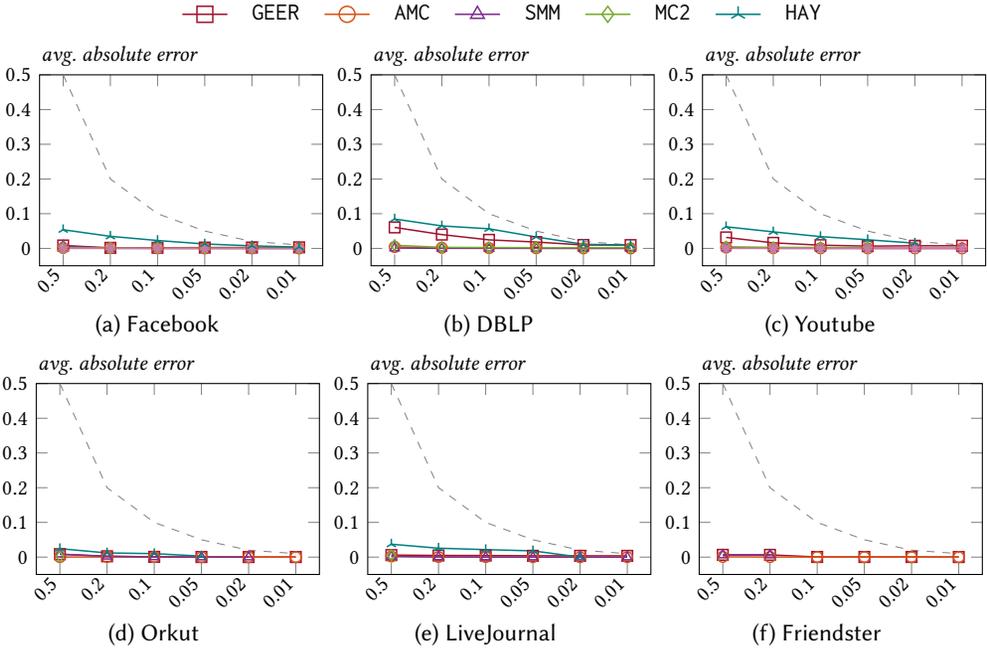
\begin{figure}[!t]
\centering
\begin{small}
\begin{tikzpicture}[every mark/.append style={mark size=3pt}]
    \begin{customlegend}[legend columns=5,
        legend entries={\algo,\fwd,\bwd,$\mathtt{MC2}$,$\mathtt{HAY}$},
        legend style={at={(0.5,1.05)},anchor=north,draw=none,font=\small,column sep=0.25cm}]
    \addlegendimage{line width=0.2mm,mark=square,color=my_blue}
    \addlegendimage{line width=0.2mm,mark=o,color=my_cyan}
    \addlegendimage{line width=0.2mm,mark=triangle,color=my_violet}
    \addlegendimage{line width=0.2mm,mark=diamond,color=my_teal}
    \addlegendimage{line width=0.2mm,mark=Mercedes star,color=my_green}
    \end{customlegend}
\end{tikzpicture}
\\[-\lineskip]
\vspace{-3mm}

\hspace{-4mm}\subfloat[Facebook]{
\begin{tikzpicture}[scale=1,every mark/.append style={mark size=2pt}]
    \begin{axis}[
        height=\columnwidth/3.4,
        width=\columnwidth/2.6,
        ylabel={\em avg. absolute error},
        xmin=0.5, xmax=6.5,
        ymin=-0.05, ymax=0.5,
        xtick={1,2,3,4,5,6},
        xticklabel style = {font=\footnotesize},
        yticklabel style = {font=\footnotesize},
        xticklabels={0.5,0.2,0.1,0.05,0.02,0.01},
        ytick={0,0.1,0.2,0.3,0.4,0.5},
        every axis y label/.style={font=\small,at={{(0.28,1.0)}},right=10mm,above=0mm},
        y label style = {font=\footnotesize},
        xlabel near ticks,
        x label style = {font=\tiny},
        legend style={fill=none,font=\tiny,at={(0.02,0.99)},anchor=north west,draw=none},
        x tick label style={rotate=45,anchor=east},
    ]
    \addplot[line width=0.2mm,mark=square,color=my_blue] 
        plot coordinates {
(1,	0.008397	)
(2,	0.00165926	)
(3,	0.00178989	)
(4,	0.00237207	)
(5,	0.00250084	)
(6, 0.00303193   )
    };

    \addplot[line width=0.2mm,mark=o,color=my_cyan]  
        plot coordinates {
(1,	0.00128875	)
(2,	0.000879996	)
(3,	0.000230558	)
(4,	0.000149246	)
(5,	4.93015e-05	)
(6, 2.27207e-05   )
    };
    
    \addplot[line width=0.2mm,mark=triangle,color=my_violet]  
        plot coordinates {
(1,	0.0070811	)
(2,	4.28956e-05	)
(3,	3.58937e-05	)
(4,	3.24856e-05	)
(5,	2.91572e-05	)
(6, 2.72486e-05   )
    };

    \addplot[line width=0.2mm,mark=diamond,color=my_teal]  
        plot coordinates {
(1,	0.00194408	)
(2,	0.000743012	)
(3,	0.000312229	)
(4,	0.000215936	)
(5,	0.000182061	)
(6, 0.000181892   )
    };

    \addplot[line width=0.1mm,color=gray,dashed]  
        plot coordinates {
(1,	0.5	)
(2,	0.2	)
(3,	0.1	)
(4,	0.05	)
(5,	0.02	)
(6, 0.01   )
    };

    \addplot[line width=0.2mm,mark=10-pointed star,color=my_red]  
        plot coordinates {
(1,	0	)
(2,	0	)
(3,	0	)
(4,	0	)
(5,	0	)
(6, 0   )
    };

    \addplot[line width=0.2mm,mark=Mercedes star,color=my_green]  
        plot coordinates {
(1,	0.0537665	)
(2,	0.034801	)
(3,	0.0227952	)
(4,	0.0127903	)
(5,	0.007396	)
(6, 0.0037677   )
    };
    \end{axis}
\end{tikzpicture}\hspace{0mm}\label{fig:err-fb-edge}%
}%
\hspace{0mm}\subfloat[DBLP]{
\begin{tikzpicture}[scale=1,every mark/.append style={mark size=2pt}]
    \begin{axis}[
        height=\columnwidth/3.4,
        width=\columnwidth/2.6,
        ylabel={\em avg. absolute error},
        xmin=0.5, xmax=6.5,
        ymin=-0.05, ymax=0.5,
        xtick={1,2,3,4,5,6},
        xticklabel style = {font=\footnotesize},
        yticklabel style = {font=\footnotesize},
        xticklabels={0.5,0.2,0.1,0.05,0.02,0.01},
        ytick={0,0.1,0.2,0.3,0.4,0.5},
        every axis y label/.style={font=\small,at={{(0.28,1.0)}},right=10mm,above=0mm},
        y label style = {font=\footnotesize},
        xlabel near ticks,
        x label style = {font=\tiny},
        legend style={fill=none,font=\tiny,at={(0.02,0.99)},anchor=north west,draw=none},
        x tick label style={rotate=45,anchor=east},
    ]
    \addplot[line width=0.2mm,mark=square,color=my_blue] 
        plot coordinates {
(1,	0.0605024	)
(2,	0.0398351	)
(3,	0.0245901	)
(4,	0.0183509	)
(5,	0.00987383	)
(6, 0.0095032   )
    };

    \addplot[line width=0.2mm,mark=o,color=my_cyan]  
        plot coordinates {
(1,	0.00410995	)
(2,	0.00177331	)
(3,	0.000900959	)
(4,	0.00046207	)
(5,	0.000193087	)
(6, 0.000126233   )
    };
    
    \addplot[line width=0.2mm,mark=triangle,color=my_violet]  
        plot coordinates {
(1,	0.000966007	)
(2,	0.000246597	)
(3,	0.000116973	)
(4,	6.27416e-05	)
(5,	3.17838e-05	)
(6, 2.08197e-05   )
    };

    \addplot[line width=0.2mm,mark=diamond,color=my_teal]  
        plot coordinates {
(1,	0.00917748	)
(2,	0.00298073	)
(3,	0.00252822	)
(4,	0.00195581	)
(5,	0.00203062	)
(6, 0.00203406   )
    };

    \addplot[line width=0.1mm,color=gray,dashed]  
        plot coordinates {
(1,	0.5	)
(2,	0.2	)
(3,	0.1	)
(4,	0.05	)
(5,	0.02	)
(6, 0.01   )
    };

    \addplot[line width=0.2mm,mark=Mercedes star,color=my_green]  
        plot coordinates {
(1,	0.0846865	)
(2,	0.0648151	)
(3,	0.0565335	)
(4,	0.0329135	)
(5,	0.011891	)
(6, 0.0095872   )
    };
    \end{axis}
\end{tikzpicture}\hspace{0mm}\label{fig:err-dblp-edge}%
}%
\hspace{0mm}\subfloat[Youtube]{
\begin{tikzpicture}[scale=1,every mark/.append style={mark size=2pt}]
    \begin{axis}[
        height=\columnwidth/3.4,
        width=\columnwidth/2.6,
        ylabel={\em avg. absolute error},
        xmin=0.5, xmax=6.5,
        ymin=-0.05, ymax=0.5,
        xtick={1,2,3,4,5,6},
        xticklabel style = {font=\footnotesize},
        yticklabel style = {font=\footnotesize},
        xticklabels={0.5,0.2,0.1,0.05,0.02,0.01},
        ytick={0,0.1,0.2,0.3,0.4,0.5},
        every axis y label/.style={font=\small,at={{(0.28,1.0)}},right=10mm,above=0mm},
        y label style = {font=\footnotesize},
        xlabel near ticks,
        x label style = {font=\tiny},
        legend style={fill=none,font=\tiny,at={(0.02,0.99)},anchor=north west,draw=none},
        x tick label style={rotate=45,anchor=east},
    ]
    \addplot[line width=0.2mm,mark=square,color=my_blue] 
        plot coordinates {
(1,	0.0316945	)
(2,	0.01601095	)
(3,	0.00933864	)
(4,	0.00659208	)
(5,	0.00754798	)
(6, 0.00765881   )
    };

    \addplot[line width=0.2mm,mark=o,color=my_cyan]  
        plot coordinates {
(1,	0.00254203	)
(2,	0.000636188	)
(3,	0.000336538	)
(4,	0.000238191	)
(5,	9.92478e-05	)
(6, 6.07917e-05   )
    };
    
    \addplot[line width=0.2mm,mark=triangle,color=my_violet]  
        plot coordinates {
(1,	0.00159251	)
(2,	0.000143567	)
(3,	5.11988e-05	)
(4,	1.88043e-05	)
(5,	5.63501e-06	)
(6, 2.13464e-06   )
    };

    \addplot[line width=0.2mm,mark=diamond,color=my_teal]  
        plot coordinates {
(1,	0.00459504	)
(2,	0.00304836	)
(3,	0.00258652	)
(4,	0.00234078	)
    };

    \addplot[line width=0.1mm,color=gray,dashed]  
        plot coordinates {
(1,	0.5	)
(2,	0.2	)
(3,	0.1	)
(4,	0.05	)
(5,	0.02	)
(6, 0.01   )
    };

    \addplot[line width=0.2mm,mark=10-pointed star,color=my_red]  
        plot coordinates {
(1,	0	)
(2,	0	)
(3,	0	)
(4,	0	)
(5,	0	)
(6, 0   )
    };

    \addplot[line width=0.2mm,mark=Mercedes star,color=my_green]  
        plot coordinates {
(1,	0.0622898	)
(2,	0.0476646	)
(3,	0.0339993	)
(4,	0.0248955	)
(5,	0.0151608	)
    };
    \end{axis}
\end{tikzpicture}\hspace{0mm}\label{fig:err-ytb-edge}%
}%
\vskip -3mm

\hspace{-4mm}\subfloat[Orkut]{
\begin{tikzpicture}[scale=1,every mark/.append style={mark size=2pt}]
    \begin{axis}[
        height=\columnwidth/3.4,
        width=\columnwidth/2.6,
        ylabel={\em avg. absolute error},
        xmin=0.5, xmax=6.5,
        ymin=-0.05, ymax=0.5,
        xtick={1,2,3,4,5,6},
        xticklabel style = {font=\footnotesize},
        yticklabel style = {font=\footnotesize},
        xticklabels={0.5,0.2,0.1,0.05,0.02,0.01},
        ytick={0,0.1,0.2,0.3,0.4,0.5},
        every axis y label/.style={font=\small,at={{(0.28,1.0)}},right=10mm,above=0mm},
        y label style = {font=\footnotesize},
        xlabel near ticks,
        x label style = {font=\tiny},
        legend style={fill=none,font=\tiny,at={(1.07, 0.7)},anchor=north west,draw=none},
        x tick label style={rotate=45,anchor=east},
    ]
    \addplot[line width=0.2mm,mark=square,color=my_blue] 
        plot coordinates {
(1,	0.0084728	)
(2,	0.00246647	)
(3,	0.000754122	)
(4,	0.000316425	)
(5,	0.00020295	)
(6, 0.000255957   )
    };

    \addplot[line width=0.2mm,mark=o,color=my_cyan]  
        plot coordinates {
(1,	0.000222513	)
(2,	0.000356938	)
(3,	0.00026818	)
(4,	2.9313e-05	)
(5,	1.53668e-05	)
(6, 6.22219e-06   )
    };
    
    \addplot[line width=0.2mm,mark=triangle,color=my_violet]  
        plot coordinates {
(1,	0.00830264	)
(2,	0.00230645	)
(3,	0.000506044	)
(4,	0.000101199	)
(5,	7.30835e-09	)
    };

    \addplot[line width=0.2mm,mark=diamond,color=my_teal]  
        plot coordinates {
(1,	0.00230544	)
    };
    
    \addplot[line width=0.1mm,color=gray,dashed]  
        plot coordinates {
(1,	0.5	)
(2,	0.2	)
(3,	0.1	)
(4,	0.05	)
(5,	0.02	)
(6, 0.01   )
    };

    \addplot[line width=0.2mm,mark=Mercedes star,color=my_green]  
        plot coordinates {
(1,	0.0243344	)
(2,	0.0120167	)
(3,	0.0100269	)
(4,	0.00286349	)
    };
    \end{axis}
\end{tikzpicture}\hspace{0mm}\label{fig:err-ok-edge}%
}%
\hspace{0mm}\subfloat[LiveJournal]{
\begin{tikzpicture}[scale=1,every mark/.append style={mark size=2pt}]
    \begin{axis}[
        height=\columnwidth/3.4,
        width=\columnwidth/2.6,
        ylabel={\em avg. absolute error},
        xmin=0.5, xmax=6.5,
        ymin=-0.05, ymax=0.5,
        xtick={1,2,3,4,5,6},
        xticklabel style = {font=\footnotesize},
        yticklabel style = {font=\footnotesize},
        xticklabels={0.5,0.2,0.1,0.05,0.02,0.01},
        ytick={0,0.1,0.2,0.3,0.4,0.5},
        every axis y label/.style={font=\small,at={{(0.28,1.0)}},right=10mm,above=0mm},
        y label style = {font=\footnotesize},
        xlabel near ticks,
        x label style = {font=\tiny},
        legend style={fill=none,font=\tiny,at={(0.02,0.99)},anchor=north west,draw=none},
        x tick label style={rotate=45,anchor=east},
    ]
    \addplot[line width=0.2mm,mark=square,color=my_blue] 
        plot coordinates {
(1,	0.00584884	)
(2,	0.00478515	)
(3,	0.00451454	)
(4,	0.00382201	)
(5,	0.00366193	)
(6, 0.00360918   )
    };

    \addplot[line width=0.2mm,mark=o,color=my_cyan]  
        plot coordinates {
(1,	0.0019851	)
(2,	0.000585142	)
(3,	0.00026061	)
(4,	7.1861e-05	)
(5,	3.14095e-05	)
(6, 2.48982e-05   )
    };
    
    \addplot[line width=0.2mm,mark=triangle,color=my_violet]  
        plot coordinates {
(1,	0.00189284	)
(2,	0.000532775	)
(3,	0.000145172	)
(4,	1.11194e-05	)
(5,	6.26802e-06	)
(6, 4.43345e-06   )
    };

    \addplot[line width=0.2mm,mark=diamond,color=my_teal]  
        plot coordinates {
(1,	0.00514261	)
    };

    \addplot[line width=0.1mm,color=gray,dashed]  
        plot coordinates {
(1,	0.5	)
(2,	0.2	)
(3,	0.1	)
(4,	0.05	)
(5,	0.02	)
(6, 0.01   )
    };


    \addplot[line width=0.2mm,mark=Mercedes star,color=my_green]  
        plot coordinates {
(1,	0.0372022	)
(2,	0.0256308	)
(3,	0.0215973	)
(4,	0.0180124	)
(5,	0	)
    };
    \end{axis}
\end{tikzpicture}\hspace{0mm}\label{fig:err-lj-edge}%
}%
\hspace{0mm}\subfloat[Friendster]{
\begin{tikzpicture}[scale=1,every mark/.append style={mark size=2pt}]
    \begin{axis}[
        height=\columnwidth/3.4,
        width=\columnwidth/2.6,
        ylabel={\em avg. absolute error},
        xmin=0.5, xmax=6.5,
        ymin=-0.05, ymax=0.5,
        xtick={1,2,3,4,5,6},
        xticklabel style = {font=\footnotesize},
        yticklabel style = {font=\footnotesize},
        xticklabels={0.5,0.2,0.1,0.05,0.02,0.01},
        ytick={0,0.1,0.2,0.3,0.4,0.5},
        every axis y label/.style={font=\small,at={{(0.28,1.0)}},right=10mm,above=0mm},
        y label style = {font=\footnotesize},
        xlabel near ticks,
        x label style = {font=\tiny},
        legend style={fill=none,font=\tiny,at={(1.07, 0.7)},anchor=north west,draw=none},
        x tick label style={rotate=45,anchor=east},
    ]
    \addplot[line width=0.2mm,mark=square,color=my_blue] 
        plot coordinates {
(1,	0.0065482	)
(2,	0.00646336	)
(3,	0.000436166	)
(4,	0.00040828	)
(5,	0.000415959	)
(6, 0.00024731   )
    };

    \addplot[line width=0.2mm,mark=o,color=my_cyan]  
        plot coordinates {
(1,	0.00076831	)
(2,	0.000291943	)
(3,	0.00014945	)
(4,	9.92335e-05	)
(5,	1.24475e-05	)
(6, 1.18031e-05   )
    };
    
    \addplot[line width=0.2mm,mark=triangle,color=my_violet]  
        plot coordinates {
(1,	0.00592854	)
(2,	0.00592752	)
    };

    \addplot[line width=0.2mm,mark=diamond,color=my_teal]  
        plot coordinates {
(1,	1	)
(2,	1	)
(3,	1	)
(4,	1	)
(5,	1	)
(6, 1   )
    };

    \addplot[line width=0.1mm,color=gray,dashed]  
        plot coordinates {
(1,	0.5	)
(2,	0.2	)
(3,	0.1	)
(4,	0.05	)
(5,	0.02	)
(6, 0.01   )
    };

    \addplot[line width=0.2mm,mark=Mercedes star,color=my_green]  
        plot coordinates {
    };
    \end{axis}
\end{tikzpicture}\label{fig:err-fri-edge}%
}%
\end{small}
\vspace{-1mm}
\vskip -1mm
\caption{Avg. absolute error vs. $\epsilon$ for edge queries.} \label{fig:err-edge}
\end{figure}

\subsection{Parameter Analysis}\label{sec:exp-param}
\revise{
This section first studies the effect of the input parameter $\tau$  (i.e., the maximum number of batches of random walks) in \fwd and \algo. Next, we vary the intermediate parameter $\ell_b$ in \algo so as to validate the effectiveness of the greedy strategy for combining \fwd and \bwd (Section \ref{sec:greedy-combi}). Lastly, we test and verify the efficiency benefits brought by our refined maximum length $\ell$ (Theorem \ref{lem:ell}) in \bwd when comparing it with Peng et al.'s $\ell$ defined in  Eq. \eqref{eq:old-ell}.

\header
{\bf Varying $\tau$ in \texttt{AMC} and \texttt{GEER}.}
}
Fig. \ref{fig:param-1} and Fig. \ref{fig:param-2} plot the average running times of \fwd and \algo when varying $\tau$ from 1 to 8 on DBLP, Youtube, and Orkut datasets, when $\epsilon=0.2$ and $\epsilon=0.02$, respectively. 
In Fig. \ref{fig:param-1}, the running times of both \fwd and \algo first decrease notably and then remain stable or go up slowly as $\tau$ is increased from 1 to 8 on DBLP and Youtube. For example, on Youtube, \fwd and \algo achieve the best performance when $\tau=5$ and $\tau=2$, which are $12\times$ and $1.6\times$ faster compared to their costs when $\tau=1$ respectively. The result shows that a reasonable $\tau$ is conducive to averting substantial overheads incurred by random walks and facilitating early termination in \fwd, which validates the effectiveness of our adaptive sampling scheme described in Section \ref{sec:amc}. We can also make similar observations from Fig. \ref{fig:time-param-dblp-2} and \ref{fig:time-param-youtube-2}, where the query performance of \algo declines to a low level after $\tau$ is over $3$ and the running time of \fwd is drastically reduced when increasing $\tau$ from $1$ to $8$.  This is expected, since a smaller error $\epsilon=0.02$ will lead to immense random walks in \fwd and hence more batches of random walks are needed to determine the termination point. On the Orkut dataset, the running time of \algo grows with $\tau$ as observed in Fig. \ref{fig:time-param-orkut-1} and Fig. \ref{fig:time-param-orkut-2}, whereas the query efficiency of \fwd drops with the increase of $\tau$ when $\epsilon=0.2$ and rises if $\epsilon=0.02$. The reason is that Orkut has a high average degree, and thus by Eq. \eqref{eq:time-amc} the sampling costs in \fwd and \algo are not significant. In such cases, multiple batches of sampling instead bring on additional costs. But when $\epsilon$ is small (e.g., 0.02), \fwd needs numerous random walks and the adaptive sampling scheme still works. To sum up, choosing $\tau=5$ in \fwd and \algo is able to obtain appealing performance in most cases.

\begin{figure}[!t]
\centering
\begin{small}
\begin{tikzpicture}[every mark/.append style={mark size=3pt}]
    \begin{customlegend}[legend columns=2,
        legend entries={\algo,\fwd},
        legend style={at={(0.5,1.05)},anchor=north,draw=none,font=\small,column sep=0.25cm}]
    \addlegendimage{line width=0.2mm,mark=square,color=my_blue}
    \addlegendimage{line width=0.2mm,mark=o,color=my_cyan}
    \end{customlegend}
\end{tikzpicture}
\\[-\lineskip]
\vspace{-3mm}

\hspace{-4mm}\subfloat[DBLP]{
\begin{tikzpicture}[scale=1,every mark/.append style={mark size=2pt}]
    \begin{axis}[
        height=\columnwidth/3.4,
        width=\columnwidth/2.6,
        ylabel={\em running time} (ms),
        xmin=0.5, xmax=8.5,
        ymin=10, ymax=10000,
        xtick={1,2,3,4,5,6,7,8},
        xticklabel style = {font=\footnotesize},
        yticklabel style = {font=\footnotesize},
        xticklabels={1,2,3,4,5,6,7,8},
        ymode=log,
        log basis y={10},
        every axis y label/.style={font=\small,at={{(0.28,1.0)}},right=10mm,above=0mm},
        y label style = {font=\footnotesize},
        xlabel near ticks,
        x label style = {font=\tiny},
        legend style={fill=none,font=\tiny,at={(0.6,0.99)},anchor=north west,draw=none},
    ]
    \addplot[line width=0.2mm,mark=square,color=my_blue]  
        plot coordinates {
(1,	166.546	)
(2,	109.803	)
(3,	86.9406	)
(4,	77.432	)
(5,	93.4225	)
(6, 93.5719 )
(7, 96.0409 )
(8, 99.3455 )
    };
    \addplot[line width=0.2mm,mark=o,color=my_cyan]  
        plot coordinates {
(1,	2365	)
(2,	1468.08	)
(3,	870.508	)
(4,	630.548	)
(5,	542.775	)
(6, 564.243 )
(7, 592.755 )
(8, 645.431 )
    };
    
    \end{axis}
\end{tikzpicture}\hspace{0mm}\label{fig:time-param-dblp-1}%
}
\hspace{0mm}\subfloat[Youtube]{
\begin{tikzpicture}[scale=1,every mark/.append style={mark size=2pt}]
    \begin{axis}[
        height=\columnwidth/3.4,
        width=\columnwidth/2.6,
        ylabel={\em running time} (ms),
        xmin=0.5, xmax=8.5,
        ymin=10, ymax=1000,
        xtick={1,2,3,4,5,6,7,8},
        xticklabel style = {font=\footnotesize},
        yticklabel style = {font=\footnotesize},
        xticklabels={1,2,3,4,5,6,7,8},
        ymode=log,
        log basis y={10},
        every axis y label/.style={font=\small,at={{(0.28,1.0)}},right=10mm,above=0mm},
        y label style = {font=\footnotesize},
        xlabel near ticks,
        x label style = {font=\tiny},
        legend style={fill=none,font=\tiny,at={(0.6,0.99)},anchor=north west,draw=none},
    ]
    \addplot[line width=0.2mm,mark=square,color=my_blue]  
        plot coordinates {
(1,	17.8531	)
(2,	11.0977	)
(3,	16.4163	)
(4,	18.9163	)
(5,	20.8352	)
(6, 21.894 )
(7, 23.6264 )
(8, 24.2496 )
    };
    \addplot[line width=0.2mm,mark=o,color=my_cyan]  
        plot coordinates {
(1,	981.452	)
(2,	524.499	)
(3,	273.617	)
(4,	135.982	)
(5,	82.0696	)
(6, 99.3981 )
(7, 105.9715 )
(8, 116.3292 )
    };
    
    \end{axis}
\end{tikzpicture}\hspace{0mm}\label{fig:time-param-youtube-1}%
}
\hspace{0mm}\subfloat[Orkut]{
\begin{tikzpicture}[scale=1,every mark/.append style={mark size=2pt}]
    \begin{axis}[
        height=\columnwidth/3.4,
        width=\columnwidth/2.6,
        ylabel={\em running time} (ms),
        xmin=0.5, xmax=8.5,
        ymin=1, ymax=100,
        xtick={1,2,3,4,5,6,7,8},
        xticklabel style = {font=\footnotesize},
        yticklabel style = {font=\footnotesize},
        xticklabels={1,2,3,4,5,6,7,8},
        ymode=log,
        log basis y={10},
        every axis y label/.style={font=\small,at={{(0.28,1.0)}},right=10mm,above=0mm},
        y label style = {font=\footnotesize},
        xlabel near ticks,
        x label style = {font=\tiny},
        legend style={fill=none,font=\tiny,at={(0.6,0.35)},anchor=north west,draw=none},
    ]
    \addplot[line width=0.2mm,mark=square,color=my_blue]  
        plot coordinates {
(1,	3.29701	)
(2,	4.69091	)
(3,	5.36482	)
(4,	5.97706	)
(5,	6.23853	)
(6, 6.2691 )
(7, 6.9538 )
(8, 9.02345 )
    };
    \addplot[line width=0.2mm,mark=o,color=my_cyan]  
        plot coordinates {
(1,	44.286	)
(2,	45.8319	)
(3,	61.0354	)
(4,	65.544	)
(5,	74.0523	)
(6, 72.2784 )
(7, 78.0541 )
(8, 78.5258 )
    };
    
    \end{axis}
\end{tikzpicture}\label{fig:time-param-orkut-1}%
}%
\end{small}
\vspace{-2mm}
\caption{Varying $\tau$ when $\epsilon=0.2$.} \label{fig:param-1}
\end{figure}
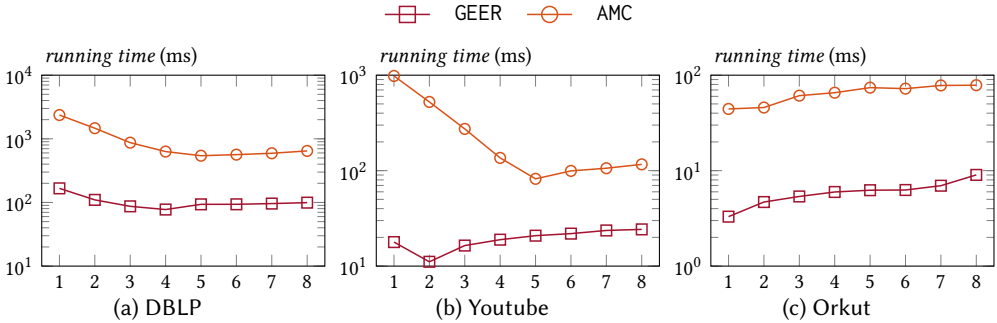

\begin{figure}[!t]
\centering
\begin{small}
\begin{tikzpicture}[every mark/.append style={mark size=3pt}]
    \begin{customlegend}[legend columns=2,
        legend entries={\algo,\fwd},
        legend style={at={(0.5,1.05)},anchor=north,draw=none,font=\small,column sep=0.25cm}]
    \addlegendimage{line width=0.2mm,mark=square,color=my_blue}
    \addlegendimage{line width=0.2mm,mark=o,color=my_cyan}
    \end{customlegend}
\end{tikzpicture}
\\[-\lineskip]
\vspace{-3mm}

\hspace{-4mm}
\subfloat[DBLP]{
\begin{tikzpicture}[scale=1,every mark/.append style={mark size=2pt}]
    \begin{axis}[
        height=\columnwidth/3.4,
        width=\columnwidth/2.6,
        ylabel={\em running time} (ms),
        xmin=0.5, xmax=8.5,
        ymin=5000, ymax=1000000,
        xtick={1,2,3,4,5,6,7,8},
        xticklabel style = {font=\footnotesize},
        yticklabel style = {font=\footnotesize},
        xticklabels={1,2,3,4,5,6,7,8},
        ymode=log,
        log basis y={10},
        every axis y label/.style={font=\small,at={{(0.28,1.0)}},right=10mm,above=0mm},
        y label style = {font=\footnotesize},
        xlabel near ticks,
        x label style = {font=\tiny},
        legend style={fill=none,font=\tiny,at={(0.6,0.99)},anchor=north west,draw=none},
    ]
    \addplot[line width=0.2mm,mark=square,color=my_blue]  
        plot coordinates {
(1,	11581.3	)
(2,	7620.17	)
(3,	6345.95	)
(4,	5671.06	)
(5,	5770.15	)
(6, 6065.77 )
(7, 6186.22 )
(8, 6202.86 )
    };
    \addplot[line width=0.2mm,mark=o,color=my_cyan]  
        plot coordinates {
(1,	842764	)
(2,	414459	)
(3,	215712	)
(4,	125308	)
(5,	80798	)
(6, 59628.7 )
(7, 48757.9 )
(8, 45855.7 )
    };
    \end{axis}
\end{tikzpicture}\hspace{0mm}\label{fig:time-param-dblp-2}%
}
\subfloat[Youtube]{
\begin{tikzpicture}[scale=1,every mark/.append style={mark size=2pt}]
    \begin{axis}[
        height=\columnwidth/3.4,
        width=\columnwidth/2.6,
        ylabel={\em running time} (ms),
        xmin=0.5, xmax=8.5,
        ymin=1000, ymax=1000000,
        xtick={1,2,3,4,5,6,7,8},
        xticklabel style = {font=\footnotesize},
        yticklabel style = {font=\footnotesize},
        xticklabels={1,2,3,4,5,6,7,8},
        ymode=log,
        log basis y={10},
        every axis y label/.style={font=\small,at={{(0.28,1.0)}},right=10mm,above=0mm},
        y label style = {font=\footnotesize},
        xlabel near ticks,
        x label style = {font=\tiny},
        legend style={fill=none,font=\tiny,at={(0.6,0.99)},anchor=north west,draw=none},
    ]
    \addplot[line width=0.2mm,mark=square,color=my_blue]  
        plot coordinates {
(1,	3595.63	)
(2,	2424.26	)
(3,	1779.06	)
(4,	1473.39	)
(5,	1663.92	)
(6,	1772.35	)
(7, 1856.07 )
(8, 1888.36 )
    };
    \addplot[line width=0.2mm,mark=o,color=my_cyan]  
        plot coordinates {
(1,	328779	)
(2,	192735	)
(3,	102353	)
(4,	53476.4	)
(5,	27233.4	)
(6, 13906.1 )
(7, 12550.3 )
(8, 11957.2 )
    };
    \end{axis}
\end{tikzpicture}\hspace{0mm}\label{fig:time-param-youtube-2}%
}
\subfloat[Orkut]{
\begin{tikzpicture}[scale=1,every mark/.append style={mark size=2pt}]
    \begin{axis}[
        height=\columnwidth/3.4,
        width=\columnwidth/2.6,
        ylabel={\em running time} (ms),
        xmin=0.5, xmax=8.5,
        ymin=1000, ymax=100000,
        xtick={1,2,3,4,5,6,7,8},
        xticklabel style = {font=\footnotesize},
        yticklabel style = {font=\footnotesize},
        xticklabels={1,2,3,4,5,6,7,8},
        ymode=log,
        log basis y={10},
        every axis y label/.style={font=\small,at={{(0.28,1.0)}},right=10mm,above=0mm},
        y label style = {font=\footnotesize},
        xlabel near ticks,
        x label style = {font=\tiny},
        legend style={fill=none,font=\tiny,at={(0.6,0.99)},anchor=north west,draw=none},
    ]
    \addplot[line width=0.2mm,mark=square,color=my_blue]  
        plot coordinates {
(1,	1702.66	)
(2,	2831.43	)
(3,	2745.95	)
(4,	3266.83	)
(5,	3648.09	)
(6,	3980.04	)
(7,	4025.54	)
(8, 4461.56  )
    };
    \addplot[line width=0.2mm,mark=o,color=my_cyan]  
        plot coordinates {
(1,	34805.2	)
(2,	21537.4	)
(3,	18651.9	)
(4,	17517.6	)
(5,	16826.7	)
(6, 15398.9 )
(7, 14015.4 )
(8, 14022.7 )
    };
    \end{axis}
\end{tikzpicture}\label{fig:time-param-orkut-2}%
}%
\end{small}
\vspace{-3mm}
\caption{Varying $\tau$ when $\epsilon=0.02$.} \label{fig:param-2}
\vspace{-2mm}
\end{figure}
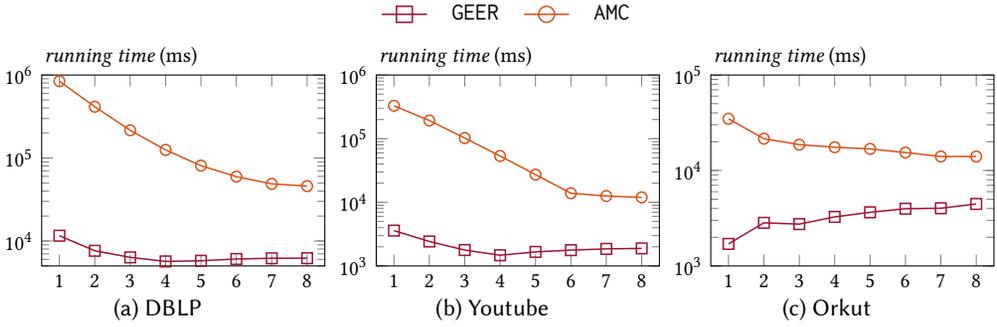

\revise{
\header
{\bf Varying $\ell_b$ in \texttt{GEER}.} First, we denote by $\ell^{\ast}_b$ the value of $\ell_b$ determined by Eq. \eqref{eq:smm-amc} (Line 9 in Algorithm \ref{alg:bere}). Then, we remove the requirement for Eq. \eqref{eq:smm-amc} at Line 9 in Algorithm \ref{alg:bere} and vary $\ell_b$ in the range $\{\ell^{\ast}_b-6, \ell^{\ast}_b-4, \ell^{\ast}_b-2, \ell^{\ast}_b, \ell^{\ast}_b+2, \ell^{\ast}_b+4, \ell^{\ast}_b+6\}$. Fig. \ref{fig:param-ellb} displays the average query time of \algo on four datasets (i.e., Facebook, DBLP, LiveJournal, and Orkut) when $\epsilon$ is varied from $0.01$ to $0.2$ and $\ell_b$ is varied in the above-said range. At a glance, it can be seen that \algo exhibits the best performance when $\ell_b$ is set to $\ell^{\ast}_b$ or nearly $\ell^{\ast}_b$ on all four datasets. More specifically, the running time of \algo first reduces considerably when increasing $\ell_b$ from $\ell^{\ast}_b-6$ to $\ell^{\ast}_b$ and then goes up significantly as $\ell_b$ keeps increasing. For instance, on Orkut, \algo with $\ell_b=\ell^{\ast}_b\pm 6$ is orders of magnitude slower than \algo with $\ell_b=\ell^{\ast}_b$. 
The reasons for the remarkable changes are as follows. When $\ell_b$ is less than $\ell^{\ast}_b$, the intermediate result $\vec{\mathbf{s}}^\ast$ and $\vec{\mathbf{t}}^\ast$ generated by \bwd are large, and hence, \algo degrades to \fwd. On the other hand, when $\ell_b$ is much larger than $\ell^{\ast}_b$, \algo conducts vast matrix-vector multiplications that render \algo even more costly than \fwd. The result shows that our greedy strategy for determining $\ell_b$ strikes a good balance between \bwd and \fwd, and enables \algo get the lowest (or nearly lowest) computation cost.

\header
{\bf Our $\ell$ vs. Peng et al.'s $\ell$.} In the last set of experiments, we evaluate the efficiency performance of \bwd (Algorithm \ref{alg:smm}) when its input parameter $\ell_b$ is set to our refined $\ell$ (Eq. \eqref{eq:ell}) and Peng et al.'s $\ell$ (Eq. \eqref{eq:old-ell}), respectively. Fig. \ref{fig:time-ell-rand} reports the average query time of \bwd with two types of $\ell$ on five datasets including Facebook (FB), DBLP (DB), Youtube (YT), Orkut (OK), and LiveJournal (LJ). When $\epsilon=0.5$ (Fig. \ref{fig:time-ell-rand}(a)), \bwd with our $\ell$ is $3.3\times$ and $6.7\times$ faster than \bwd with Peng et al.'s $\ell$ on high-degree graphs FB and OK, respectively. On graphs with low average degrees (i.e., DB, YT, and LJ), our $\ell$ yields up to $2.1\times$ performance improvements over Peng et al.'s $\ell$ in \bwd. This phenomenon is due to that our $\ell$ in Eq. \eqref{eq:ell} is inversely correlated with the degrees of query nodes, meaning that our $\ell$ will be much smaller than Peng et al.'s $\ell$ on graphs with large average node degrees. Additionally, when $\epsilon=0.05$, our $\ell$ still obtains $2.4\times$ and $3.8\times$ speedups on FB and OK, respectively, and $1.2$-$2\times$ improvements on the remaining datasets. In sum, our refined $\ell$ can bring considerable efficiency enhancements to \bwd compared to Peng et al.'s $\ell$, especially on graphs with high average degrees.
}

\begin{figure}[!t]
\centering
\begin{small}
\begin{tikzpicture}[every mark/.append style={mark size=3pt}]
    \begin{customlegend}[legend columns=4,
        legend entries={Facebook,DBLP,LiveJournal,Orkut},
        legend style={at={(0.5,1.05)},anchor=north,draw=none,font=\small,column sep=0.2cm}]
    \addlegendimage{line width=0.2mm,mark=square,color=my_teal}
    \addlegendimage{line width=0.2mm,mark=o,color=myblue}
    \addlegendimage{line width=0.2mm,mark=triangle,color=my_cyan}
    \addlegendimage{line width=0.2mm,mark=diamond,color=my_green}
    \end{customlegend}
\end{tikzpicture}
\\[-\lineskip]
\vspace{-3mm}

\hspace{-3mm}\subfloat[$\epsilon=0.2$]{
\begin{tikzpicture}[scale=1,every mark/.append style={mark size=2pt}]
    \begin{axis}[
        height=\columnwidth/3.4,
        width=\columnwidth/2.6,
        ylabel={\em running time} (ms),
        xlabel={$\ell^{\ast}_b \pm x$},
        xmin=0.5, xmax=7.5,
        ymin=1, ymax=1000000,
        xtick={1,2,3,4,5,6,7},
        xticklabel style = {font=\tiny},
        yticklabel style = {font=\tiny},
        xticklabels={-6,-4,-2,0,+2,+4,+6},
        ytick={1,10,100,1000,10000,100000,1000000},
        ymode=log,
        log basis y={10},
        every axis y label/.style={font=\small,at={{(0.28,1.0)}},right=10mm,above=0mm},
        y label style = {font=\footnotesize},
        xlabel near ticks,
        x label style = {font=\small},
        legend style={fill=none,font=\tiny,at={(0.6,0.99)},anchor=north west,draw=none},
    ]
    \addplot[line width=0.2mm,mark=square,color=my_teal]  
        plot coordinates {
(1,	9.71792	)
(2,	9.66599	)
(3,	5.58007	)
(4,	2.868676	)
(5, 10.9929 )
(6, 49.0429 )
(7, 98.2548 )
    };    

    \addplot[line width=0.2mm,mark=o,color=myblue]  
        plot coordinates {
(1,	317.097	)
(2,	293.582	)
(3,	44.0066	)
(4,	24.5735	)
(5, 516.738 )
(6, 3098.19 )
(7, 6435.86 )
    };

    \addplot[line width=0.2mm,mark=triangle,color=my_cyan]  
        plot coordinates {
(1,	519.227	)
(2,	490.295	)
(3,	104.207	)
(4,	83.31	)
(5, 3321.55 )
(6, 34878.8 )
(7, 83242.2 )
    };

    \addplot[line width=0.2mm,mark=diamond,color=my_green]  
        plot coordinates {
(1,	61.4098	)
(2,	51.061	)
(3,	21.0263	)
(4,	6.25676	)
(5, 4510.69 )
(6, 207809 )
(7, 0.88943e+06 )
    };
    
    \end{axis}
\end{tikzpicture}\hspace{0mm}\label{fig:time-param-ell-eps1}%
}
\hspace{0mm}\subfloat[$\epsilon=0.05$]{
\begin{tikzpicture}[scale=1,every mark/.append style={mark size=2pt}]
    \begin{axis}[
        height=\columnwidth/3.4,
        width=\columnwidth/2.6,
        ylabel={\em running time} (ms),
        xlabel={$\ell^{\ast}_b\pm x$},
        xmin=0.5, xmax=7.5,
        ymin=10, ymax=1000000,
        xtick={1,2,3,4,5,6,7},
        xticklabel style = {font=\tiny},
        yticklabel style = {font=\tiny},
        xticklabels={-6,-4,-2,0,+2,+4,+6},
        ytick={10,100,1000,10000,100000,1000000},
        ymode=log,
        log basis y={10},
        every axis y label/.style={font=\small,at={{(0.28,1.0)}},right=10mm,above=0mm},
        y label style = {font=\footnotesize},
        xlabel near ticks,
        x label style = {font=\small},
        legend style={fill=none,font=\tiny,at={(0.6,0.99)},anchor=north west,draw=none},
    ]
    \addplot[line width=0.2mm,mark=square,color=my_teal]  
        plot coordinates {
(1,	252.181	)
(2,	97.3654	)
(3,	39.3479	)
(4,	14.1852	)
(5, 31.0329 )
(6, 74.5287 )
(7, 124.093 )
    };    

    \addplot[line width=0.2mm,mark=o,color=myblue]  
        plot coordinates {
(1,	15021.6	)
(2,	9269.97	)
(3,	1746.33	)
(4,	992.556	)
(5, 2807.88 )
(6, 5590.9 )
(7, 9280.85 )
    };

    \addplot[line width=0.2mm,mark=triangle,color=my_cyan]  
        plot coordinates {
(1,	18782	)
(2,	13306.9	)
(3,	2470.92	)
(4,	997.63	)
(5, 10434.8 )
(6, 56770.3 )
(7, 108147 )
    };

    \addplot[line width=0.2mm,mark=diamond,color=my_green]  
        plot coordinates {
(1,	2521.79	)
(2,	2480.66	)
(3,	2254.79	)
(4,	220.8493	)
(5, 21465 )
(6, 498653 )
(7, 0 )
    };

    \end{axis}
\end{tikzpicture}\hspace{0mm}\label{fig:time-param-ell-eps2}%
}
\hspace{0mm}\subfloat[$\epsilon=0.01$]{
\begin{tikzpicture}[scale=1,every mark/.append style={mark size=2pt}]
    \begin{axis}[
        height=\columnwidth/3.4,
        width=\columnwidth/2.6,
        ylabel={\em running time} (ms),
        xlabel={${\ell^{\ast}_b} \pm x$},
        xmin=0.5, xmax=7.5,
        ymin=100, ymax=1000000,
        xtick={1,2,3,4,5,6,7},
        xticklabel style = {font=\tiny},
        yticklabel style = {font=\tiny},
        xticklabels={-6,-4,-2,0,+2,+4,+6},
        ytick={10,100,1000,10000,100000,1000000},
        ymode=log,
        log basis y={10},
        every axis y label/.style={font=\small,at={{(0.28,1.0)}},right=10mm,above=0mm},
        y label style = {font=\footnotesize},
        xlabel near ticks,
        x label style = {font=\small},
        legend style={fill=none,font=\tiny,at={(0.6,0.35)},anchor=north west,draw=none},
    ]

    \addplot[line width=0.2mm,mark=square,color=my_teal]  
        plot coordinates {
(1,	1814.72	)
(2,	1649.33	)
(3,	171.543	)
(4,	131.512	)
(5, 171.429 )
(6, 318.192 )
(7, 468.922 )
    };    

    \addplot[line width=0.2mm,mark=o,color=myblue]  
        plot coordinates {
(1,	93234.1	)
(2,	24456.3	)
(3,	11485.8	)
(4,	9235	)
(5, 10664.1 )
(6, 14825 )
(7, 18136.9 )
    };

    \addplot[line width=0.2mm,mark=triangle,color=my_cyan]  
        plot coordinates {
(1,	747869	)
(2,	182423	)
(3,	23087.9	)
(4,	27367.6)
(5, 60958 )
(6, 99902.6 )
(7, 149280 )
    };
    
    \addplot[line width=0.2mm,mark=diamond,color=my_green]  
        plot coordinates {
(1,	156869	)
(2,	103736	)
(3,	47607	)
(4,	8991.46	)
(5, 44713.8 )
(6, 284130 )
(7, 0.88842e+06 )
    };

    \end{axis}
\end{tikzpicture}\label{fig:time-param-ell-eps3}%
}%
\end{small}
\vspace{-2mm}
\caption{Varying $\ell_b$ in \algo} \label{fig:param-ellb}
\end{figure}
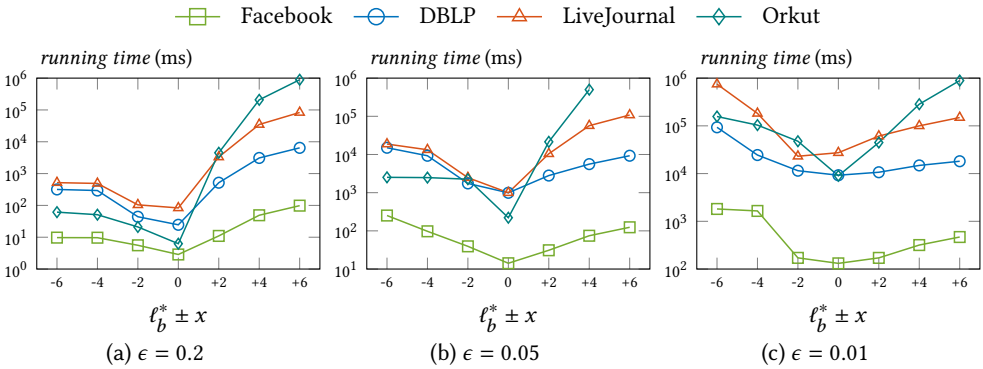

\begin{figure}[!t]
\centering
\begin{small}
\begin{tikzpicture}
    \begin{customlegend}[legend columns=2,
        legend entries={Our $\ell$,Peng et al.'s $\ell$},
        legend columns=-1,
        area legend,
        legend style={at={(0.55,1.15)},anchor=north,draw=none,font=\scriptsize,column sep=0.3cm}]
        \addlegendimage{black, fill=yellow, postaction={ pattern=north east lines }}
        \addlegendimage{black, fill=my_purple, postaction={pattern=grid}}
    \end{customlegend}
\end{tikzpicture}
\\[-\lineskip]
\vspace{-3mm}

\hspace{-5mm}\subfloat[{\em $\epsilon=0.5$}]{
\begin{tikzpicture}[scale=1]
\begin{axis}[
    height=\columnwidth/3.4,
    width=\columnwidth/2.4,
    ybar=0.1pt,
    bar width=0.2cm,
    enlarge x limits=true,
    ylabel={\em running time} (ms),
    symbolic x coords={2,3,4,5,6},
    xticklabels={FB,DB,YT,OK,LJ},
    xtick=data,
    ymin=100,
    ymax=20000000,
    ytick={100,1000,10000,100000,1000000,10000000},
    ymode=log,
    y label style = {font=\footnotesize},
    xlabel near ticks,
    x label style = {font=\small},
    yticklabel style = {font=\tiny},
    log basis y={10},
    every axis y label/.style={font=\small,at={{(0.25,1.0)}},right=10mm,above=0mm},
        label style={font=\scriptsize},
        tick label style={font=\scriptsize},
    ]
\addplot [black, fill=yellow, postaction={ pattern=north east lines }] coordinates {(2,392.83) (3,40157.3) (4,26122.3) (5,2.02927e+06) (6,1.0636e+06)};
\addplot [black, fill=my_purple, postaction={pattern=grid}] coordinates {(2,1294.969) (3,61762.9) (4,53301.2) (5,1.35493e+07) (6,2208993)};
\end{axis}
\end{tikzpicture}\hspace{4mm}%
}%
\subfloat[{$\epsilon=0.05$}]{
\begin{tikzpicture}[scale=1]
\begin{axis}[
    height=\columnwidth/3.4,
    width=\columnwidth/2.4,
    ybar=0.1pt,
    bar width=0.2cm,
    enlarge x limits=true,
    ylabel={\em running time} (ms),
    symbolic x coords={2,3,4,5,6},
    xticklabels={FB,DB,YT,OK,LJ},
    xtick=data,
    yticklabel style = {font=\tiny},
    ymin=100,
    ymax=100000000,
    ytick={100,1000,10000,100000,1000000,10000000.100000000},
    ymode=log,
    log basis y={10},
    y label style = {font=\footnotesize},
    xlabel near ticks,
    x label style = {font=\small},
    every axis y label/.style={font=\small,at={{(0.25,1.0)}},right=10mm,above=0mm},
        label style={font=\scriptsize},
        tick label style={font=\scriptsize},
    ]
\addplot [black, fill=yellow, postaction={ pattern=north east lines }] coordinates {(2,1055.03) (3,81107.8) (4,47314.1) (5,9.94211e+06) (6,2.21778e+06) };
\addplot [black, fill=my_purple, postaction={pattern=grid}] coordinates {(2,2477.85) (3,94367.4) (4,92530.1) (5,3.76447e+07) (6,4.35172e+06)};
\end{axis}
\end{tikzpicture}\hspace{0mm}%
}%
\vspace{-2mm}
\end{small}
\caption{Our $\ell$ vs. Peng et al.'s $\ell$ in \bwd} \label{fig:time-ell-rand}
\end{figure}
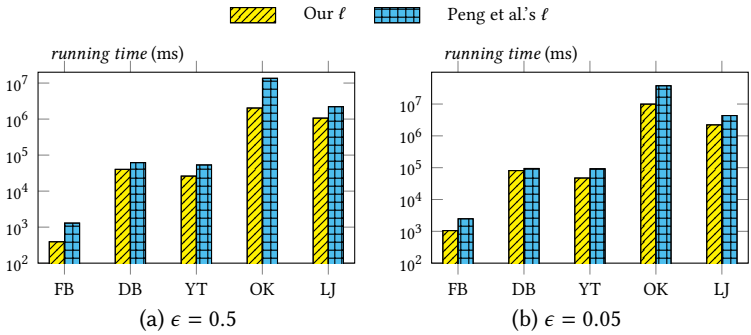
\section{Additional Related Work}\label{sec:related}
\revise{In this section, we review a number of studies germane to our work.}
\subsection{Effective Resistance}
Apart from the methods discussed in Section \ref{sec:mcpt} for $\epsilon$-approximate PER queries, there exist several techniques for all pairwise ER computation or estimating ER values for all edges in the input graph, as reviewed in the sequel. Fouss et al. \cite{fouss2007random} proposed to calculate the ER value for any node pair in the input graph $G$ by first computing the Moore-Penrose pseudoinverse of the Laplacian matrix $\DM-\AM$ based on a sparse Cholesky
factorization of $\DM-\AM$.
\citet{mavroforakis2015spanning} utilized the random projection and SDD solvers to approximate the ER values of all edges.
In Ref. \cite{hayashi2016efficient}, Hayashi et al. presented a method for computing approximate ER values of all edges in the input graph by sampling spanning trees. This method is still state-of-the-art for this problem. Jambulapati and Sidford then \cite{jambulapati2018efficient} introduced an algorithm that returns $\epsilon$-approximates ER values for all possible node pairs in the graph using $O(n^2/\epsilon)$ time. Their method is based on computing the sketches of the Laplacian matrix and its pseudoinverse. After that, \cite{goranci2018dynamic} studied how to dynamically maintain all-pairs approximate ER values in separable graphs. In a recent work \cite{broberg2020prediction}, based on prior work on graph sparsifiers \cite{abraham2016fully}, the authors presented an algorithm for the offline fully dynamic all-pairs ER problem, and further extended this idea to create data structures to maintain the approximate all-pairs ER values in fully dynamic graphs in a prediction assisted setting. \cite{aybat2017decentralized} developed an efficient linearly convergent distributed algorithm for ER computation. The proposed algorithm is further applied to the consensus problem, which aims to compute the average of node values in a distributed manner. Since this paper focuses on handling $\epsilon$-approximate PER queries, the problems studied in the aforementioned work are beyond the scope of this paper. 
Given a graph $G$, a {\em spanning tree} $T(G)$ is a subgraph of $G$, which includes all the nodes of $G$ with a minimum possible number of edges \cite{cheriton1976finding}. Corollary 4.2 in \cite{lovasz1993random} proves that $r(s,t)=\frac{|T(G^{\prime})|}{|T(G)|}$, where $G^{\prime}$ be a subgraph of $G$ obtained by identifying $s$ and $t$.
Besides the methods remarked in Section \ref{sec:mcpt}, \citet{pankdd21} also introduced an algorithm $\mathtt{ST}$ for estimating ER by employing the local algorithm presented in \cite{lyons2018sharp} to estimate $|T(G^{\prime})|$ and $|T(G)|$. We do not compare with $\mathtt{ST}$ since it returns $r(s,t)$ satisfying the relative approximation guarantee rather than the $\epsilon$-approximation. As pointed out in \cite{pankdd21}, $\mathtt{ST}$ does not perform well in practice.

\subsection{Commute Time}
By definition, ER is closely pertinent to {\em commute time} \cite{motwani1995randomized}; hence we review that are relevant to estimating commute times in graphs in the sequel. Sarkar and Moore \cite{sarkar2007tractable} proposed $\mathtt{GRANCH}$, which computes the {\em truncated commute times} of all nodes to a target node by decomposing the input graph into overlapping neighborhoods for each node. In the subsequent work \cite{sarkar2008fast}, they combined sampling techniques with the branch and bound pruning scheme to obtain upper and lower bounds on commute times from a node, without iterating over the entire graph. Ref. \cite{bonchi2012fast} described an algorithm estimating column-wise and pair-wise commute times based on a variation of the conjugate gradient algorithm. In \cite{cohen2016faster}, the authors showed how to build a $O\left(\frac{n\log{n}}{\epsilon^2}\right)$-size data structure in time $O\left(nm^{3/4}+n^{2/3}m\right)$, which can output a $1+\epsilon$ multiplicative approximation to the commute time of any given pair of nodes. \cite{khoa2019incremental} presented a method that incrementally estimates the commute time to facilitate online applications using properties of random walks and hitting time on graphs. 

\subsection{Personalized PageRank}
Another line of research that is germane to this work is {\em personalized PageRank} (PPR) \cite{jeh2003scaling,haveliwala2002topic} and its variant {\em heat kernel PageRank} (HKPR) \cite{chung2007heat,kloster2014heat}. In the past years, PPR and HKPR have been extensively studied in the literature, as surveyed in \cite{park2019survey}. Amid them, a number of recent studies \cite{lofgren2016personalized,lofgren2015bidirectional,wang2017fora,wang2016hubppr,wu2021unifying,yang2019efficient,banerjee2015fast,shi2019realtime,wei2018topppr,wang2021approx,wang2019efficient} propose to combine random walks with restart \cite{tong2006fast,fogaras2005towards} with  push-operation-based deterministic graph traversal \cite{andersen2006local,andersen2007local} for improved efficiency in processing single-source or single-pair PPR/HKPR queries.
At first glance, it seems that we can
simply adapt and extend these techniques to $\epsilon$-approximate PER computation. Unfortunately, contrary to PPR and HKPR, ER is inherently more sophisticated than PPR/HKPR as they differ significantly in the ways of defining random walks. In particular, the random walk model used in PPR/HKPR is {\em random walk with restart} \cite{tong2006fast}, which would stop at each visited node on the walk with a certain probability according to geometric or Poisson distributions, whereas ER relies on simple random walks of various fixed lengths (from $1$ to $\infty$). With a stopping probability, the random walks with restart and push-operation-based graph traversal in PPR/HKPR computation are in turn more likely to terminate after a few steps. As for ER, a linchpin to facilitate ER computation is a tight lower bound for the maximum random walk length. Consequently, it is more challenging to enable the combination of graph traversal and random walks for enhanced efficiency in ER computation. 

In addition to the above similarity/dissimilarity measures, there also exist a plethora of other random walk-based measures \cite{yang2021fast,shi2020realtime,yang2022efficient} akin to ER, as surveyed in \cite{lovasz1993random,yen2008family}.

\section{Conclusion}\label{sec:conclude}
This paper presents two novel probabilistic algorithms, \fwd and \algo, for answering $\epsilon$-approximate PER queries efficiently. \fwd improves over the state-of-the-art approach by an optimized bound for the maximum random walk length and an adaptive sampling strategy. Particularly, \fwd provides a superior time complexity while retaining the accuracy assurance. For the purpose of practical efficiency, we further develop \algo, which is built on the idea of integrating deterministic graph traversal into \fwd. GEER offers significantly enhanced empirical efficiency over \fwd without degrading its theoretical guarantees. Such empirical efficiency is unmatched by any existing solutions for $\epsilon$-approximate PER computation. Our experimental results show that \algo consistently outperforms the state of the art in terms of efficiency, and is often orders of magnitude faster.

\begin{acks}
This work is partially supported the National Natural Science Foundation of China (NSFC) under Grant No. U22B2060, by Guangzhou Municipal Science and Technology Bureau under Grant No. 2023A03J0667, by HKUST(GZ) under a Startup Grant, and by A*STAR Singapore under Grant A19E3b0099. 
The findings herein reflect the work, and are solely the responsibility, of the authors.
\end{acks}

\appendix
\section{Proofs}\label{sec:proof}

\subsection{\bf Proof of Theorem \ref{lem:ell}}
We first need the following lemma.
\begin{lemma}\label{lem:pist}
Given an undirected graph $G$, for any two nodes $u,v\in V$ and any integer $i\ge 1$, the matrix $\mathbf{P}$ can be decomposed as
\begin{align}
{p_i(u,v)}&=\sum_{k=1}^{n}{\vec{\mathbf{f}}_k(u)\cdot\vec{\mathbf{f}}_k(v)\cdot\vec{\boldsymbol{\pi}}(v)\cdot\lambda^i_k},\label{eq:fkpi} \\
\text{and}\quad \tfrac{1}{\vec{\boldsymbol{\pi}}(v)}&=\sum_{k=1}^{n}{\vec{\mathbf{f}}_k^2(v)},\label{eq:f-pi}
\end{align}
where $1 = \lambda_1 \geq \lambda_2 \geq \dotsb \geq \lambda_n \geq -1$ are the eigenvalues of $\mathbf{P}$ and $\vec{\mathbf{f}}_1,\vec{\mathbf{f}}_2,\dotsc,\vec{\mathbf{f}}_n$ are the corresponding eigenvectors. The eigenvector $\vec{\mathbf{f}}_1$ is taken to be $\vec{\mathbf{1}}$.
\end{lemma}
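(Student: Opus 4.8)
The plan is to diagonalize $\mathbf{P}$ through a symmetric conjugate and then translate the resulting spectral expansion back to $\mathbf{P}$. Although $\mathbf{P}=\mathbf{D}^{-1}\mathbf{A}$ is itself not symmetric, the matrix $\mathbf{N}:=\mathbf{D}^{1/2}\mathbf{P}\mathbf{D}^{-1/2}=\mathbf{D}^{-1/2}\mathbf{A}\mathbf{D}^{-1/2}$ is symmetric and similar to $\mathbf{P}$, hence shares its (real) eigenvalues $1=\lambda_1\ge\lambda_2\ge\dotsb\ge\lambda_n\ge-1$. First I would invoke the spectral theorem to obtain an orthonormal eigenbasis $\vec{\mathbf{u}}_1,\dotsc,\vec{\mathbf{u}}_n$ of $\mathbf{N}$ with $\mathbf{N}\vec{\mathbf{u}}_k=\lambda_k\vec{\mathbf{u}}_k$, and verify that the Perron eigenvector for $\lambda_1=1$ is $\vec{\mathbf{u}}_1(v)=\sqrt{\vec{\boldsymbol{\pi}}(v)}$ (indeed $\mathbf{D}^{-1/2}\mathbf{A}\mathbf{D}^{-1/2}\,\mathbf{D}^{1/2}\vec{\mathbf{1}}=\mathbf{D}^{1/2}\vec{\mathbf{1}}$, and normalizing $\mathbf{D}^{1/2}\vec{\mathbf{1}}$ by $\sqrt{2m}$ yields $\sqrt{\vec{\boldsymbol{\pi}}}$).

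Next I would expand powers of the symmetric matrix: $\mathbf{N}^i=\sum_{k=1}^n\lambda_k^i\,\vec{\mathbf{u}}_k\vec{\mathbf{u}}_k^{\intercal}$ for every integer $i\ge 0$. Since $\mathbf{P}^i=\mathbf{D}^{-1/2}\mathbf{N}^i\mathbf{D}^{1/2}$, reading off the $(u,v)$ entry gives
\[
p_i(u,v)=\mathbf{P}^i(u,v)=\sum_{k=1}^n\lambda_k^i\,\frac{\vec{\mathbf{u}}_k(u)}{\sqrt{d(u)}}\,\vec{\mathbf{u}}_k(v)\,\sqrt{d(v)}.
\]
Defining $\vec{\mathbf{f}}_k(v):=\vec{\mathbf{u}}_k(v)/\sqrt{\vec{\boldsymbol{\pi}}(v)}=\sqrt{2m}\,\vec{\mathbf{u}}_k(v)/\sqrt{d(v)}$ — so that $\vec{\mathbf{f}}_1=\vec{\mathbf{1}}$ and the $\vec{\mathbf{f}}_k$ are $\mathbf{P}$-eigenvectors that are orthonormal in the $\vec{\boldsymbol{\pi}}$-weighted inner product — a direct substitution turns the quantity $\tfrac{\vec{\mathbf{u}}_k(u)}{\sqrt{d(u)}}\vec{\mathbf{u}}_k(v)\sqrt{d(v)}$ into $\vec{\mathbf{f}}_k(u)\vec{\mathbf{f}}_k(v)\tfrac{d(v)}{2m}=\vec{\mathbf{f}}_k(u)\vec{\mathbf{f}}_k(v)\vec{\boldsymbol{\pi}}(v)$, which is exactly Eq.~\eqref{eq:fkpi}.

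For Eq.~\eqref{eq:f-pi} I would use completeness of the orthonormal basis, $\sum_{k=1}^n\vec{\mathbf{u}}_k\vec{\mathbf{u}}_k^{\intercal}=\mathbf{I}$; taking the $(v,v)$ entry yields $\sum_{k=1}^n\vec{\mathbf{u}}_k^2(v)=1$, and dividing by $\vec{\boldsymbol{\pi}}(v)$ gives $\sum_{k=1}^n\vec{\mathbf{f}}_k^2(v)=1/\vec{\boldsymbol{\pi}}(v)$ (equivalently, one may specialize Eq.~\eqref{eq:fkpi} to $i=0$ and $u=v$, using $\mathbf{P}^0=\mathbf{I}$). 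The only non-routine ingredient is the symmetrization step: because $\mathbf{P}$ is not itself symmetric, one must pass to $\mathbf{N}=\mathbf{D}^{-1/2}\mathbf{A}\mathbf{D}^{-1/2}$ before applying the spectral theorem, and then track the $\mathbf{D}^{\pm 1/2}$ factors (and the $\sqrt{2m}$ normalization) carefully so that $\vec{\mathbf{f}}_1$ comes out to be exactly $\vec{\mathbf{1}}$; everything else is bookkeeping.
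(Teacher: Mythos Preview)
Your proof is correct and follows essentially the same spectral-decomposition route as the paper. The only difference is packaging: the paper invokes the identity $\vec{\mathbf{e}}_v=\sum_{k}\vec{\mathbf{f}}_k(v)\,\vec{\boldsymbol{\pi}}(v)\,\vec{\mathbf{f}}_k$ by citation to \cite{wilmer2009markov} and then applies $\mathbf{P}^i$ to it, whereas you derive that identity yourself via the symmetrization $\mathbf{N}=\mathbf{D}^{-1/2}\mathbf{A}\mathbf{D}^{-1/2}$ and the spectral theorem, making your argument more self-contained.
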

\begin{proof}
By definition, we have $\PM\vec{\mathbf{f}}_k=\lambda_k\vec{\mathbf{f}}_k\ \forall{1\le k\le n}$. Then, it is easy to derive that
$\PM^2\vec{\mathbf{f}}_k=\PM\lambda_k\vec{\mathbf{f}}_k=\lambda_k\PM\vec{\mathbf{f}}_k=\lambda^2_k\vec{\mathbf{f}}_k.$
By repeating the above process, we obtain $\PM^i\vec{\mathbf{f}}_k=\lambda^i_k\vec{\mathbf{f}}_k$ $\forall{i}\ge 1$ and $\forall{1\le k\le n}$. In particular, since random walk matrix $\PM$ is row-stochastic, its largest eigenvalue $\lambda_1=1$ and the corresponding eigenvector $\vec{\mathbf{f}}_1$ can be taken to be the constant vector $\vec{\mathbf{1}}$ \cite{haveliwala2003second}.

According to \cite{wilmer2009markov}, $\vec{\mathbf{e}}_v=\sum_{k=1}^{n}{\vec{\mathbf{f}}_k(v)\cdot \vec{\boldsymbol{\pi}}(v)\cdot \vec{\mathbf{f}}_k}$.
Hence, we get
$\PM^i(u,v)=(\PM^i\vec{\mathbf{e}}_v)(u)=\left(\sum_{k=1}^{n}{\vec{\mathbf{f}}_k(v)\cdot \vec{\boldsymbol{\pi}}(v)\cdot \PM^i\vec{\mathbf{f}}_k}\right)(u).$
Since $\PM^i\vec{\mathbf{f}}_k=\lambda^i_k\vec{\mathbf{f}}_k$, substituting in the above equality gives us
\begin{equation*}
\PM^i(u,v)=\left(\sum_{k=1}^{n}{\vec{\mathbf{f}}_k(v)\cdot \vec{\boldsymbol{\pi}}(v)\cdot \lambda^i_k\vec{\mathbf{f}}_k}\right)(u)
=\vec{\boldsymbol{\pi}}(v)\sum_{k=1}^{n}{\vec{\mathbf{f}}_k(u)\cdot\vec{\mathbf{f}}_k(v)\cdot \lambda^i_k},
\end{equation*}
which is exactly Eq. \eqref{eq:fkpi}. In addition, 
\begin{equation*}
 1=\vec{\mathbf{e}}_v\cdot\vec{\mathbf{e}}_v^{\top}=\sum_{k=1}^{n}{\vec{\mathbf{f}}_k(v)\cdot \vec{\boldsymbol{\pi}}(v)\cdot \vec{\mathbf{f}}_k\vec{\mathbf{e}}_v^{\top}} =\sum_{k=1}^{n}{\vec{\mathbf{f}}^2_k(v)\cdot\vec{\boldsymbol{\pi}}(v)},
\end{equation*}
which proves Eq. \eqref{eq:f-pi}. Therefore, the lemma is proved.
\end{proof}

Now, we are ready to prove Theorem \ref{lem:ell}.
\begin{proof}[Proof of Theorem \ref{lem:ell}]
According to Eq. \eqref{eq:f-pi} in Lemma \ref{lem:pist}, the following equations hold
\begin{equation}\label{eq:fk-fk}
 \sum_{k=1}^{n}{\vec{\mathbf{f}}_k^2(s)}=\frac{2m}{d(s)} \text{ and } \sum_{k=1}^{n}{\vec{\mathbf{f}}_k^2(t)}=\frac{2m}{d(t)}.
\end{equation}
Note that $1 = \lambda_1 \geq \lambda_2 \geq \dotsb \geq \lambda_n \geq -1$. Thus, by Eq.~\eqref{eq:fkpi}, we get
\begin{align}
\frac{d(t)}{2m}\sum_{k=1}^{n}{\vec{\mathbf{f}}_k(s)\cdot\vec{\mathbf{f}}_k(t)} 
&\geq \frac{d(t)}{2m}\sum_{k=1}^{n}{\vec{\mathbf{f}}_k(s)\cdot\vec{\mathbf{f}}_k(t)\cdot \lambda^2_k} = p_2(s,t)\ge 0 \label{eq:fkst}.
\end{align}
Using Eq. \eqref{eq:fk-fk} and Eq. \eqref{eq:fkst} together with $\vec{\mathbf{f}}_1=\vec{\mathbf{1}}$, we have that
\begin{small}
\begin{align}
& \frac{1}{2m}\sum_{k=2}^{n}{\left(\vec{\mathbf{f}}_k(s)-\vec{\mathbf{f}}_k(t)\right)^2} = \frac{1}{2m}\sum_{k=1}^{n}{\left(\vec{\mathbf{f}}_k(s)-\vec{\mathbf{f}}_k(t)\right)^2}   = \frac{1}{2m}\sum_{k=1}^{n}{\left(\vec{\mathbf{f}}_k^2(s)+\vec{\mathbf{f}}_k^2(t)-2\vec{\mathbf{f}}_k(s)\cdot\vec{\mathbf{f}}_k(t) \right)} \le \frac{1}{{d(s)}}+\frac{1}{{d(t)}}\label{eq:fk}.
\end{align}
\end{small}
In addition, by Eq. \eqref{eq:fkpi} in Lemma \ref{lem:pist} and the fact $\vec{\mathbf{f}}_1=\vec{\mathbf{1}}$,
\begin{equation}\label{eq:pflambda}
\frac{p_i(s,s)}{d(s)}+\frac{p_i(t,t)}{d(t)}-\frac{2 p_i(s,t)}{d(t)}= \frac{1}{2m}\sum_{k=2}^{n}{(\vec{\mathbf{f}}_k(s)-\vec{\mathbf{f}}_k(t))^2\cdot\lambda^i_k}.
\end{equation}
With Eq. \eqref{eq:pflambda} and Eq. \eqref{eq:fk}, we obtain
\begin{align*}
\left|r(s,t)-{r}_{\ell}(s,t)\right|& = \left|\sum_{i=\ell+1}^{\infty}{\frac{p_i(s,s)}{d(s)}+\frac{p_i(t,t)}{d(t)}-\frac{2 p_i(s,t)}{d(t)}}\right| = \left| \frac{1}{2m}\sum_{k=2}^{n}{(\vec{\mathbf{f}}_k(s)-\vec{\mathbf{f}}_k(t))^2}\sum_{i=\ell+1}^{\infty}{\lambda^i_k} \right|\\
& \leq \frac{1}{2m}\sum_{k=2}^{n}{(\vec{\mathbf{f}}_k(s)-\vec{\mathbf{f}}_k(t))^2}\sum_{i=\ell+1}^{\infty}{\lambda^i}
 = \frac{\lambda^{\ell+1}}{1-\lambda}\cdot \frac{1}{2m}\sum_{k=2}^{n}{(\vec{\mathbf{f}}_k(s)-\vec{\mathbf{f}}_k(t))^2}\\
& \le \frac{\lambda^{\ell+1}}{1-\lambda}\cdot \left( \frac{1}{d(s)}+\frac{1}{d(t)} \right).
\end{align*}
Plugging Eq. \eqref{eq:ell} into the above inequality yields $|r(s,t)-r_{\ell}(s,t)|\le \tfrac{\epsilon}{2}$, which proves the theorem.
\end{proof}

\subsection{\bf Proof of Lemma \ref{lem:sum-bound}}
\begin{proof}
Consider a length-$\ell_f$ random walk $W$ from $u$ which contains $\ell_f$ visited nodes $\{w_1,w_2,\dotsc,w_{\ell_f}\}$. Note for any $1\le i\le \ell_f$, $w_i$ and $w_{i+1}$ are two adjacent nodes. This implies that for any node $v\in V$, it appears at most $\left\lceil \tfrac{\ell_f}{2}\right\rceil$ times. Let $W_{1}$ contain $\left\lceil \tfrac{\ell_f}{2}\right\rceil$ nodes of $W$ with the largest values with respect to $\vec{\mathbf{x}}(\cdot)$ and $W_{2}$ contain the remaining $\left\lfloor \tfrac{\ell_f}{2}\right\rfloor$ nodes. Hence,
\begin{align*}
 \sum_{w \in W}{\vec{\mathbf{x}}(w)}& =  \sum_{w \in W_{1}}{\vec{\mathbf{x}}(w)}  + \sum_{w \in W_{2}}{\vec{\mathbf{x}}(w)}\textstyle \le \left\lceil \tfrac{\ell_f}{2}\right\rceil\cdot\max_1(\vec{\mathbf{x}}) + \left\lfloor \tfrac{\ell_f}{2}\right\rfloor\cdot\max_2(\vec{\mathbf{x}}).
\end{align*}
Note that $\sum_{w \in W}{\vec{\mathbf{x}}(w)}\ge \ell_f\cdot \min{(\vec{\mathbf{x}})}$ and the lemma follows.
\end{proof}

\subsection{\bf Proof of Theorem \ref{lem:fwd}}
\begin{proof}
	Letting $Z_{k}^\prime := Z_{k}+\frac{\psi}{2}$, we immediately have $0\leq Z_{k}^\prime\leq \psi$. Then, the empirical mean $Z^\prime=Z+\tfrac{\psi}{2}$ and empirical variance remains the same $\hat{\sigma}^2$. Let $T$ be the stopping time. If Algorithm \ref{alg:fwd} stops at the $i$-th iteration such that $i<\tau$, we know $f(\eta,\hat{\sigma}^2,\psi,{\delta}/{\tau})\leq \tfrac{\epsilon}{2}$. Thus, given $i< \tau$, using the Bernstein inequality (Lemma \ref{lem:bernstein}), we obtain
	\begin{align*}
		\mathbb{P}\big[|Z-\mathbb{E}[Z]|> \tfrac{\epsilon}{2}\land T=i\big] & \leq \mathbb{P}\big[|Z-\mathbb{E}[Z]|\geq f(\eta,\hat{\sigma}^2,\psi,\tfrac{\delta}{\tau})\land T=i\big]\\
		&=\mathbb{P}\big[|Z'-\mathbb{E}[Z']|\geq f(\eta,\hat{\sigma}^2,\psi,\tfrac{\delta}{\tau})\land T=i\big]\leq \tfrac{\delta}{\tau}.
	\end{align*}
	In addition, consider that $T=\tau$. According to the Hoeffding's inequality (Lemma \ref{lem:hoeffding}), we have
	\begin{equation*}
		\mathbb{P}\big[|Z-\mathbb{E}[Z]|> \tfrac{\epsilon}{2}\land T=\tau\big]
		\leq 2\exp{\left(-\tfrac{\eta\epsilon^2}{2\psi^2}\right)}\leq 2\exp{\left(-\tfrac{\eta^\ast\epsilon^2}{2\psi^2}\right)}=\tfrac{\delta}{\tau}.
	\end{equation*}
	Putting it together yields
		$\mathbb{P}\big[|Z-\mathbb{E}[Z]|> \tfrac{\epsilon}{2}\big]\leq (\tau-1)\cdot \tfrac{\delta}{\tau} + \tfrac{\delta}{\tau}=\delta$,
	which immediately concludes that $\left|r_f(s,t)- q(s,t)\right| \le \frac{\epsilon}{2}$ holds with a probability of at least $1-\delta$, since $r_f(s,t)=Z$ and $q(s,t)=\mathbb{E}[Z]$.
	
	Moreover, when setting $\ell_f=\ell$ as in Eq.~\eqref{eq:ell}, $\vec{\mathbf{s}}=\vec{\mathbf{e}}_s$ and $\vec{\mathbf{t}}=\vec{\mathbf{e}}_t$, by Theorem~\ref{lem:ell}, we know that
		$|r(s,t)-r_{\ell}(s,t)|\le \frac{\epsilon}{2}$.
	Meanwhile, 
	\begin{align*}
		q(s,t)
		&=\sum_{i=1}^{\ell}\left({\frac{p_i(s,s)}{d(s)}+\frac{p_i(t,t)}{d(t)}-\frac{  p_i(s,t)}{d(t)}}-\frac{ p_i(t,s)}{d(s)}\right)=r_\ell(s,t)-\mathbbm{1}_{s\neq t}\cdot(\frac{1}{d(s)}+\frac{1}{d(t)}),
	\end{align*}
	Therefore, with a probability of at least $1-\delta$, we have
	\begin{small}
	\begin{align*}
		\textstyle \left|r_f(s,t)+\mathbbm{1}_{s\neq t}\cdot(\frac{1}{d(s)}+\frac{1}{d(t)})-r(s,t)\right| & \leq \left|r_f(s,t)+\mathbbm{1}_{s\neq t}\cdot(\frac{1}{d(s)}+\frac{1}{d(t)})-r_\ell(s,t)\right| + \left|r_\ell(s,t)-r(s,t)\right|\\
		&\leq \frac{\epsilon}{2}+\frac{\epsilon}{2}={\epsilon}.
	\end{align*}	
	\end{small}
	This completes the proof.
\end{proof}


\bibliographystyle{ACM-Reference-Format}
\bibliography{main}
\balance
\end{sloppy}

\received{April 2022}
\received[revised]{July 2022}
\received[accepted]{August 2022}

\end{document}